\documentclass[british,a4paper,11pt]{article}
\usepackage{babel,amsmath,amssymb,amsthm}
\usepackage{bm}
\usepackage[sans]{dsfont}
\usepackage[mathscr]{euscript}
\usepackage{times}
\usepackage{cite}
\usepackage{xcolor}
\usepackage{mathtools}
\usepackage[all]{xy}
\usepackage{float}
\usepackage{hyperref}
\usepackage{a4wide}
\theoremstyle{plain}
\newtheorem{theorem}{Theorem}
\newtheorem{proposition}[theorem]{Proposition}

\theoremstyle{remark}
\newtheorem{remark}{Remark}

\theoremstyle{definition}
\newtheorem{definition}{Definition}
\theoremstyle{definition}
\newtheorem{assumption}{Assumption}

\newcommand{\Var}{\operatorname{Var}}

\newcommand{\esssup}{\operatorname{ess\,\sup}}

\newcommand{\Tr}{\operatorname{Tr}}
\newcommand{\rmd}{\mathrm{d}}
\newcommand{\rme}{\mathrm{e}}
\newcommand{\rmi}{\mathrm{i}}
\newcommand{\RE}{\mathrm{\,Re\,}}
\newcommand{\IM}{\mathrm{\,Im\,}}

\newcommand{\Rbb}{\mathbb{R}}
\newcommand{\Cbb}{\mathbb{C}}

\newcommand{\Pbb}{\mathbb{P}}
\newcommand{\Nbb}{\mathbb{N}}

\newcommand{\Qbb}{\mathbb{Q}}
\newcommand{\Ebb}{\operatorname{\mathbb{E}}}

\newcommand{\openone}{\mathds{1}}

\newcommand{\ind}{\mathtt{1}}
\newcommand{\id}{\mathrm{Id}}

\newcommand{\norm}[1]{\left\Vert#1\right\Vert}
\newcommand{\abs}[1]{\left\vert#1\right\vert}

\newcommand{\Acal}{\mathcal{A}}

 \newcommand{\Dcal}{\mathcal{D}}
 \newcommand{\Ecal}{\mathcal{E}}

\newcommand{\Gcal}{\mathcal{G}}

\newcommand{\Ical}{\mathcal{I}}
\newcommand{\Jcal}{\mathcal{J}}

\newcommand{\Lcal}{\mathcal{L}}

 \newcommand{\Ocal}{\mathcal{O}}
  \newcommand{\Pcal}{\mathcal{P}}
 
\newcommand{\Scal}{\mathcal{S}}
\newcommand{\Tcal}{\mathcal{T}}

\newcommand{\Ycal}{\mathcal{Y}}

\newcommand{\Bscr}{\mathscr{B}}

 \newcommand{\Escr}{\mathscr{E}}
\newcommand{\Fscr}{\mathscr{F}}

\newcommand{\Hscr}{\mathscr{H}}

\newcommand{\Sscr}{\mathscr{S}}
\newcommand{\Tscr}{\mathscr{T}}
\newcommand{\Uscr}{\mathscr{U}}

 \newcommand{\Eo}{\mathsf{E}}
\newcommand{\Fo}{\mathsf{F}}

\begin{document}
\title{Quantum jump trajectories, hybrid systems, non-Hermitian evolutions, quantum/classical walks}

\author{Alberto Barchielli \footnote{also
Istituto Nazionale di Alta Matematica (INDAM-GNAMPA)} \\ Istituto Nazionale di Fisica Nucleare (INFN), Sezione di Milano, Italy }

\maketitle

\begin{abstract}
Stochastic differential equations have been used to give the evolution of quantum open system and to formulate a theory of quantum measurements in continuous time; often the name of quantum trajectory theory is used. Here we develop the general theory of stochastic master equations of jump type, when also memory is involved. The quantum system together to the counting processes involved in the stochastic master equation constitute an hybrid quantum/classical system. The construction of the physical probability law is given; it turns out to depend on the full dynamics. The role of conditional states and mean states is discussed.
By introducing the notion of ``typical trajectory", we show how to recursively construct the solution of the non-linear stochastic master equation (the conditional state). Moreover, by the notion of ``exclusive probability densities" we can describe all the probabilities related to the jumps, in particular, the waiting times of the jumps and their probability distributions. 

This general formulation and the idea of hybrid system allow to unify and generalize different fields: evolutions under non-Hermitian Hamiltonians, piecewise dynamics, various types of  quantum/classical random walks. Non-Hermitian effective Hamiltonians naturally appear in the time evolution in between two jumps; they give in a consistent way the quantum evolution and the probability distribution of the associated survival time. We show how we can have a large variety of probability densities for the survival time when we are at an ``exceptional point" or near to it. Models given by a unitary  dynamics interrupted at random times by a quantum channel have been introduced in solid state physics and statistical mechanics to describe dissipative evolutions; these models have been also used to give non-Markovian evolutions for quantum open systems. Here we show how all these models can have a dynamical formulation by means of suitable jump stochastic master equations and how to use the exclusive probability densities in this context. Finally, we restrict our approach to the case in which the dynamics of the full hybrid system is Markovian, but not the evolution of the quantum component alone. In this way we include other classes of models which were proposed as examples of evolutions with memory, such as the ``Lindblad rate equation". By restricting the possible values of the classical component to the vertices of a graph, we include also the so called ``continuous time open quantum walks". Again, in the formulation by means of stochastic master equations, also the probability distributions of the waiting times of the walk are included. A two-dimensional example is discussed to show the variety of possible behaviours.

\end{abstract}

\vspace{2pc}
\noindent{\it Keywords}: stochastic master equations,  hybrid dynamics, quantum counting processes, non-Hermitian Ha\-miltonian, 
continuous time open quantum walks, non-Markovian quantum open systems, quantum renewal processes, Lidblad rate equation, exclusive probability densities

\tableofcontents

\section{Introduction} 
Quantum trajectory formalism was developed originally to give a formulation to the theory of measurements in continuous time and to quantum filtering theory \cite{Bar86,Bel88,Bel89,BarB91,Car93,Hol01,BarP96,Bar06,BarG09,WisM10,Jac14,ZolG97,Caves+18,KumM03,Bar93,Car08,Maa24,CavJ23,Potts+24,Jak25}. As the observed output can be seen as a classical signal, quantum trajectories became also part of the theory of hybrid quantum/classical systems \cite{DGS00,Dio14,Dio23,ManRT23,Pomar+23,Opp+23,DamWer23,Bar23,LOW24,BarW24,Bar24,Sergi+23},  and of quantum feedback and control \cite{Potts+22,Till24,AlbGen23,WisM10,BarG12,Gough20,Potts+25,TicNA13,TicAlt12}. Quantum jump trajectories and the associated \emph{stochastic Schr\"odinger equations} and \emph{stochastic master equations} (SME) had applications in quantum optics (typically counting of emitted photons) \cite{Dav69,Dav76,SriD81,SriD82,Bar90,BarB91,Car93,BarP96,Car08,ZolG97,WisM10,Bar97,Potts+24,Potts+25,Bar06} and in the general theory of quantum open systems \cite{KW20,DM-G22,CLPSm25,SLCSPVac24}; these equations have been used also to ``unravel" quantum master equations \cite{WisM10,Car08} and they were  connected to the ``quantum Monte Carlo wave function method" \cite{BarP96,MCD93,RadLB24}.
Even if restricted to the jump case, quantum trajectories can be used to unify various subjects in quantum open system theory, quantum information, quantum statistical mechanics\ldots The aim of this article is to give a general formulation of jump type SMEs, suitable for applications in different subjects, and to show how to unify some research fields.

In Sec.\ \ref{sec:sigmat}, after some mathematical preliminaries,  we introduce the jump type SMEs, based on general point processes \cite{DaVJ08} (the types of jumps are not only discrete, but they can have even continuous labels); also feedback and external noise are included. An important point is the construction of the physical probability, which depends also on the dynamics of the quantum system; all the probabilities are connected to ``positive operator valued measures" and the standard structure of a quantum theory is respected. The notions of \emph{conditional state} (the solution of the SME) and of \emph{mean state} (the mean on the random jumps and any past noise) are discussed. Connections with continuous monitoring of optical systems and with quantum/classical hybrid systems are given. 

In  Sec.\ \ref{sec:ttrajs} the notion of \emph{typical trajectory} is introduced; by exploiting this notion it is possible to construct in a recursive way the solutions of the SMEs. Then, it is shown how the quantum dynamics influences the probabilities of the jumps. Moreover, the theory naturally introduces the probability law of the ``waiting time of the next jump" 
and this allows for connections with the subject of ``time measurements" in a quantum theory; we have a natural framework for treating waiting times, hitting times, survival times\ldots \cite{HBZT25,RGMor25,LMMR25}. Finally, we introduce the \emph{exclusive probability densities} \cite{SriD81}, which play a key role in the construction of all the probabilities of times of jumps, type of jumps,  \ldots; they give the ``full counting statistics" of the system.

In the last years there was an intense activity in the field of evolutions generated by non-Hermitian Hamiltonians \cite{SerZ13,AGU20,Gao+15,SCr15,MMCN19,NAJM19,NonS25,NonVs25,Ho+26,DMD26,Longhi25,DBernardinD21}. In the theory of jump trajectories, the non-Hermitian Hamiltonians appear naturally in between two jumps. Section \ref{sec:nonH}  is devoted to show how this field can be included in the trajectory approach. The main point is that now also the problem of the survival of the non-Hermitian system can be attached in a consistent way. In the literature it was underlined the relevance of the ``exceptional points" in the spectrum of a non-Hermitian Hamiltonian; here we compute the distribution of the survival times in these exceptional points and near them. These probability distributions are determined by the non-Hermitian Hamiltonian and show a variety of structures.

Unitary (or dissipative) dynamics interrupted at random times by non-unitary transformations is a technique developed mainly in solid state physics to get very general dissipative evolutions \cite{GoHa06,ScLu07,DDG22,NaGu23}. It has been also a recipe to obtain classes of non-Markovian quantum dynamics (quantum semi-Markov processes, quantum renewal processes) \cite{Bud04,Vacc+11,Vacc13,Vacc20,Vacc+21}. In Section \ref{sec:inters}, we show how this subject can be included in the theory of jump trajectories and SMEs. A big difference with respect to the models of the previous section is that the probabilities of the times of the jumps are predetermined and are not modified by the quantum dynamics. In our general approach we can generalize the cases treated in the literature by introducing more types of jumps, even labeled by a continuous index. This allows to describe random interactions with different types of reservoirs and even with measuring apparatuses. While the distribution of the jump times is predetermined, the probabilities of the result of the measurement or of the type of interaction can depend on the quantum dynamics. We also show how to use typical trajectories and exclusive probability densities in this context. Starting from a model with infinitely many memory revivals in the mean state \cite{Vacc+11}, we show how these revivals manifest themselves in the classical probabilities introduced in our hybrid approach.

In Sec.\ \ref{sec:walks} we present a class of Markovian hybrid systems related to some classes of quantum walks; let us stress that it is the full hybrid system to be Markovian, not the quantum component taken alone. This class of models includes some subjects already developed in the literature, such as the ``Lindblad rate equation" (or ``non-Markovian generalized Lindblad-type master equation") \cite{Bud06,BrGM06,Breuer07} and the ``continuous time open quantum walks" \cite{Pell14,CLRB17,Bri18,BBPP19,Kang19,Loeb23,Loe24a,Loe24b} . Also in this context we discuss the role of the notions of typical trajectories and of waiting time of the ``next jump". Starting from examples of non-Markovian evolutions for qbits \cite{Breuer07,BrGM06}, we develop models with a large variety of behaviours of the waiting times.

A few final comments  are given in Sec.\ \ref{sec:end}.

\section{Stochastic master equations}\label{sec:sigmat}

The dynamical stochastic differential equations (SDEs) involved in the quantum trajectory approach are based on two probability measures \cite{BarH95,BarPZ98}: a \emph{reference probability} $\Qbb$ and a \emph{physical probability} $\Pbb$. The physical probability will be constructed in Sec.\ \ref{sec:nprob}, in terms of $\Qbb$ and of the quantum/classical dynamics.
We start by introducing the reference probability and the counting processes needed in the SDEs giving the quantum dynamics. The key evolution equations (the linear stochastic master equation and the non-linear one) are given in Secs.\ \ref{sec:linSME} and \ref{sec:nlinSME}.

\subsection{Probability structures, Hilbert space, and relevant operators}

We have a filtered probability space satisfying the \emph{usual hypothesis}, typically assumed in books on stochastic calculus  \cite[Def.\ 1.1, pp.\ 3--4]{Met82}, \cite[p.\ 3, Sec.\ I.5]{Prott04}, \cite[p.\ 45]{IWat89}.

\begin{assumption}\label{Ass:processes}
Let $\big(\Omega, \Fscr, \Qbb\big)$ 
be a \emph{complete probability space} with a \emph{filtration} of $\sigma$-algebras $\left\{\Fscr_t, t\geq 0\right\}$, 
such that \ (a) \ $\Fscr= \bigvee_{t\geq 0} \Fscr_t$, \ (b) \ $\Fscr_t= \bigcap_{r> t}\Fscr_r$, \ (c) \ $\Fo\in\Fscr, \ \Qbb(\Fo)=0 \ \Rightarrow \ \Fo\in \Fscr_0$; we set also $\Fscr_{t_-}=  \bigvee_{0\leq r<t} \Fscr_r$. 
\end{assumption}

In a probability space, the $\sigma$-algebra $\Fscr$ denotes the set of all possible events, while the algebra $\Fscr_t$ contains all the events up to time $t$, and $\Fscr_{t_-}$ all the events before time $t$. We shall denote by $\Ebb_\Qbb[X]$ 
the mean value of the random variable $X$ with respect to the probability measure $\Qbb$. As a real random variable is a measurable function from $\Omega$ to $\Rbb$, we can write $X: \omega \in \Omega \to X(\omega)\in \Rbb$; then, the mean value can be written as $\Ebb_\Qbb[X]=\int_\Omega X(\omega) \Qbb(\rmd \omega)$. In this filtered probability space we shall introduce several stochastic processes, in particular \emph{regular right continuous} (RRC) \emph{processes} \cite[Def.\ 1.5]{Met82}, i.e. adapted and with right continuous paths with left limits; if $X(t)$ is RRC, $X(t_-)$ denotes the left limit at $t$. We shall need also \emph{regular left continuous} (RLC) \emph{processes} (adapted, with left continuous paths with right limits); an RLC process $X(\cdot)$ is predictable, i.e. $X(t)$ is $\Fscr_{t_-}$-measurable \cite[Sec.\ 3]{Met82}.

Then, we introduce the counting measure and the compensated counting measure \cite{DaVJ08}.
\begin{assumption}\label{ass:U} $\Uscr$ is a
locally compact Hausdorff space with a topology with a countable basis; $\Bscr(\Uscr)$ is the Borel $\sigma$-algebra of $\Uscr$. Then, in $\big(\Omega, \Fscr, \left\{\Fscr_t \right\}_{t\geq 0}, \Qbb\big)$   we introduce an adapted random point measure $\Pi(\rmd u, \rmd t)$ on $ \Uscr \times \Rbb_+$.
We ask the compensator of $\Pi$ (or stochastic intensity, or dual predictable projection) to be of the form $\nu_t(\rmd u) \rmd t$ where $\nu_t$ is a random, \emph{finite} Radon measure
\footnote{A Radon measure is finite on all compact sets; it is outer regular on all Borel sets and inner regular on open sets.}
on $\big(\Uscr, \Bscr(\Uscr) \big)$:
\begin{equation}\label{nufin}
\nu_t(\Uscr)<+\infty, \qquad \forall t\geq 0.
\end{equation}
\end{assumption}

By
\[\widetilde \Pi(\rmd u,\rmd t):= \Pi(\rmd u,\rmd t)- \nu_t(\rmd u)\rmd t
\]
we denote the \emph{$\Qbb$-compensated point measure}. Also other expressions appear in the literature; for instance, \emph{white random measure} is used in \cite[p.\ 219]{Met82}, and \emph{martingale measure} in \cite[Sec.\ 3.5]{LipS86}. 

\begin{remark} Apart from the regularity assumptions, the key restriction is Eq.\ \eqref{nufin}. By this, we have the existence of the counting processes   $\Pi(A, [0,t])$, with $A$ Borel set in $\Uscr$. In particular, it exists the process which gives the sum of all types of counts described by $\Pi(\rmd u, \rmd t)$:
\begin{equation}\label{def:N}
N(t):= \Pi(\Uscr, [0,t]).
\end{equation}
Under $\Qbb$, $N(t)$ is a counting  process of stochastic intensity $\nu_t(\Uscr)$; when $\nu_t(\Uscr)$ is not random $N(t)$ is a Poisson process.
\end{remark}

About the quantum component, we assume that it is described in a separable Hilbert space $\Hscr$; we denote by $\Bscr(\Hscr)$ the space of the bounded operators on $\Hscr$, by $\Tscr(\Hscr)$ the space of trace class operators and by $\Sscr(\Hscr)\subset \Tscr(\Hscr)$ the subset of statistical operators. Then, we introduce the (random) operators involved in the quantum dynamics.

\begin{assumption}[{\cite[Assumptions 2.0.A, 2.0.B, 2.3.A, 2.3.B, 3.1, 3.3]{BarH95}}]\label{ass:HJL} Let 
$ H(t),  \; J_j(u,t), \; L_{k}(t)$, $j=1,\ldots,d_1$, $k=1,\ldots,d_2$, be $\Bscr(\Hscr)$-valued RLC processes (continuity in the strong operator topology; eventually, $d_1=+\infty$ and/or $d_2=+\infty$). We also assume the functions $(u,t,\omega)\to J_j(u,t;\omega)$ to be strongly measurable. Then, we define the operators
\begin{equation}\label{def:R(t)}
R(t):=\sum_{j=1}^{d_1}\int_\Uscr \nu_t(\rmd u)  J_j(u,t)^\dagger J_j(u,t) \in \Bscr(\Hscr), \qquad R_0(t):= \sum_{k=1}^{d_2}L_k(t)^\dagger L_k(t)\in \Bscr(\Hscr);
\end{equation}
we ask the integral and the sums to be strongly convergent ($\forall t\in \Rbb^+$, $\forall \omega \in \Omega$). We also require, $\forall t\in \Rbb^+$,
\[
\esssup_{\omega\in\Omega}\int_0^t \norm{R(s;\omega)} \rmd s <+\infty, \qquad \esssup_{\omega\in\Omega}\int_0^t \norm{R_0(s;\omega)}\rmd s<+\infty,
\]
\[
\esssup_{\omega\in\Omega}\int_0^t \norm{H(s;\omega)}\rmd s<+\infty, \qquad \sup_{j,\; 0\leq s\leq t} \sup_{u\in\Uscr, \, \omega\in\Omega}\norm{J_j(u,t;\omega} < +\infty.
\]
\end{assumption}
Then, we define the following operators on $\Tscr(\Hscr)$:
\begin{subequations}\label{linearSME+}
\begin{equation}\label{Jcalu}
\Jcal(u,t)[\rho]:=\sum_{j=1}^{d_1}J_j(u,t) \rho J_j(u,t)^\dagger,\qquad \Gcal(t)[\rho]:= \int_\Uscr \nu_t(\rmd u)\Jcal(u,t)[\rho]
\end{equation}
\begin{equation}\label{Lcalt}
\Lcal_0(t)[\rho]:=
-\rmi[H(t),\rho]+\sum_{k=1}^{d_2} L_{k}(t) \rho L_{k}(t)^\dagger -\frac 12 \left\{R_0(t), \rho\right\},
\end{equation}
\begin{equation}\label{Lcalx}
\Lcal(t)[\rho]:=\Lcal_0(t)[\rho] +\Gcal(t)[ \rho ]-\frac 12 \left\{R(t),\,\rho\right\}.
\end{equation}
\end{subequations}
The curly brackets denote the anti-commutator: $\{A,B\}=AB+BA$; $\Lcal_0(t)$ and $\Lcal(t)$ are  generators in GKSL form \cite{GKS76,L76}, but in general they can be time dependent and random. Given an operator $\Acal$ acting on $\Tscr(\Hscr)$, its adjoint $\Acal^*$ acts on $\Bscr(\Hscr)$; then, we have
\begin{equation}\label{GtoR}
R(t)=\Gcal(t)^*[\openone];
\end{equation}
$\openone\in \Bscr(\Hscr)$ is the identity operator.

Now, it is possible to introduce the stochastic equations which determine the dynamics of the quantum component.

\subsection{The linear SME}\label{sec:linSME}
The first SDE in $\Tscr(\Hscr)$ is
the \emph{linear stochastic master equation} (SME) \cite[(10)--(12)]{Bar24}, \cite[Prop.\ 3.4]{BarH95}: 
\begin{equation}\label{linearSME}
\rmd  \sigma(t)= \Lcal(t)[\sigma(t_-)]  \rmd t
+ \int_{\Uscr} \Bigl( \Jcal(u,t)[ \sigma(t_-) ]-\sigma(t_-) \Bigr)\widetilde \Pi(\rmd u,\rmd t), \qquad \sigma(0)=\rho_0 \in \Sscr(\Hscr);
\end{equation}
eventually this equation is understood ``in weak sense", as in \cite[(3.13)]{BarH95}.
By the assumption \eqref{nufin} and the definitions \eqref{linearSME+}, 
the equation above can be written in the equivalent form
\begin{equation}\label{linearSME++}
\rmd  \sigma(t)= \Lcal_0(t)[\sigma(t_-)]  \rmd t-\frac 12 \left\{R(t)-\nu_t(\Uscr)\openone,\sigma(t_-)\right\}\rmd t
+ \int_{\Uscr} \Bigl( \Jcal(u,t)[ \sigma(t_-) ]-\sigma(t_-) \Bigr) \Pi(\rmd u,\rmd t).
\end{equation}
Here and in the following we take  $\hbar=1$.

\begin{proposition}[{\cite[Proposition 3.4]{BarH95}}] \label{theor:sigmaprop}
The SDE \eqref{linearSME}, with the initial condition $\rho_0$, has a unique solution $\sigma(t)\in \Tscr(\Hscr)$, \ \ $\sigma(t)\geq 0$.
By defining
\begin{eqnarray}\label{def:pt}
p(t):= \Tr \{\sigma(t)\}, \qquad
I(u,t):=\Tr\left\{\Jcal(u,t)\big[\rho(t_-)\big]\right\},
\\ \label{defhatsigma}
\rho(t;\omega):=\begin{cases}\sigma(t;\omega)/p(t,\omega) & \text{if } \  p(t,\omega)\neq 0,
\\ \sigma_v \ {\rm (fixed\ statistical\ operator)} & \text{if } \ p(t,\omega)= 0,\end{cases}
\end{eqnarray}
we have also 
\begin{equation}\label{eq:pt}
p(t)\geq 0, \qquad \Ebb_\Qbb[p(t)]=1,  \qquad
\rmd p(t)=p(t_-)\int_\Uscr \left(I(u,t) -1\right) \widetilde\Pi(\rmd u, \rmd t).
\end{equation}
\end{proposition}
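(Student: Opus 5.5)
The plan is to use the form \eqref{linearSME++} and build the solution pathwise, exploiting the finiteness \eqref{nufin}. Under $\Qbb$ the process $N(t)$ has finite intensity, so on every bounded interval there are $\Qbb$-a.s.\ finitely many jump times $0<T_1<T_2<\cdots$, with marks $U_1,U_2,\ldots$.

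Between jumps, \eqref{linearSME++} reduces to the linear ODE
\[
\dot\sigma(t)=\Lcal_0(t)[\sigma(t)]-\tfrac12\{R(t)-\nu_t(\Uscr)\openone,\sigma(t)\}.
\]
Its coefficients are RLC and Bochner integrable in norm on compacts, by Assumption \ref{ass:HJL}. A Dyson series (Picard iteration in $\Tscr(\Hscr)$) therefore gives a unique propagator $\Tcal(t,s)$. This propagator is completely positive. To see this, write $\sigma\mapsto K\sigma K^\dagger$ with $K$ the solution of $\dot K=(-\rmi H-\frac12 R_0-\frac12R+\frac12\nu_t(\Uscr))K$, and treat the sandwich terms $\sum_k L_k\sigma L_k^\dagger$ as a positive perturbation. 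The Dyson expansion in these terms has only positive summands.

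At each jump time we set $\sigma(T_n)=\Jcal(U_n,T_n)[\sigma(T_n{}_-)]$, which is positive. Concatenating the pieces gives an adapted RRC solution with $\sigma(t)\geq0$. Uniqueness follows because any solution must agree with the ODE flow on each inter-jump interval and must obey the jump rule at the $T_n$. A Gronwall bound on each interval, together with the finiteness of the number of jumps, gives $\sigma(t)\in\Tscr(\Hscr)$.

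For $p(t)$ we take the trace of \eqref{linearSME}. The Hamiltonian and Lindblad part $\Lcal_0$ is trace preserving. By \eqref{GtoR}, $\Tr\{\Gcal(t)[\sigma]\}=\Tr\{R(t)\sigma\}$, so $\Tr\Lcal(t)[\sigma]=0$. This yields
\[
\rmd p(t)=\int_\Uscr\big(\Tr\{\Jcal(u,t)[\sigma(t_-)]\}-p(t_-)\big)\widetilde\Pi(\rmd u,\rmd t),
\]
and writing $\sigma(t_-)=p(t_-)\rho(t_-)$ gives the stated formula. On the set $p(t_-)=0$ the integrand vanishes by positivity, so the arbitrary choice of $\sigma_v$ in \eqref{defhatsigma} is harmless. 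Positivity $p(t)\geq0$ is inherited from $\sigma(t)\geq0$.

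The main obstacle is $\Ebb_\Qbb[p(t)]=1$. The stochastic integral is only a local martingale, and a nonnegative local martingale is merely a supermartingale. To close the gap I would localize with stopping times $\tau_n$ at which $N$ reaches $n$ and obtain $\Ebb_\Qbb[p(t\wedge\tau_n)]=1$. I would then prove uniform integrability through an $L^2$ bound. By the Itô isometry,
\[
\Ebb_\Qbb\big[p(t\wedge\tau_n)^2\big]\le 1+\Ebb_\Qbb\int_0^{t\wedge\tau_n}\!\!\int_\Uscr \big(\Tr\{\Jcal(u,s)[\sigma(s_-)]\}-p(s_-)\big)^2\nu_s(\rmd u)\,\rmd s .
\]
Using $\Tr\{\Jcal[\sigma]\}\le \sup\norm{J_j}^2 p$ and the esssup bounds on $\int\norm{R}$, Gronwall gives a bound uniform in $n$. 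If $\nu_s(\Uscr)$ is not bounded in esssup, I would instead use the positive bound $\Tr\{\Jcal(u,s)[\sigma]\}$ integrated against $\nu_s$, which is $\le\norm{R(s)}\,p$, and compare with $\exp\int_0^t\norm{R}$. Once uniform integrability holds, letting $n\to\infty$ yields $\Ebb_\Qbb[p(t)]=1$ and shows that $p$ is a true martingale.
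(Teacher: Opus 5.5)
Your pathwise construction of $\sigma(t)$ (finitely many jumps, a completely positive inter-jump propagator obtained from the Dyson expansion in the sandwich terms, completely positive jump maps) is sound, and so is your derivation of the equation for $p(t)$. This is a more elementary route than importing the result from \cite{BarH95}, as the paper does, and it is the kind of simplification the paper hints at right after the proposition.

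The gap is in the step you yourself flag, $\Ebb_\Qbb[p(t)]=1$. The bound $\Tr\{\Jcal(u,s)[\sigma]\}\le\sup_j\norm{J_j}^2\,p$ is false as soon as $d_1>1$. The correct constant is $\norm{\sum_j J_j(u,s)^\dagger J_j(u,s)}$. For finite $d_1$ this only costs a factor $d_1$, but for $d_1=+\infty$ Assumption \ref{ass:HJL} controls only its $\nu_s$-integral $R(s)$, not its supremum over $u$. There are two further problems:
\begin{enumerate}
\item Your Gronwall step needs $\nu_s(\Uscr)$ essentially bounded, and Assumption \ref{ass:U} does not give this.
\item Your fallback controls only the first-order quantity $\int_\Uscr\Tr\{\Jcal(u,s)[\sigma]\}\,\nu_s(\rmd u)\le\norm{R(s)}\,p$. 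It does not control the terms $\nu_s(\Uscr)p^2$ and $\int_\Uscr\bigl(\Tr\{\Jcal(u,s)[\sigma]\}\bigr)^2\nu_s(\rmd u)$ that appear in the second moment.
\end{enumerate}
In fact no $L^2$ argument can work under the stated hypotheses, even with Assumption \ref{ass:Poiss}. Take $\Hscr=\Cbb$, $H=0$, no $L_k$, $\Uscr=(0,1]$, $\nu(\rmd u)=\rmd u$, and $J_j(u,t)=\ind_{(0,j^{-2}]}(u)$ for $j\ge1$. Then $\norm{J_j}\le1$ and $R=\pi^2/6$, but $\Tr\{\Jcal(u,t)[\sigma]\}=\lfloor u^{-1/2}\rfloor\,p$. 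Hence $p(t)=\rme^{(1-\pi^2/6)t}\prod_{i\le N(t)}\lfloor U_i^{-1/2}\rfloor$ with i.i.d.\ uniform marks $U_i$. So $\Ebb_\Qbb[p(t)]=1$ while $\Ebb_\Qbb[p(t)^2]=+\infty$. The same example shows that stopping at $\tau_n$ does not make $p$ bounded, so even $\Ebb_\Qbb[p(t\wedge\tau_n)]=1$ needs an argument.

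The repair is to stay at the level of first moments. This is where $C_t:=\esssup_{\omega\in\Omega}\int_0^t\norm{R(s;\omega)}\rmd s<\infty$ enters: the $\Pbb$-intensity obeys $\lambda(s)\le\norm{R(s)}$, so the physical counting process cannot explode. Dropping the negative terms in the trace of \eqref{linearSME++} gives
\[
\sup_{s\le t\wedge\tau_n}p(s)\le1+\int_0^{t\wedge\tau_n}\nu_s(\Uscr)\,p(s_-)\,\rmd s+\int_0^{t\wedge\tau_n}\!\!\int_\Uscr\Tr\{\Jcal(u,s)[\sigma(s_-)]\}\,\Pi(\rmd u,\rmd s).
\]
The last integral has $\Qbb$-compensator $\Tr\{R(s)\sigma(s_-)\}\rmd s\le\norm{R(s)}p(s_-)\rmd s$. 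Moreover, for $s\le t\wedge\tau_n$ one has $p(s_-)\le\rme^{\int_0^t\nu_r(\Uscr)\rmd r}\sup_{r\le t\wedge\tau_{n-1}}p(r)$. By induction on $n$, $\Ebb_\Qbb[\sup_{s\le t\wedge\tau_n}p(s)]<\infty$ whenever $\int_0^t\nu_s(\Uscr)\rmd s$ is bounded (e.g.\ under Assumption \ref{ass:Poiss}), and hence $\Ebb_\Qbb[p(t\wedge\tau_n)]=1$.

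Now use three facts: $\ind_{\{\tau_n\le t\}}\le N(t\wedge\tau_n)/n$; the product rule for $pN$; and the martingale property of $p(\cdot\wedge\tau_n)$ against the bounded increasing process $\int_0^{\cdot\wedge\tau_n}\norm{R(s)}\rmd s$. They give
\[
1-\Ebb_\Qbb\bigl[p(t)\ind_{\{\tau_n>t\}}\bigr]=\Ebb_\Qbb\bigl[p(t\wedge\tau_n)\ind_{\{\tau_n\le t\}}\bigr]\le\frac1n\,\Ebb_\Qbb\Bigl[\int_0^{t\wedge\tau_n}\Tr\{R(s)\sigma(s_-)\}\rmd s\Bigr]\le\frac{C_t}{n}.
\]
Combined with the supermartingale bound $\Ebb_\Qbb[p(t)]\le1$, this yields $\Ebb_\Qbb[p(t)]=1$. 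Your $L^2$ route remains valid only under extra hypotheses, such as $\sup_{u,s,\omega}\norm{\sum_jJ_j(u,s)^\dagger J_j(u,s)}<\infty$ (e.g.\ $d_1<\infty$) together with $\nu_s(\Uscr)$ essentially bounded.
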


The hypotheses given in Assumption \ref{ass:HJL} are taken as formulated in \cite{BarH95}, so that the proposition above comes directly from the results of that work. As now the diffusive component is not present in the SME and the restriction \eqref{nufin} of a finite measure holds, it is plausible that Assumption \ref{ass:HJL} can be relaxed, but we do not develop this point.

By definition $\rho(t;\omega)\in\Sscr(\Hscr)$, i.e.\ $\rho(t)$ is a random quantum state. The random quantity $I(u,t)$ is defined   so as to be $\Fscr_{t_-}$-measurable (\emph{predictable processes} \cite[Sec.\ 3]{Met82}).

\subsection{The physical probability}\label{sec:nprob}

We use $p(t)$ \eqref{def:pt} as a probability density and we define the new probability $\Pbb$ on the $\sigma$-algebra $\Fscr=\bigvee_{t\geq 0} \Fscr_t$ by
\begin{equation}\label{physprob}
\forall t\geq 0, \quad \forall \Fo\in \Fscr_t, \qquad \Pbb (\Fo) := \Ebb_\Qbb\big[p(t)\ind_\Fo\big];
\end{equation}
for a similar construction of a probability measure see \cite[Sec.\ 30.2]{Met82}. The probability $\Pbb$ represents the physical probability. For the construction of the physical probability the key point is Eq.\ \eqref{eq:pt}.

\begin{remark}
By \eqref{eq:pt}, $p(t)$ is a \emph{martingale} and this implies the consistency of the probabilities constructed in different time intervals: for $0<s<t$, $\Eo\in \Fscr_s$, one has $\int_\Eo p(t;\omega) \Qbb(\rmd\omega)=\int_\Eo p(s;\omega) \Qbb(\rmd\omega)$. This allows to extend $\Pbb$ to the whole $\Fscr$.
\end{remark}

\begin{remark}\label{rem:Gir} By the so called Girsanov transformation and generalizations \cite[Sec.\ III.8]{Prott04}, we have that, under the probability $\Pbb$, the random measure
$\Pi(\rmd u,\rmd t)$  becomes a point process with compensator $I(u,t)\nu_t(\rmd u)\rmd t$  \cite[Prop.\ 2.5, Remarks 2.6, 3.5]{BarH95}. We also introduce the notation
\begin{equation}\label{hatPi}
\widehat\Pi(\rmd u,\rmd t)=\Pi(\rmd u,\rmd t)- I(u,t)\nu_t(\rmd u)\rmd t.
\end{equation}
According to \cite[Sec.\ 31]{Met82}, $\widehat\Pi(\rmd u,\rmd t)$ is a \emph{white random measure}, under $\Pbb$.
\end{remark}

By defining
\begin{equation}\label{lambda}
\lambda(t):=\int_\Uscr I(u,t)\nu_t(\rmd u) \equiv \Tr\left\{R(t) \rho(t_-)\right\}<+\infty,
\end{equation}
we have that, under $\Pbb$, $N(t)$ \eqref{def:N} is a counting process of stochastic intensity $\lambda(t)$. 

\begin{remark}\label{rem:pastL+I}
The stochastic intensities $I(u,t)$ \eqref{def:pt} and $\lambda(t)$ \eqref{lambda} depend on the past through $\rho(t_-)$ \eqref{defhatsigma}. More\-over, they could depend on the past also through the operators $J_j(u,t)$, $R(t)$, which could be random and could depend on the past in case of feedback (see Assumption \ref{ass:HJL}). 
\end{remark}

\subsubsection{The mean state and the conditional state}\label{sec:mean+cond}

The random statistical operator $\rho (t)$ \eqref{defhatsigma} is the state of the quantum system when the whole past, represented by $\Fscr_t$, is known; the terms \emph{conditional state} or \emph{a posteriori state} are in use \cite{Bel88,Bel89,BarB91,WisM10,Jac14}. 

Instead, when the past is not known or not taken into account, the state to be attributed to the quantum state is the \emph{mean state} $\eta(t)$, or \emph{a priori state}, defined by
\begin{equation}\label{meanst}
\eta(t):=\Ebb_\Pbb\big[\rho (t)\big]=\Ebb_\Qbb[\sigma(t)];
\end{equation}
the last equality follows from the fact that $\rho (t)$ \eqref{defhatsigma} and $\sigma(t)$ differ only by the normalization factor $p(t)$ \eqref{def:pt}, which represents also the probability density in \eqref{physprob}. We can say that $\{\Pbb(\rmd \omega),\; \rho(t,\omega)\}$ is an \emph{ensemble} of quantum states; then, $\eta(t)$ is the mean state of this ensemble. 

By taking the $\Qbb$-mean of the linear SME \eqref{linearSME} we get 
\begin{equation}\label{masteq}
\frac{\rmd \eta(t)}{\rmd t}=\Ebb_\Qbb\big[\Lcal(t)[\sigma(t)]\big];
\end{equation}
this result follows from the fact that the last term in \eqref{linearSME} is a $\Qbb$-martingale (and it has zero mean) \cite[Proposition 3.2]{BarH95}. 

When the operators appearing in $\Lcal(t)$ depend on some extra source of randomness, the mean in the right hand side of \eqref{masteq} cannot be explicitly taken and, in general, there is not a closed equation for the mean state $\eta(t)$. When $\Lcal(t)$ is not random, Eq.\ \eqref{masteq} reduces to the time dependent quantum master equation $\dot \eta(t)=\Lcal(t)[\eta(t)]$, with the usual ``Lindblad structure". The role of randomness in $\Lcal(t)$ is indeed to introduce  memory effects into the dynamics \cite{BarG12,BarPP12}; in some non-Markovian spacial cases it is possible to find closed equations for $\eta(t)$ by using master equations with memory kernels \cite{Vacc20}.

\subsection{The non-linear SME}\label{sec:nlinSME}
Starting from the SDEs \eqref{linearSME++} and \eqref{eq:pt} and by applying the rules of stochastic calculus for jump processes, we obtain the \emph{non-linear SME} satisfied by the normalized random states $\rho (t)$ defined in \eqref{defhatsigma} \cite[(22)]{Bar24}, \cite[Remark 3.6]{BarH95}:
\begin{multline}\label{nonlinearSME}
\rmd \rho (t)= \Lcal_0(t)\big[\rho(t_-)\big]  \rmd t-\frac 12 \left\{R(t)-\lambda(t)\openone,\,\rho (t_-)\right\} \rmd t
\\ {}+ \int_{\Eo_t} \Bigl( I(u,t)^{-1}\Jcal(u,t)[ \rho(t_-) ]-\rho(t_-) \Bigr) \Pi(\rmd u,\rmd t),
\end{multline}
$\Eo_t$ is the random subset of $\Uscr$, defined by $\Eo_t(\omega)=\{u\in \Uscr: I(u,t;\omega)\neq 0\}$.

Equation \eqref{nonlinearSME} is known as ``quantum filtering equation". Indeed, the notions of conditional and mean states are typical of the approach to hybrid systems based on measurement and filtering \cite{Bel88,Hol01,ZolG97,BarG09,WisM10}. However, the idea of considering the full dynamics and its reduced (or marginal) version appears also in other approaches \cite{Pomar+23}.

By adding some other point processes, if needed, the linear and non-linear SMEs can be ``unravelled" in a purity preserving form, and the linear and non-linear stochastic Schr\"odinger equations are involved \cite[Sec. 2]{BarH95} \cite{WisM10}.

\subsection{The classical component}\label{sec:cl+meas}
The dynamical system described by the SMEs \eqref{linearSME++}, \eqref{nonlinearSME} can be seen as an hybrid quantum/classical system because classical stochastic processes are involved, whose probability law depends on the quantum dynamics itself. These processes are classical, which implies that they can be observed without perturbing their dynamics. However, some of these stochastic processes could represent pure noise, not observed nor controlled by the experimenter.   Various choices of the \emph{classical observables} are possible, depending on the physical situation.

A first choice of observables is to take discrete counting processes. We could take for $\Uscr$ a discrete space or to consider as monitored output the processes
\[
N_k(t)= \Pi(A_k; (0,t]), \qquad A_k\in \Bscr(\Uscr), \qquad A_k\cap A_l=\emptyset \ \ \text{for} \ \ k\neq l.
\]
The typical application in quantum optics is direct detection; $N_k(t)$ can represent the counts at a photon detector and the associated jump operator describes the emission process due to the jump between two discrete levels \cite{Bar97}. The observed output could be also a smoothed version of these counting processes.  For instance, we could  have detection with time delay and smoothing; in this case, the output $X(t)$ is 
\begin{equation}\label{outjump}
X_j(t)=  \sum_{k}\int_{(0,t]} a_{jk}(t-s)\,\rmd  N_k(s) + e_j(t).
\end{equation}
The terms $e_j(t)$ represent additive extra-noise due to the detection apparatus; the quantities $a_{jk}(t-s)$ are non-random and represent the detector response functions. A dependence on $X(s)$, $s<t$, can be introduced in the Hamiltonian $H(t)$ or in other operators appearing in $\Lcal(t)$. In this way we can realize feedback and closed loop control \cite{BarG12,WisM10,Potts+24,Potts+25}. By controlling the quantum system we have also a (partial) control of the properties of the output $X(t)$, for instance the squeezing in the output spectrum. Also the statistical properties of the counting processes $N_k(t)$ can be influenced, for instance by going from super-Poissonian to sub-Poissonian statistics \cite{BarG12,Potts+25,BarPP12}.

Another example of observables based on the counting processes $N_k$ is given in \cite{Potts+24,Potts+25}. They consider linear combinations, such as $\sum_k c_kN_k(t)$, which are used to represent entropies and other thermodynamical quantities. Also feedback is explicitly introduced by taking the Hamiltonian dependent on the last jump. 

Similarly to the case of discrete levels, we could use a continuous counting measure $\Pi$ to describe emission of photons by jumps involving energy bands. In any case, as observed classical component, we could take also the full counting measure $\Pi$, whose trajectories are described in Sec.\ \ref{sec:ttrajs}, where also possible applications are presented. 

The classical component can have also its own dynamics, as in the Markovian case studied in \cite{Bar24}. 
The classical component of the hybrid system is the stochastic process $X(t)$ taking values in the phase space $\Rbb^s$ and satisfying the  stochastic differential equation (restricted to the purely jump case) \cite[(2)]{Bar24}
\begin{equation}\label{Xprocess}
\rmd X_i(t) =c_i \big(X(t_-)\big)\rmd t
+ \int_{\Uscr} g_{i}\big(X(t_-),u\big) \Pi(\rmd u,\rmd t), \qquad i=1,\ldots, s.
\end{equation}
Also $ H(t), \; L_{k}(t),\;J_j(u,t)$ are taken to depend on the past only through $X(t_-)$. The dynamics of the full quantum/classical system can be represented by a semigroup on a suitable hybrid state space and this gives the ``Markovian" character of the dynamics \cite{BarW24,Bar24}. A class of Markovian hybrid evolutions is developed in Sec.\ \ref{sec:walks}.

In this presentation we have introduced a quantum system coupled to classical processes and classical observables; however, also approaches using only coupled quantum systems are possible. The quantum theory of measurements in continuous time can be formulated by representing the classical observables by commuting operators of a Bose field, used as ``ancilla" \cite{Bar86,Bar06,Hol01,KW20}. Similar structures are obtained by starting with measurements at discrete times and, then,  by taking the continuous limit, inside the formalism of ``repeated interactions" or ``collision models" \cite{AttP06,CLGP22,AlbGen23}.

\subsubsection{Connection with instruments and positive operator valued measures} \label{sec:POVM}
The equation \eqref{nonlinearSME} for the conditional states is non linear, but this non-linearity is due only to the normalization in the definition of $\rho (t)$ \eqref{defhatsigma}; underlying this non-linear equation there is the linear SME \eqref{linearSME}. This allows to show that the linear structure of quantum mechanics is respected and that probabilities and conditional states can be obtained from the standard notions of quantum measurement theory: \emph{positive operator valued measures} (POVM) and \emph{instruments} \cite{Hol01,WisM10,Jac14,BarG09}.

The map $\rho_0 \mapsto \sigma(t)\in\Tscr(\Hscr)$ is linear in $\rho_0$ and positive (Proposition \ref{theor:sigmaprop}); so, it can be extended to the whole trace class. By writing $\sigma(t;\rho)$ for the solution of \eqref{linearSME++} with initial condition $\rho\in \Tscr(\Hscr)$, we can define the propagator of the linear SME by
\begin{equation}\label{propaglin}
\Sigma(t):\Tscr(\Hscr)\to \Tscr(\Hscr), \qquad \Sigma(t)[\rho]:=\sigma(t;\rho).
\end{equation}
This map is linear and completely positive (CP); complete positivity follows from the form of the involved operators: Lindblad-type generators and CP jump operators. By taking the mean and the trace of the SME \eqref{linearSME} and by using linearity, we get the trace preserving property
\[
\Ebb_\Qbb\big[\Tr\left\{\Sigma(t)[\rho]\right\}\big]= \Tr\{\rho\}, \qquad \forall\rho\geq 0, \quad \rho\in \Tscr(\Hscr).
\]

As discussed in the first part of Sec.\ \ref{sec:cl+meas} some of the involved stochastic processes are pure noise and only some relevant classical observables are monitored. Then, we can consider the filtration $\Escr_t$, $t \geq 0$, generated by the monitored processes. We have $\Escr_t\subset \Fscr_t$; the $\sigma$-algebra $\Fscr_t$, containing the events up to time $t$, has been introduced in Assumption \ref{Ass:processes}.
For every $t>0$, we can now define an \emph{instrument} $\Ical_t$ on the measurable space $(\Omega, \Escr_t)$. For every $\Eo\in \Escr_t$, we define the map on $\Tscr(\Hscr)$ by
\begin{equation}\label{Icalinstr}
\forall a\in \Bscr(\Hscr), \quad \forall \rho\in \Tscr(\Hscr),  \qquad \Tr\left\{a\Ical_t(\Eo)[\rho]\right\}:=\Ebb_\Qbb\big[ \ind_\Eo\Tr\left\{a \Sigma(t)[\rho]\right\}\big].
\end{equation}
Then, $\Ical_t$ turns out to be an instrument, which means that the following properties hold
\begin{enumerate}
\item $\forall \Eo\in \Escr_t$, $\Ical_t(\Eo)$ is a CP, linear map on $\Tscr(\Hscr)$ and it is trace decreasing, i.e. $\Tr\big\{ \Ical_t(\Eo)[\rho]\big\} \leq \Tr\{\rho\}$, $\forall \rho\geq 0$.
\item (normalization) $\Tr\big\{ \Ical_t(\Omega)[\rho]\big\}=\Tr\{\rho\}$.
\item ($\sigma$-additivity) For every countable family $\{\Eo_i\}$ of disjoint sets in $\Escr_t$
\[
\Ical_t\Big(\bigcup_i \Eo_i\Big)[\rho]=\sum_i \Ical_t(\Eo_i)[\rho].
\]
\end{enumerate}
These properties follow easily from the properties of the map $\Sigma(t)$ and of the indicator function $\ind_\Eo$. Given the instrument and the pre-measurement state $\rho\in \Sscr(\Hscr)$, the probability of the result $\Eo$ is $\Tr\{\Ical_t(\Eo)[\rho]\}$ and the post-measurement state, knowing that the result is in $\Eo$, is given by the mapping $\Ical_t(\Eo)$ and normalization: 
\begin{equation}\label{Econd}
\rho(t|\Eo):=\frac{\Ical_t(\Eo)[\rho]}{\Tr\{\Ical_t(\Eo)[\rho]\}}, \qquad \Eo\in\Escr_t.
\end{equation} 
When the result of the observation is not known or not taken into account the post-measurement state reduces to $\Ical_t(\Omega)[\rho]$. Finally, the operator valued function $\Eo \mapsto \Ical_t(\Eo)^*[\openone]\in \Bscr(\Hscr)$ is the POVM associated with the instrument $\Ical_t$ and represents a generalized quantum observable.

Note that the probabilities given by this family of instruments coincide with the physical probability introduced in Sec.\ \ref{sec:nprob}.  Now the initial state at time zero is $\rho_0$ fixed in \eqref{linearSME}. Then, it is easy to see that the physical probability, defined by \eqref{def:pt} and \eqref{physprob}, and the mean state $\eta(t)$ \eqref{meanst} can be expressed as 
\[
\Pbb(\Eo)=\Tr\big\{ \Ical_t(\Eo)^*[\openone]\rho_0\big\}, \quad \forall \Eo \in \Escr_t,  \qquad \eta(t)=\Ical_t(\Omega)[\rho_0].
\]

When the set $\Eo$ in \eqref{Econd} practically shrinks to a single trajectory of the relevant observables, from the definition of the instruments \eqref{Icalinstr} and the related construction of post-measurement states, we get  the $\Escr$-conditional states
\[
\rho(t|\Escr_t)=\frac{\Ebb_\Qbb \big[\Sigma(t)[\rho_0]\big| \Escr_t\big]}{\Tr\big\{\Ebb_\Qbb \big[\Sigma(t)[\rho_0]\big| \Escr_t\big]\big\}}.
\]
Apart from simple cases, it is not possible to find an explicit expression for the conditional mean defined above and a closed evolution equation for the $\Escr$-conditional states. We can say that we are in a case of ``partial observation". When $\Escr_t=\Fscr_t$, $\forall t\geq 0$, the states $\rho(t|\Escr_t)$ coincides with the conditional states introduced in the definition \eqref{defhatsigma}.

\subsection{Non-uniqueness of the construction}
The physical law of the point process $\Pi(\rmd u,\rmd t)$ is determined by its compensator $I(u,t)\nu_t(\rmd u)\rmd t$ (Remark \ref{rem:Gir}); only the product counts. We allowed the measure $\nu_t$ and the operators involved in the definition of $I(u,t)$ to be random in order to include some extra randomness, but this randomness could be shifted from one ingredient to the other without changing the counting process. So, we can have an equivalent construction by making the replacements
\[
\Pi(\rmd u,\rmd t) \to \Pi'(\rmd u,\rmd t), \qquad \nu_t(\rmd u)\to \nu'_t(\rmd u), \qquad \Jcal(u,t)\to \Jcal'(u,t),
\]
with the constraint 
\begin{equation}\label{invariance}
\Jcal(u,t)\nu_t(\rmd u)=\Jcal'(u,t)\nu'_t(\rmd u);
\end{equation}
no other element is changed, so that $\Lcal(t)$ remains unchanged. Then, under the physical probability, the stochastic intensities \eqref{hatPi} and \eqref{lambda} do not change and $\Pi$ and $\Pi'$ have the same law. Moreover,  no change appears in the non-linear SME \eqref{nonlinearSME}.

As only the product \eqref{invariance} is relevant, we can use this freedom to simplify the choice of the $\Qbb$-probability law of the point measure $\Pi(\rmd u, \rmd t)$ and we take the following assumption, which turns out to be only a mild restriction.
\begin{assumption}\label{ass:Poiss}
We take the measure $\nu_t(\rmd u)$ to be deterministic and independent of time: $\nu_t(A) \to \nu(A)$, \qquad $\forall A\in \Bscr(\Uscr)$. We take also $\nu(\Uscr)>0$.
\end{assumption}
All the other properties contained in Assumption \ref{ass:U} continue to hold.  Under this Assumption, the point measure $\Pi(\rmd u,\rmd t)$ turns out to be, under $\Qbb$,  a time homogeneous, Poisson point process; in particular, the process $N(t)$ is a $\Qbb$-Poisson process of intensity $\nu(\Uscr)$. If we want to change the distribution of the counting process, the needed random contribution are  included into the definition of the jump operators $J(u,t)$ as done in Sec.\ \ref{sec:inters}.

\section{Typical trajectories}\label{sec:ttrajs}
The times of jumps are determined by the process $N(t)$ \eqref{def:N}, the sum of all the counting processes represented by $\Pi$ (Assumption \ref{ass:U}). Under $\Qbb$, $N(t)$ is a Poisson process (Assumption \ref{ass:Poiss}), while, under $\Pbb$, it is a counting process of stochastic intensity $\lambda(t)$ \eqref{lambda}. Heuristically, the expressions
\begin{equation}\label{jumpprob}
\Qbb[\rmd N(t)=1|\Fscr_{t_-}]=\nu(\Uscr)\rmd t, \qquad \Pbb[\rmd N(t)=1|\Fscr_{t_-}]=\lambda(t)\rmd t
\end{equation}
represent the conditional probabilities of a jump in an infinitesimal time interval, given the whole past. Under $\Pbb$,
when $\lambda(t)=0$, no jump is possible at time $t$.

Then, the type of jump is determined by the full point measure $\Pi$. If there is a jump at time $t$, the jump is of type $u\in A$, $A\subset \Uscr$, with probabilities
\begin{subequations}\label{probchoices}
\begin{equation} \Qbb\big[\Delta \Pi(A;t)=1\big|\Delta N(t)=1,\Fscr_{t_-}\big]=\frac{\nu(A)}{\nu(\Uscr)}, 
\end{equation}
\begin{equation}\label{condprob:uinA} \Pbb\big[\Delta \Pi(A;t)=1\big|\Delta N(t)=1,\Fscr_{t_-}\big]=  \int_A \frac{I(u,t)}{\lambda(t)}\,\nu(\rmd u),
\end{equation}
where
\begin{equation}\label{DeltaPi}
\Delta N(t):=N(t_+)-N(t_-), \qquad \Delta \Pi(A,t):=\Pi(A,[t_-,t_+]). 
\end{equation}
\end{subequations}
So, under $\Pbb$, the conditional probabilities of the type of jump depend on the ratio $\frac{I(u,t)}{\lambda(t)}$;  if $I(u,t)=0$, but $\lambda(t)>0$, we can have a jump, but not of type $u$.

Due to the assumptions  of a finite measure $\nu(\Uscr)$ \eqref{nufin} and of the existence of a finite operator $R(t)$ \eqref{def:R(t)}, the probabilities of next jump \eqref{jumpprob} have a smooth behaviour in time. Indeed, these assumptions eliminate the possibility of having ``infinitely many small jumps", a typical possibility in the theory of Lévy processes \cite{Prott04}, which has its counterpart in the theory of measurements in continuous time \cite{Hol86,BarHL93}. Then,
a typical trajectory in $[0,T)$ of the point process $\Pi$ is 
\begin{equation}\label{trajec}
\tau(t;m)\equiv \{(u_j,t_j),\; j=1,2,\ldots,m, \; u_j\in \Uscr,\; 0<t_1<\cdots<t_m< t\}, \quad m\in\Nbb;
\end{equation}
the case $m=0$ means that there is no jump in $[0,t)$. 

\subsection{The solution of the SME}\label{sec:solvSME}
The non-linear SME \eqref{nonlinearSME} says that at a jump $(u_j,t_j)$ the quantum state is transformed according to
\begin{equation}\label{sigmato}
\rho(t_{j-}) \mapsto \rho (t_j)=I(u_j,t_j)^{-1}\Jcal(u_j,t_j)[ \rho(t_{j-}) ], \qquad I(u_j,t_j)=\Tr\left\{ \Jcal(u_j,t_j)[ \rho(t_{j-}) ]\right\},
\end{equation}
where $\Jcal(u,t)$ was introduced in \eqref{Jcalu} and $I(u,t)$ in \eqref{def:pt}. Instead, in between two jumps the quantum state evolves according to the (non linear, CP preserving, trace preserving) equation
\begin{equation}\label{nojev}
\frac{\rmd \rho (t)}{\rmd t}= \Lcal_0(t)\big[\rho(t)\big] -\frac 12 \left\{R(t)-\lambda(t)\openone,\,\rho (t)\right\}.
\end{equation}
In \eqref{nojev}, the presence of $\lambda(t)$ \eqref{lambda} gives the trace preservation property.

In an analogous way we have that the linear SME \eqref{linearSME++} gives
\begin{equation}
\sigma(t_{j-}) \mapsto \sigma (t_j)=\Jcal(u_j,t_j)[ \sigma(t_{j-}) ]
\end{equation}
at a jump, and
\begin{equation}\label{sigmadot}
\frac{\rmd \sigma (t)}{\rmd t}= \Lcal_0(t)\big[\sigma(t)\big] -\frac 12 \left\{R(t)-\nu(\Uscr)\openone,\,\sigma(t)\right\}
\end{equation}
in between two jumps. By \eqref{def:pt} and \eqref{defhatsigma} the quantities $\rho(t)$ and $\sigma(t)$ differ only by the normalization. 
 
The operators $\Lcal_0(t)$ and $R(t)$ are  defined in \eqref{Lcalt} and \eqref{def:R(t)}; they can depend on the trajectory \eqref{trajec} up to time $t$ and eventually on some extra noise. Equations \eqref{nojev} and \eqref{sigmadot} hold in between two jumps; so, $\Lcal_0(t)$ and $R(t)$ must to be computed under the condition of no jump in the considered time interval. To construct the solution of \eqref{Lcalt} and \eqref{def:R(t)} we introduce an explicit notation.

\begin{definition}[{No jump propagator}]\label{def:Stt0} Let us denote by $\Lcal_0(t|t_0)$ and $R(t|t_0)$ the operators $\Lcal_0(t)$ and $R(t)$ given under the condition of no jump in $(t_0,t)$: $N(t)-N(t_0)=0$.
Then, we define the (random) \emph{no jump propagator} $\Scal(t,t_0)$ by
\begin{subequations}\label{def:propag}
\begin{equation}\label{eq:propag}
\frac {\rmd \ }{\rmd t}\,\Scal(t,t_0)[\rho]=\Acal(t|t_0)\circ\Scal(t,t_0)[\rho], 
\end{equation}
\begin{equation}\label{def:Acal} 
\Acal(t|t_0)[\rho]=\Lcal_0(t|t_0)[\rho] -\frac 12 \left\{R(t|t_0), \rho\right\}, \qquad \Scal(t_0,t_0)=\id.
\end{equation}
\end{subequations}
\end{definition}

When there is no jump in $(t_0,t)$, from Eqs.\ \eqref{nojev}, \eqref{sigmadot} we get
\begin{equation*}
\rho (t_-)= \exp\left\{\int_{t_0}^t \lambda(s)\rmd s\right\} \Scal(t,t_0)\big[\rho (t_0)\big],  \qquad\sigma(t_-)= \exp\left\{ \nu(\Uscr)\left(t-t_0\right)\right\} \Scal(t,t_0)\big[ \sigma(t_0)\big].
\end{equation*}
By these results, the solution of the linear SME  \eqref{linearSME}, in the typical trajectory $\tau(t,m)$, can be written in the following form:
\begin{subequations}\label{decompsigma}
\begin{equation}\label{sigmatau0}
\sigma(t_-)= \exp\left\{ \nu(\Uscr)t\right\} \Scal(t,0)[ \rho_0], \qquad \text{for} \ m=0,
\end{equation}
\begin{multline}\label{sigma|tau}
\sigma(t_-)= \exp\left\{ \nu(\Uscr)t\right\} \Scal(t,t_m)\circ \Jcal(u_m,t_m)\circ \Scal(t_m,t_{m-1})\circ
\\{} \circ\cdots \circ \Jcal(u_2,t_2)\circ \Scal(t_2,t_{1})\circ \Jcal(u_1,t_1)\circ \Scal(t_1,0)[ \rho_0], \qquad \text{for} \ m\geq 1.
\end{multline}
\end{subequations}
Analogously, the solution of the non-linear SME \eqref{nonlinearSME} takes the expression
\begin{subequations}\label{decomprho}
\begin{equation}
\rho (t_-)= \exp\left\{\int_{0}^t \lambda(s)\rmd s\right\} \Scal(t,0)\big[ \rho_0\big]=\frac{\Scal(t,0)\big[ \rho_0\big]}{\Tr\left\{\Scal(t,0)\big[ \rho_0\big]\right\}}, \qquad  \text{for} \ m=0,
\end{equation}
\begin{multline}
\rho (t_-)= \exp\left\{\int_{0}^t \lambda(s)\rmd s\right\} \Scal(t,t_m)\circ \frac{\Jcal(u_m,t_m)}{I(u_m,t_m)}\circ \Scal(t_m,t_{m-1})\circ
\\{} \circ\cdots \circ \frac{\Jcal(u_2,t_2)}{I(u_2,t_2)}\circ \Scal(t_2,t_{1})\circ \frac{\Jcal(u_1,t_1)}{I(u_1,t_1)}\circ \Scal(t_1,0)[ \rho_0], \qquad \text{for} \ m\geq 1.
\end{multline}
\end{subequations}

Equations \eqref{decomprho} give the structure of the statistical operator $\rho(t)$, the state of the system given  the whole past. In spite of the simple recursive aspect,  quantum jump trajectories can show very complex behaviours when applied to problems in quantum statistical mechanics \cite{P+23,BP+21,SBDKC25,Fazio26}.

\subsection{No jump probability and waiting times}\label{sec:nojprob}
Let us consider now the probabilities of no jump in a given time interval. Under the reference probability, $N$ is a Poisson process and the no-jump probability  has the simple expression
\begin{equation}\label{Q0}
\Qbb[N(t)-N(t_0)=0|\Fscr_{t_0}]=\Qbb[N(t)-N(t_0)=0]=\exp\left\{-\nu(\Uscr)\left(t-t_0\right)\right\}.
\end{equation}
Under the physical probability $\Pbb$, these probabilities become
\begin{subequations}\label{nojeqs}
\begin{equation}\label{N0}
\Pbb[N(t)=0]=\Ebb_\Qbb\big[\Tr\{\Scal(t,0)[\rho_0]\}\big],
\end{equation}
\begin{equation}\label{N0|}
\Pbb[N(t)-N(t_0)=0|\Fscr_{t_0}]=  \Ebb_\Qbb\big[\Tr\left\{\Scal(t,t_0)[ \rho(t_0)]\right\}\big|\Fscr_{t_0}\}\big].
\end{equation}
\end{subequations}

\begin{proof}[{Computations.}]
Being \eqref{nojev} trace preserving, in between two jumps,  by \eqref{def:pt},  \eqref{def:propag}, \eqref{sigmatau0} we get
\begin{equation}\label{lambdaScal} 
\Tr\left\{\Scal(t,t_0)\big[ \rho(t_0)\big]\right\}=\exp\left\{-\int_{t_0}^t \lambda(s)\rmd s\right\}, \qquad p(t_-)= \exp\left\{\int_{0}^t\left( \nu(\Uscr)-\lambda(s)\right)\rmd s\right\}.
\end{equation}
Then, the no jump probability is given by
\begin{multline*}
\Pbb[N(t)=0]= \Ebb_\Qbb\big[p(t)\ind_{\{N(t)=0\}}\big]=  \Ebb_\Qbb\Big[\exp\Big\{\int_{0}^t\left( \nu(\Uscr)-\lambda(s)\right)\rmd s\Big\} \ind_{\{N(t)=0\}}\Big]
\\ {}=\frac 1 {\Qbb[N(t)=0]}\int_{\{N(t)=0\}}\Qbb(\rmd \omega)\,\Tr\{\Scal(t,0)[\rho_0]\}
=\Ebb_\Qbb\big[\Tr\{\Scal(t,0)[\rho_0]\}\big].
\end{multline*}
We introduce now a notation for the event ``no jump in $(t_0,t)$": $\Eo(t,t_0):=\{N(t)-N(t_0)=0\}$. Similarly to the previous computations we get:  $\forall \Fo\in \Fscr_{t_0}$,
\[
\frac{\Pbb[\Eo(t,t_0)\cap \Fo]}{\Pbb[\Fo]}=\frac{\Ebb_\Qbb[p(t)\ind_\Fo \ind_{\Eo(t,t_0)}]}{\Pbb[\Fo]}= \frac{\Ebb_\Qbb[p(t_0)\ind_\Fo\ind_{\Eo(t,t_0)}\Tr\left\{\Scal(t,t_0)[ \rho(t_0)]\right\}]} {\Qbb[\Eo(t,t_0)]\Ebb_\Qbb[\ind_\Fo p_{t_0}]}.
\]
When $\Fo$ shrinks to a single trajectory in $(0,t_0)$ we get the final result in \eqref{N0|}.
\end{proof}

Let us take $t=0$ as initial time or as time of a jump; we are interested in the waiting time $T$ of the next jump. By its meaning we have the identification
\begin{subequations}\label{Twtj}
\begin{equation}\label{T=N(t)}
\Pbb[T\leq t]=1-\Pbb[N(t)=0].
\end{equation}
We can obtain the probability density by derivation, using \eqref{nojeqs}, \eqref{lambdaScal}:
\begin{equation}
p_T(t)=\frac{\rmd \Pbb[T\leq t]}{\rmd t}=-\frac{\rmd \Pbb[N(t)=0]}{\rmd t}=\Ebb_\Qbb\big[\lambda(t)\Tr\{\Scal(t,0)[\rho_0]\}\big].
\end{equation}
\end{subequations}
By \eqref{lambda}, the positivity of the operator $R$ gives the positivity of $\lambda(t)$ and this gives the positivity of the probability density of the waiting time. It must be remarked that it is possible to have  $\lim_{t\to +\infty}\Pbb[N(t)=0]> 0$. In this case $\Pbb[T=+\infty]>0$ and the distribution has also a discrete point at infinity; some examples are given in Secs.\ \ref{sec:nonEP} and \ref{sec:Hdepend}. 

The identification \eqref{T=N(t)} opens the possibility of a systematic approach to the notion of waiting times for quantum systems, working in continuous time \cite{HBZT25,RGMor25,LMMR25,DBernardinD21}. Here we have considered only the waiting times related to the process $N(t)$, but we could consider also the ones related to the point measure $\Pi(\rmd u, \rmd t)$; in this case the events for which one can introduce the waiting time are labelled by elements (or subsets) of the space $\Uscr$.

\subsection{Exclusive probability densities} \label{sec:exclpd}
All the probabilities for the counts can be reduced to the \emph{exclusive probability densities}, quantities which were at the basis of the first consistent formulation of the quantum theory of measurements in continuous time \cite{Dav69,Dav76,SriD81,SriD82} (see also \cite{BarB91,Car93,ZolG97,Bar06,Nori26} and references therein). 

The exclusive probability densities give the probabilities of a count of type $u_1$ around $t_1$, a count of type $u_2$ around $t_2$, \ldots,  a count of type $u_m$ around $t_m$, and no other count in between two counts and up to $t$; they are densities with respect to the measure  \ \ $\nu_{t_1}(\rmd u_1)\cdots \nu_{t_m}(\rmd u_m)$, \ \ $\rmd t_1\cdots \rmd t_m$ \ \ in the set $0<t_1<\cdots <t_m <t$ with $m=1,\ldots$. For the reference probabilities, for which the distribution of the waiting time is exponential, these densities are simply $\exp\{-\nu(\Uscr)t\}$. For the physical probability, they are obtained from the definition of $p(t;\omega)$ \eqref{def:pt} and \eqref{decompsigma}:
\begin{multline}\label{genexclpd}
p_t\big(u_m,t_m;\ldots;u_1,t_1\big)=
\Ebb_{\Qbb}\Big[\Tr \Big\{\Scal(t,t_m)\circ \Jcal(u_m,t_m)\circ \Scal(t_m,t_{m-1})\circ  \\ {}\circ\cdots \circ \Jcal(u_2,t_2)\circ \Scal(t_2,t_{1})\circ \Jcal(u_1,t_1)\circ \Scal(t_1,0)[ \rho_0]\Big\}\Big|\tau(m,t)\Big];
\end{multline}
the trajectory $\tau(m,t)$ \eqref{trajec} must be fixed. 

To compare these expressions with previous formulations we simplify the situation by eliminating any extra randomness. Different situations will be analyzed in Secs.\ \ref{sec:inters} and \ref{sec:walks}.

\begin{remark}\label{1/2Ass} In the remaining of this subsection and in Sec.\ \ref{sec:nonH}, all the operators $ H(t),  \; J_j(u,t), \; L_{k}(t)$ (introduced in Assumption \ref{ass:HJL}) are taken to be non random. Then, also $R(t)$, $R_0(t)$ \eqref{def:R(t)}, $\Jcal(u,t)$, $\Gcal(t)$, $\Lcal_0(t)$, $\Lcal(t)$ \eqref{linearSME+}
are non random. In particular, the condition $N(t)-N(t_0)=0$ cannot have influence on the structure of the various operators and in
Definition \ref{def:Stt0} we have  $\Lcal_0(t|t_0)=\Lcal(t)$, $R(t|t_0)=R(t)$. Now $\Acal(t|t_0) \mapsto \Acal(t)$ and $\Scal(t,t_0)$ \eqref{def:propag} is not dependent on the condition $N(t)-N(t_0)=0$. 
Moreover, by \eqref{Lcalx}  and \eqref{def:Acal}, we can write
\begin{subequations}\label{LGeta}
\begin{equation}\label{A+G(t)}
\Lcal(t)=\Acal(t)+ \Gcal(t).
\end{equation}
Due to the absence of randomness, the $\Qbb$-mean in  \eqref{masteq} has no effect; then, the mean state $\eta(t)$ \eqref{meanst} satisfies the quantum master equation
\begin{equation}
\frac{\rmd \eta(t)}{\rmd t}= \Lcal(t)[\eta(t)].
\end{equation}
\end{subequations}
\end{remark}

Now, due to the absence of extra noises, the no jump probability in $(t_0,t)$ \eqref{nojeqs} becomes
\begin{equation}\label{nojumpprob}
\Pbb[N(t)-N(t_0)=0|\Fscr_{t_0}]=  \Tr\left\{\Scal(t,t_0)[ \rho(t_0)]\right\}, \qquad \Pbb[N(t)-N(t_0)=0]=  \Tr\left\{\Scal(t,t_0)[ \eta(t_0)]\right\}.
\end{equation}
Moreover, in the exclusive probability densities \eqref{genexclpd} the $\Qbb$-mean can be eliminated and we get
\begin{multline}\label{expd-nonoise}
p_t\big(u_m,t_m;\ldots;u_1,t_1\big)
=\Tr\Big\{ \Scal(t,t_m) \circ \Jcal(u_m,t_m)\circ\Scal(t_m,t_{m-1})\circ \\ {}\circ \cdots \circ\Jcal(u_2,t_2)\circ\Scal(t_2,t_1)\circ\Jcal(u_1,t_1)\circ\Scal(t_1,0)[\rho_0]\Big\}.
\end{multline}
With respect to previous formulations \cite{Dav69,Dav76,SriD81,SriD82,Bar90,BarB91,Car93,Bar06,Potts+24} the space $\Uscr$ giving the types of jumps is very general, not only discrete. 

All the probabilities of interest can be computed by starting from the probability densities and the no jump probability. For instance, the probability density for a given sequence of $m$ jumps and no other jump in $(0,t)$ is given by
\begin{multline}\label{expd-uu}
p_t\big(u_m;\ldots;u_1\big)
= \int_{0}^t\rmd t_{m}\int_{0}^{t_m}\rmd t_{m-1}\cdots\int_{0}^{t_2}\rmd t_1\Tr\Big\{ \Scal(t,t_m) \circ \Jcal(u_m,t_m)\circ\Scal(t_m,t_{m-1})\circ \\ {}\circ \cdots \circ\Jcal(u_2,t_2)\circ\Scal(t_2,t_1)\circ\Jcal(u_1,t_1)\circ\Scal(t_1,0)[\rho_0]\Big\}.
\end{multline}
On the other side, the probability densities for the times of the jumps, without considering the types of the jumps, are given by
\begin{multline}\label{timeexpd}
p_t\big(t_m;\ldots;t_1\big)=\int_\Uscr \nu(\rmd u_m) \cdots \int_\Uscr \nu(\rmd u_1)p_t\big(u_m,t_m;\ldots;u_1,t_1\big)
\\ {}= \Tr\Big\{ \Scal(t,t_m) \circ \Gcal(t_m)\circ\Scal(t_m,t_{m-1})\circ  \cdots \circ\Gcal(t_2)\circ\Scal(t_2,t_1)\circ\Gcal(t_1)\circ\Scal(t_1,0)[\rho_0]\Big\};
\end{multline}
$\Gcal(t)$ is given in \eqref{Jcalu}. By integrating \eqref{timeexpd} on the possible times $0< t_1 < t_2<\cdots <t_m<t$ (or by integrating \eqref{expd-uu} on $u_1,\ldots,u_m$), we get the probability of having exactly $m$ jumps in $(0,t)$:
\begin{subequations}\label{DcalN}
\begin{equation}\label{Nm}
\Pbb[N(t)=m]=\int_{0}^t\rmd t_{m}\int_{0}^{t_m}\rmd t_{m-1}\cdots\int_{0}^{t_2}\rmd t_1 p_t\big(t_m;\ldots;t_1\big)=\Tr\{\Dcal_m(t)[\rho_0]\},
\end{equation}
\begin{multline}\label{def:Dm}
\Dcal_m(t):=\int_{0}^t\rmd t_{m}\int_{0}^{t_{m}}\rmd t_{m-1}\cdots\int_{0}^{t_3}\rmd t_{2} \int_{0}^{t_2}\rmd t_1\,\Scal(t,t_m) \circ\Gcal(t_m)\circ\Scal(t_m,t_{m-1})
\\ {} \circ \cdots \circ\Gcal(t_2)\circ\Scal(t_2,t_1)\circ\Gcal(t_1)\circ\Scal(t_1,0), \qquad m\geq 1.
\end{multline}
Similarly, the probability of no count \eqref{nojumpprob} can be written as
\begin{equation}\label{N0D}
\Pbb[N(t)=0]=\Tr\{\Dcal_0(t)[\rho_0]\}, \qquad \Dcal_0(t):= \Scal(t,0).
\end{equation}
\end{subequations}
Now we can check that the exclusive probability densities \eqref{expd-nonoise} are correctly normalized by showing that by summing up all the probabilities \eqref{Nm}, \eqref{N0D} we get one. Indeed, this can be done by taking into account the Dyson expansion of the propagator of the mean dynamics \cite{Vacc14}. 

\begin{remark}\label{rem:Dmeta} By using the notation of time-ordered product, the expansion of the propagator related to the dynamics of $\eta(t)$ \eqref{LGeta} in terms of $\Gcal(t)$ is given by
\begin{equation}\label{vareta}
\eta(t)=\overset{\leftarrow}{\exp}\Bigl\{\int_0^t\Lcal(s)\rmd s\Bigr\}[\rho_0], \qquad \overset{\leftarrow}{\exp}\Bigl\{\int_0^t\Lcal(s)\rmd s\Bigr\}=\sum_{m=0}^{+\infty} \Dcal_m(t), 
\end{equation}
where the operators $\Dcal_m(t)$ are exactly the ones defined in \eqref{def:Dm}, \eqref{N0D}. 
\end{remark}

The dynamics generated by \eqref{A+G(t)} is trace preserving; so, we have
\[
\sum_{m=0}^{+\infty} \Pbb[N(t)=m]=\sum_{m=0}^{+\infty} \Tr\{\Dcal_m(t)[\rho_0]\}=\Tr\{\eta(t)\}=\Tr\{\rho_0\}=1.
\]

In principle, all the probabilities concerning the types and the times of the jumps can be obtained from the exclusive probability densities. Analogously, one can obtain any kind of moments and correlations related to these probabilities \cite{BarG12,Potts+24}. We can say that the exclusive probability densities are at the basis of a ``full counting statistics" \cite{Potts+24,Potts+25,Nori26}.

\section{Evolutions under non-Hermitian Hamiltonians}\label{sec:nonH}  
The study of evolutions under non-Hermitian (n-H) Hamiltonians is an active field of theoretical and experimental research, see \cite{SerZ13,Gao+15,SCr15,MMCN19,NAJM19,AGU20,NonS25,NonVs25,Ho+26,DMD26,Longhi25,DBernardinD21} and references therein. In the jump trajectory theory, the n-H Hamiltonians naturally appear in between jumps.

Here we consider the case no extra randomness, as in Remark \ref{1/2Ass}, and no time dependence in the operators giving the dynamics; moreover, we take $\Lcal_0$ of purely Hamiltonian type, i.e. in \eqref{Lcalt} we take $L_k(t)=0$ and $R_0(t)=0$.
Then, equations  \eqref{def:propag} for the no-jump propagator reduce to 
\begin{equation}\label{Aeff}
\frac {\rmd \ }{\rmd t}\,\Scal(t,t_0)[\rho]=\Acal\circ\Scal(t,t_0)[\rho], \qquad 
\qquad \Acal[\rho]= -\rmi\left(H-\frac\rmi 2\, R\right)  \rho + \rmi \,\rho \left(H+\frac\rmi 2\, R\right).
\end{equation}
We can say that the evolution between two jumps is given by the n-H Hamiltonian $H-\frac\rmi 2\, R$, with $R\geq 0$. 

Viceversa, when we want to include any given effective n-H Hamiltonian $H_{\rm eff}$, we can construct $H$ and $R$ by
\begin{equation}\label{Heff}
H:=\frac 12 \left(H_{\rm eff}+H_{\rm eff}^\dagger\right), \qquad R:= c \openone+\rmi \left(H_{\rm eff}-H_{\rm eff}^\dagger\right),\qquad  \text{with} \quad  c\in\Rbb: \ R\geq 0.
\end{equation}
The choice of $c$ is always possible because we are considering only bounded operators (Assumption \ref{ass:HJL}); more in general, we could ask ${}\:\rmi H_{\rm eff}-\rmi H_{\rm eff}^\dagger$\ to be bounded from below. The positivity of $R$ is essential for the consistency of the theory. In particular, the condition $R\geq 0$  guaranties the decreasing of the survival probability $\Pbb[N(t)=0]$ and the positivity of the probability density \eqref{Twtj} of the waiting time $T$ of the next jump; this point was recognized also in the literature on n-H evolutions \cite[Sec.\ III.A]{SCr15}.

In this section we are interested only to the evolution in between two jumps; so, we can simplify by introducing a single counting process $N(t)$, which means that the type of jump is not observed or that there is a single type of jump ($\Uscr$ reduces to one point). Under the reference probability, $N(t)$ is a Poisson process of intensity $\nu$; moreover, we denote by $\Jcal$ the CP jump operator. Then, the linear SME \eqref{linearSME++} reduces to
\begin{equation}\label{SMEeff}
\rmd  \sigma(t)=\Bigl( \Acal_0[ \sigma(t_-) ]  -(c-\nu) \sigma(t_-)\Bigr)\rmd t
+  \Bigl( \Jcal[ \sigma(t_-) ]-\sigma(t_-) \Bigr) \rmd N(t), 
\end{equation}
\begin{equation}\label{A0}
\Acal_0[\rho]=-\rmi H_{\rm eff} \rho  +\rmi  \rho H_{\rm eff}^\dagger \,.
\end{equation}
Moreover, equations \eqref{Jcalu}, \eqref{GtoR}, \eqref{Aeff}, \eqref{Heff}, \eqref{LGeta} give
\begin{equation}\label{etaeff}
\Gcal=\nu \Jcal, \qquad \Gcal^*[\openone] =R,
\qquad
\Acal[\rho]= \Acal_0[\rho] -c \rho, \qquad \Lcal = \Acal+\Gcal, \qquad \frac{\rmd \eta(t)}{\rmd t}= \Lcal[\eta(t)].
\end{equation}
A possible choice for the jump operator is to take $\Gcal[\rho]=R^{1/2}\rho R^{1/2}$. Once $H_{\rm eff}$ has been fixed, there are infinitely many choices for $c$ and $G$. These choices change the jump operator in   \eqref{SMEeff} and, so, they are physically inequivalent, as they change what happens at a jump. However, these choices do not change the dynamics in between two jumps. Being $\Gcal$ CP, the mean state $\eta(t)$ satisfies an usual quantum master equation. 

\subsection{The non-linear SME and the survival probability}\label{sec:survpr}
Under the physical probability $\Pbb$, $N(t)$ is a counting process of stochastic intensity $\lambda(t)$ \eqref{lambda} and the a posteriori states \eqref{defhatsigma} satisfy the non-linear SME \eqref{nonlinearSME}, which becomes
\begin{equation}\label{nlSMEeff}
\rmd \rho (t)= \Bigl( \Acal_0[ \rho(t_-)]    + \tilde\lambda(t)\rho (t_-) \Bigr)\rmd t
+ \Bigl( I(t)^{-1}\Jcal[ \rho(t_-) ]-\rho(t_-) \Bigr) \rmd N(t),
\end{equation}
with 
\begin{equation}\label{tildelam}
\tilde\lambda(t)=\rmi\Tr\left\{\left(H_{\rm eff}-H_{\rm eff}^\dagger\right)\rho(t_-)\right\}, \qquad I(t)=\Tr\left\{\Jcal\big[\rho(t_-)\big]\right\},\qquad \lambda(t)=I(t)\nu= \tilde \lambda(t)+c.
\end{equation}

By \eqref{Aeff}, \eqref{Heff}, \eqref{A0}, the no-jump propagator \eqref{def:propag} can be written as
\begin{equation}\label{S0+Heff}
\Scal(t,t_0)=\rme^{-c(t-t_0)}\Scal_0(t-t_0), \qquad \Scal_0(t)[\rho]:= \rme^{\Acal_0 t}[\rho]\equiv \rme^{-\rmi H_{\rm eff}t}\rho \rme^{\rmi H_{\rm eff}^\dagger t}.
\end{equation}
Then, the two SMEs \eqref{SMEeff} and \eqref{nlSMEeff} give that, in between two jumps, the evolution is given by 
\begin{equation}\label{betweenJ}
\sigma(t)=\rme^{(\nu-c)(t-t_0)} \Scal_0(t-t_0)[\sigma(t_0)], \qquad \rho (t)=\frac{\sigma(t)}{\Tr\{\sigma(t)\}}=\frac{\Scal_0(t-t_0)[\rho(t_0)]} {\Tr\{\Scal_0(t-t_0)[\rho(t_0)]\}}\,.
\end{equation}
Let us stress that the evolution of the conditional state does not depend on the choice of $c$ and that it is determined only by  $H_{\rm eff}$. Being the effective Hamiltonian non-Hermitian, the evolution of $\rho(t)$ includes the normalization factor. In the theory of non-Hermitian evolutions, such a normalization appeared, for instance, in \cite{SerZ13}. 

Instead, at a jump the state change is given by
\begin{equation}
\sigma(t_0)=\Jcal[\sigma({t_0}_-)],\qquad \rho (t_0)=\frac{\Jcal[\sigma({t_0}_-)]}{I(t_0)} =\frac{\Gcal[\rho ({t_0}_-)]}{\lambda(t_0)} =\frac{\Gcal[\rho ({t_0}_-)]}{c+ \rmi\Tr\left\{\left(H_{\rm eff}-H_{\rm eff}^\dagger\right)\rho ({t_0}_-)\right\}}\,; 
\end{equation}
recall that, under $\Pbb$, when $\lambda(t_0)=0$ there is no jump.

From \eqref{nojumpprob}, the survival probability is
\begin{equation}\label{survprob}
\Pbb[N(t)=0]=\rme^{-ct}\Tr\{\Scal_0(t)[\rho_0].
\end{equation}
Then, the distribution of the waiting time \eqref{Twtj} is given by
\begin{equation}\label{Twtjvar}
\Pbb[T\leq t]=1-\Pbb[N(t)=0],\qquad p_T(t)=\lambda(t)\Pbb[N(t)=0]=\bigl(c+\tilde\lambda(t)\bigr)\Pbb[N(t)=0];
\end{equation}
$\tilde \lambda (t)$ is defined in \eqref{tildelam} and $c$ is needed to guarantee the positivity of this probability density. As we shall see in the examples in Sec.\ \ref{sec:nonEP}, 
it is possible to have $\lim_{t\to + \infty}\Pbb[N(t)=0]>0$; in this case we have $\Pbb[T=+\infty]>0$, and the distribution of  $T$ is a mixture of a continuous one and a discrete one.

Summarizing, the whole behaviour of the system before the first jump (or in between two jumps) depends on the constant $c$ and on the effective Hamiltonian $H_{\rm eff}$, both introduced in \eqref{Heff}.  The no-jump propagator $\Scal_0(t)$ \eqref{S0+Heff}, generated by $H_{\rm eff}$, gives the dynamics of the conditional state (the second equation in \eqref{betweenJ}). Finally, $\Scal_0(t)$ and the constant $c$ give the survival probability \eqref{survprob} and the waiting time distribution \eqref{Twtjvar}.

\subsection{Examples in $\Cbb^2$}\label{sec:C2ex}
In the literature on n-H evolutions a relevant role is played by the \emph{exceptional points} (EPs). The evolution under an effective  n-H Hamiltonian can be studied by starting from its eigenvalues and right/left eigenvectors \cite{AGU20}; the EPs in the parameter space are singular points at which two eigenvectors and eigenvalues collapse. As stated in many articles, the physical behaviour of the system near an EP is well represented by the two-level case; so, we illustrate the theory by some examples involving $\Hscr=\Cbb^2$. We do not consider explicitly the jump operator; it could determine a jump inside the two-level space, but it could be a jump to some other levels; in this last case, at the jump the system disappears. In \cite{NAJM19,NonS25,NonVs25} it is explicitly stated that the n-H dynamics stays in a two dimensional subspace up to a jump to a third fundamental state. The experimental realizations of systems with n-H evolutions are complicated apparatuses involving excitons, polaritons, control of gain and losses\ldots \cite{Gao+15,NAJM19,Ho+26}. 

So, let us consider a n-H dynamics in $\Cbb^2$, before the first jump (or in between two jumps). The generic effective Hamiltonian can be parametrized as
\begin{subequations}\label{parametr}
\begin{equation}\label{C2Heff}
H_{\rm eff}=\begin{pmatrix}\epsilon_1 & \beta \\ \delta &\epsilon_0\end{pmatrix}=\frac{\epsilon_1 +\epsilon_2}2 \,\openone +K, \qquad   \epsilon_j,\; \beta,\; \delta \in \Cbb, 
\end{equation}
\begin{equation}\label{Keff}
K=\begin{pmatrix} \alpha & \beta \\ \delta & -\alpha\end{pmatrix}=\frac {\beta +\delta}2\,\sigma_x +\frac {\rmi(\beta-\delta)}2\, \sigma_y + \alpha \sigma_z,  \qquad \alpha:=\frac{\epsilon_1 -\epsilon_0}2;
\end{equation}
\end{subequations}
$\sigma_x$, $\sigma_y$, $\sigma_z$ are the usual Pauli matrices. Then, the $R$-operator and the positivity condition \eqref{Heff} become
\begin{subequations}\label{R+k+tilde}
\begin{equation}
R =\tilde c \,\openone +\rmi\left(K-K^\dagger\right)
= \tilde c\,\openone+\begin{pmatrix} -2\IM\alpha &\rmi\left(\beta-\overline \delta \right) \\ \rmi\left( \delta -\overline \beta\right)&2\IM\alpha\end{pmatrix} , \qquad \tilde c := c-\IM\left(\epsilon_1+\epsilon_0\right),
\end{equation}
\begin{equation} \label{pos2x2}
R\geq 0 \qquad \Leftrightarrow \qquad 
\tilde c\geq k:=\sqrt{\abs{\beta-\overline \delta}^2 + (2\IM \alpha)^2}\geq 0.
\end{equation}
\end{subequations}
A relevant point of the $\Cbb^2$-case is that the square of the $K$-operator \eqref{Keff} is proportional to the identity:
\begin{equation}\label{K2=}
 K^2=\varkappa\openone,\qquad \varkappa:=\alpha^2+\beta\delta.
\end{equation}
By using the quantity $\varkappa$, the bound $k$ in \eqref{pos2x2} can be rewritten as
\begin{equation}\label{kappa2}
 \qquad k^2=2\abs \alpha^2+\abs \beta ^2 +\abs\delta^2-2\RE\varkappa =2\left( \abs \alpha^2+\abs \beta \abs\delta-\RE\varkappa \right) +\left(\abs\beta-\abs\delta\right)^2.
\end{equation}
From the parametrization \eqref{Keff} of $K$ and the expressions of $k$ \eqref{pos2x2} and \eqref{kappa2}, we have 
\begin{equation}
K= K^\dagger \qquad \Leftrightarrow \qquad k=0 \qquad \Leftrightarrow \qquad \abs\beta=\abs\delta, \quad \varkappa=\abs\alpha^2+\abs\beta^2.
\end{equation}

Now, the no jump probability \eqref{survprob} takes the expression
\begin{equation}\label{C2survprob}
\Pbb[N(t)=0]= \rme^{-\tilde c\, t} \Tr \{ \zeta(t)\}, \qquad  \zeta(t):=\rme^{-\rmi K t}\rho_0\rme^{\rmi K^\dagger t};
\end{equation}
the associated probability density of the waiting time of the first jump is given by \eqref{Twtjvar}. Let us remark that the survival probability \eqref{C2survprob} has been obtained consistently with the usual formulation of a quantum theory based on POVMs and ``instruments", see Secs.\ \ref{sec:POVM} and \ref{sec:nojprob}. Then, the conditional state before the first jump \eqref{betweenJ} has the expression
\begin{equation}\label{C2rhot}
\rho (t)=\frac{\zeta(t)}{\Tr\{ \zeta(t)\}};
\end{equation}
note that this evolution is non linear in the initial state because of the normalization: it is the state attributed to the system knowing that no jump occurred up to time $t$. Also the name of post-selected state is used \cite{MMCN19,NonS25,NonVs25}.

According to the choices of the parameters in \eqref{C2Heff} there are different behaviours of the quantum system.
\begin{itemize}
\item $R=0$: equivalent to $\beta=\overline \delta$, \quad $\IM \epsilon_1=\IM \epsilon_0$, \quad $\tilde c=0$. In this case there are no jumps, because $\lambda(t)=0$, and the evolution is unitary, because $K=K^\dagger$.
\item $R\propto \openone$, \ $R\neq 0$: equivalent to \ $\beta=\overline \delta$, \quad $\IM \epsilon_1=\IM \epsilon_0$, \quad $\tilde c>0$. We have a unitary evolution with Hamiltonian $K=K^\dagger$ interspersed by quantum channels; the jumps are at a rate $\lambda=c-\IM(\epsilon_1+\epsilon_0)>0$. We shall treat this case in a more general set up in Sec.\ \ref{sec:inters}.
\item $K\neq K^\dagger$, \ $K^2=0$: equivalent to $\alpha^2+\beta\delta=0$ \ and \ $\abs \beta+ \abs\delta \neq 0$. This is the case of an exceptional point (EP), discussed in Sec.\ \ref{sec:EP}.

\item $K^2\neq 0$,  $K\neq K^\dagger$: equivalent to  $\alpha^2 +\beta\delta\neq 0$ and $k>0$. This is the typical case of a n-H evolution, when two left eigenvector (and two right eigenvectors) can be found, see Sec.\ \ref{sec:nonEP}.
\end{itemize}

In the literature there are various proposals of effective Hamiltonians in the reduced case of $\Hscr=\Cbb^2$, and in some cases also the related experiments are performed. By using the parametrization given in \eqref{parametr}, the proposals of \cite{Gao+15,SCr15,MMCN19,NAJM19,Ho+26,DMD26} can be unified by taking
\begin{subequations}\label{refex}
\begin{equation}
\epsilon_j=E_j-\rmi \Gamma_j, \quad E_j,\; \Gamma_j\in \Rbb, \qquad \delta=\overline \beta \,\rme^{\rmi \theta}, \quad \theta\in \Rbb, \qquad \alpha=\frac{E_1-E_2}2-\rmi\, \frac{\Gamma_1-\Gamma_2}2.
\end{equation}
The parameters $E_j$ represent the energy levels and the $\Gamma_j$ represent gain and losses. Then, in \eqref{R+k+tilde}, \eqref{K2=}, we have
\begin{equation}\begin{split}
\frac{\epsilon_1 +\epsilon_0}2= \frac 12 \left[ E_1+E_0-\rmi \left(\Gamma_1+\Gamma_0\right)\right], \qquad\qquad &\tilde c= c-\frac{\Gamma_1+\Gamma_0}2, 
\\ 
k=\sqrt{4\abs{\beta}^2\left(\sin\theta/2\right)^2+ (\Gamma_1-\Gamma_0)^2},
\qquad\qquad
&\varkappa=\alpha^2+\abs\beta^2\rme^{\rmi \theta}.
\end{split}\end{equation}
\end{subequations}

Here below we study the behaviour of a n-H evolution in $\Cbb^2$ with a particular attention to the distribution of the waiting time of the first jump, an argument not usually considered in the literature.

\subsubsection{The behaviour in an exceptional point}\label{sec:EP}
The EP point in the parameter space is when we have
\begin{equation}\label{EPcond}
\varkappa\equiv \alpha^2+\beta \delta=0, \qquad \abs \beta +\abs \delta \neq 0 \qquad \Leftrightarrow \qquad K\neq 0, \qquad K^2=0;
\end{equation}
the equivalence follows immediately from the expression of $K$ \eqref{Keff} and the property \eqref{K2=}. In this case we have not two distinct eigenvectors; indeed, we have 
\begin{equation}\label{K0form}
K=({\abs\beta+\abs\delta}){|\phi_0\rangle \langle \phi_1|},  \qquad \langle \phi_j|\phi_i\rangle=\delta_{ij},
\end{equation}
\begin{subequations}\label{phi01}
\begin{gather}\label{1phi01}
\text{for} \ \beta\neq 0 : \qquad \phi_0=\frac 1 {\sqrt{\abs \beta+ \abs \delta}}\begin{pmatrix} \sqrt{\abs \beta}\\ \rme^{\rmi \left(\arg \delta-\arg \alpha\right)}\sqrt{\abs \delta}\end{pmatrix}, \qquad \phi_1=\frac 1 {\sqrt{\abs \beta+ \abs \delta}}\begin{pmatrix}\rme^{-\rmi \arg \alpha} \sqrt{\abs \delta}\\ \rme^{-\rmi \arg\beta}\sqrt{\abs \beta}\end{pmatrix} ,
\\ \label{2phi01}
\text{for} \ \beta= 0 : \qquad \qquad \phi_0=\begin{pmatrix} 0 \\ \rme^{\rmi \arg \delta}\end{pmatrix}, \qquad \phi_1=\begin{pmatrix} 1 \\ 0\end{pmatrix} .
\end{gather}
\end{subequations}
\begin{proof}[{Computations}] From the vanishing of $\varkappa$ we get
\[
\abs \alpha^2=\abs \beta \abs \delta, \qquad \rme^{2\rmi \arg \alpha}=-\rme^{\rmi (\arg \beta +\arg \delta)}, \qquad \rme^{\rmi (\arg \alpha- \arg \delta)}=-\rme^{-\rmi (\arg \alpha -\arg \beta)}.
\]
When $\beta=0$, the quantities $\arg \alpha$ and $\arg\beta$ are not defined; similarly, $\arg \alpha$ and $\arg\delta$ are not defined when $\delta=0$.

Then, we search for two vectors such that \ $K\phi_0=0$ \ and \ $K\phi_1\propto \phi_0$; it is easy to verify that the vectors given in \eqref{phi01} satisfy this requirement, they are orthogonal and give the expression \eqref{K0form} for $K$. However, the expression $|\phi_0\rangle\langle \phi_1|$ is independent from a common phase of the two vectors, and one can check that it is not possible to choose this phase in a way independent from any value of the parameters. The choice \eqref{1phi01}  works well when all the parameters are not vanishing; it works also for $\delta=0$, because in this case the dependence on $\arg \delta$ and $\arg \alpha$ disappears and the two vectors become
\[
\text{for} \ \delta= 0 : \qquad \phi_0=\begin{pmatrix} 1\\ 0\end{pmatrix}, \qquad \phi_1= \begin{pmatrix}0\\ \rme^{-\rmi \arg\beta}\end{pmatrix} .
\]
In the case of $\beta=0$, the choice \eqref{1phi01} looses its meaning because the dependence on $\arg\alpha $ survives; in this case, one can check that the choice \eqref{2phi01} works well.
\end{proof}

Being $K$ a nilpotent operator \eqref{EPcond}, the associated evolution operator $\rme^{-\rmi Kt}$ turns out to be linear in $t$ and we have
\begin{equation}\label{EPevolution}
\rme^{-\rmi K t}=\openone -\rmi K t= \openone-\rmi t \left(\abs \beta + \abs \delta\right)|\phi_0\rangle \langle \phi_1|.
\end{equation}
Moreover, by \eqref{kappa2} and \eqref{EPcond}, the positivity condition \eqref{pos2x2} becomes
\begin{equation}\label{posEP}
R\geq 0 \qquad \Leftrightarrow \qquad \tilde c \equiv c-\IM\left(\epsilon_1+\epsilon_0\right)\geq k\equiv \abs \beta + \abs\delta>0. 
\end{equation}

Let us consider a generic initial state
\[
\rho_0= \sum_{i=1}^2 |\psi_i\rangle r_i\langle \psi_i|, \qquad \langle \psi_i|\psi_j\rangle =\delta_{ij}, \qquad 0\leq r_i\leq 1, \qquad r_2= 1-r_1 .
\]
By \eqref{C2survprob}, \eqref{EPevolution} we get
\[ 
\zeta(t)= \sum_{i=1}^2 |\psi_i(t)\rangle r_i\langle \psi_i(t)|, \qquad |\psi_i(t)\rangle =\left(\openone -\rmi K t\right)| \psi_i\rangle  =|\psi_i\rangle -\rmi t \left({\abs\beta+\abs\delta}\right)|\phi_0\rangle \langle \phi_1|\psi_i\rangle.
\]
Then, the conditional state up to the first jump \eqref{C2rhot}  takes the form 
\begin{subequations}
\begin{equation}
\rho(t)=\frac{\zeta(t)}{\sum_jr_j\norm{\psi_j(t)}^2},\qquad  \norm{\psi_i(t)}^2 =\abs{\langle \phi_1|\psi_i\rangle}^2+ \abs{\langle \phi_0|\psi_i\rangle -\rmi t \left(\abs \beta +\abs \delta\right)\langle\phi_1|\psi_i\rangle}^2,
\end{equation}
and the survival probability \eqref{C2survprob} becomes
\begin{equation}
\Pbb[N(t)=0]= \rme^{-\tilde c t} \sum_{i=1}^2 r_i\norm{ \psi_i(t)}^2.
\end{equation}
\end{subequations}
Finally, by direct computation of the derivative in \eqref{Twtj}, we get the probability density of the waiting time of the first jump:
\begin{subequations}\label{densK0}
\begin{equation}
p_T(t)=-\frac{\rmd \Pbb[N(t)=0]}{\rmd t}=\sum_{i=1}^2 r_ip_i(t),
\end{equation}
\begin{equation}\label{k0dens}
p_i(t)=\rme^{-\tilde c t} \Bigl\{ \left(\tilde c-\abs \beta -\abs \delta\right)\norm{ \psi_i(t)}^2 + \left(\abs \beta +\abs \delta\right)
\abs{\langle\phi_0|\psi_i\rangle+ \rmi \langle\phi_1|\psi_i\rangle\left[1 - t(\abs \beta+\abs\delta)\right]}^2  \Bigr\}.
\end{equation}
\end{subequations}

As it appears, the positivity condition \eqref{posEP} is necessary to guarantee the positivity of these probability densities.
To illustrate this point, we consider two simple examples in the case of a pure initial state $\rho_0=|\psi\rangle \langle \psi|$. By taking $\psi=\frac{1}{\sqrt 2}\left(\phi_0+\rmi\phi_1\right)$ we get
\[
p_T(t)= \rme^{-\tilde c t}\left\{\left(\tilde c -\abs \beta-\abs\delta\right)\left[1+\left(\abs \beta+\abs\delta\right)t\right]+\frac 12\, \tilde c \left(\abs \beta+\abs\delta\right)^2t^2\right\};
\]
the positivity condition \eqref{posEP} is needed to have $p_T(t)\geq 0$ for small times. Instead, for $\psi=\phi_1$ we get
\[
p_T(t)= \rme^{-\tilde c t}  \Big[ \tilde c\left(1+ t^2(\abs \beta +\abs \delta)^2\right) 
-2t (\abs \beta +\abs \delta)^2  \Big]
\]
and the positivity condition \eqref{posEP} is needed to guarantee $p_T(t)\geq 0$ at intermediate times, as one sees by taking $t=1/(\abs \beta +\abs\delta)$:
\[
p_T(t)\Big|_{t=1/(\abs \beta +\abs\delta)}=\exp\left\{\frac{-\tilde c}{\abs \beta +\abs\delta}\right\}2\left(\tilde c-\abs \beta -\abs\delta\right).
\]

As these examples show, there is a variety of behaviours of the probability densities \eqref{k0dens}, depending on the initial state. This variety of behaviours is apparent also by taking only the mean and  the variance of the survival time. Let us consider a pure initial state and the smallest possible value for the parameter $\tilde c$:
\[
\rho_0=|\psi \rangle \langle \psi |, \qquad \tilde c = \abs\beta +\abs \delta=:\lambda_0.
\]
Then, by direct computations from the density $p_T(t)$, we get some examples:
\begin{itemize}
\item for $\psi= \phi_0$: \qquad \qquad $\Ebb_\Pbb[T]=1/\lambda_0$, \quad $\Var_\Pbb[T]=1/\lambda_0^2$;
\item for $\psi=\frac 1 {\sqrt 2}\left(\phi_0- \rmi \phi_1\right)$: \quad $\Ebb_\Pbb[T]=1/\lambda_0$, \quad $\Var_\Pbb[T]=3/\lambda_0^2$;
\item for $\psi=\frac 1 {\sqrt 2}\left(\phi_0+ \rmi \phi_1\right)$: \quad $\Ebb_\Pbb[T]=3/\lambda_0$, \quad $\Var_\Pbb[T]=3/\lambda_0^2$;
\item for $\psi= \phi_1$: \qquad \qquad $\Ebb_\Pbb[T]=3/\lambda_0$, \quad $\Var_\Pbb[T]=5/\lambda_0^2$.
\end{itemize}

\subsubsection{Out of the exceptional point} \label{sec:nonEP}
Now, we consider the case of a n-H evolution out of the EP:
\begin{equation}\label{noEPcond}
\varkappa\equiv \alpha^2+\beta \delta\neq 0, \qquad k\equiv \sqrt{\abs{\beta-\overline \delta}^2 + (2\IM \alpha)^2} > 0 \qquad \Leftrightarrow \qquad K\neq K^\dagger, \qquad K^2\neq0;
\end{equation}
Also in this case it is possible to get the evolution operator generated by $K$ directly from the property \eqref{K2=}, without introducing its eigenvectors. The computations needed in proving the following statements are given in Appendix \ref{proof:4:1}.
\begin{remark}\label{rem:4:1}
The propagator $\rme^{-\rmi Kt}$ can be expressed as
\begin{equation}\label{evolgpm}
\rme^{-\rmi Kt}=g_+(\varkappa t^2)\openone - g_-(\varkappa t^2)\rmi Kt, \qquad g_+(\xi):= \sum_{n=0}^{+\infty}\frac{(-\xi)^n}{(2n)!}, \qquad g_-(\xi):= \sum_{n=0}^{+\infty}\frac{(-\xi)^n}{(2n+1)!}.
\end{equation}
The radius of convergence of the two series is $+\infty$; $\varkappa=0$ is not a singular point. For $\varkappa\to 0$ we have the right evolution in the EP.

It is possible to rewrite the propagator \eqref{evolgpm} in terms of eigenvalues and right eigenvectors of $K$; this procedure involves the choice the square root of $\varkappa$. 
By defining
\begin{equation}\label{defz}
\chi:= \arg \, \frac\varkappa{\abs{\varkappa}}, 
\qquad z:=\rme^{\rmi \chi/2} \sqrt{\abs\varkappa},
\end{equation}
the two series appearing in \eqref{evolgpm} can be expressed by means of $z$ as
\begin{equation}\label{gpmz}
g_+(\varkappa t^2)=\frac 12 \left(\rme^{\rmi zt}+\rme^{-\rmi zt}\right), \qquad g_-(\varkappa t^2)=\frac 1{2\rmi zt} \left(\rme^{\rmi zt}-\rme^{-\rmi zt}\right).
\end{equation}
\end{remark}

By inserting \eqref{gpmz} into \eqref{evolgpm}, we get  the expression of the propagator in terms of eigenvalues and eigenvectors:
\begin{equation}\label{Vpmprop}
\rme^{-\rmi K t}= \rme^{-\rmi z t} V_+ + \rme^{\rmi z t} V_-  , \qquad V_\pm:=\frac \openone 2 \pm \frac K {2z}  \quad \Rightarrow \quad {V_\pm}^2=V_\pm, \quad V_\mp V_\pm = 0,  \quad V_++V_-=\openone.
\end{equation}
By the definition of $V_\pm$ and its properties, we obtain
\begin{equation}\label{KV}
K= zV_+-zV_-, \qquad KV_\pm\psi= \pm z V_\pm\psi.
\end{equation}
So, the quantities $V_\pm\psi$ are the right eigenvectors of the n-H operator $K$ (if $\psi$ is such that both are different from zero). However, $V_-^\dagger V_+\neq 0$ and the two eigenvectors are not mutually orthogonal. 

\begin{remark}\label{rem:zsign}
In \eqref{defz} the choice of the sign of the square root of $\varkappa $ is arbitrary. If we take a shift of $2\pi$ in the phase $\chi$, we get an exchange of the two eigenvalues:
\begin{equation}
\chi \to \chi + 2 \pi \qquad \Rightarrow \qquad \pm z \to \mp z , \qquad V_\pm \to V_\mp.
\end{equation}
In any case this exchange leaves invariant $K$ and $\rme^{-\rmi Kt}$, see \eqref{evolgpm}, \eqref{gpmz}, \eqref{KV}.
\end{remark}

Let us denote by $u_\pm$ the normalized right eigenvectors of $K$:
\begin{equation}\label{eigeneq}
Ku_\pm=\pm z u_\pm , \qquad \norm{u_\pm}=1.
\end{equation}
By using the  expression of $K$ we can find explicitly these two eigenvectors, but we have to distinguish some subcases. The proof of the following proposition is given in Appendix \ref{sec:proof2}.

\begin{proposition}\label{prop:eigenvec}
We take $\varkappa\neq 0$. 
\begin{itemize}
\item For $\beta\delta\neq 0$ we get
\begin{equation}\label{upm}
u_\pm= \frac 1 {\sqrt{\abs{\alpha\pm z}^2+\abs \delta^2}}\begin{pmatrix} \alpha \pm z \\ \delta \end{pmatrix} = \frac {\Phi} {\sqrt{\abs{\alpha\mp z}^2+\abs \beta^2}}\begin{pmatrix} -\beta \\ \alpha \mp z  \end{pmatrix}, \qquad \Phi:= \frac {\left(\overline \alpha \mp\overline z\right)\delta} {\abs{\alpha\mp z}\abs\delta};
\end{equation}

\item For $\beta\delta=0$ we have $\alpha\neq 0$ and we chose $z=\alpha$. Then, we get $u_+$ from the first expression in \eqref{upm} and $u_-$ from the second one, without the phase factor $\Phi$.
\end{itemize}
Moreover, we have the following equivalencies:
\begin{equation}\label{orthogonal}
\langle u_+|u_-\rangle =0   \qquad \Leftrightarrow \qquad (K/z)^\dagger =K/z \qquad \Leftrightarrow \qquad V_\pm={V_\pm}^\dagger.
\end{equation}
\end{proposition}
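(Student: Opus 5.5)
The plan is to verify the eigenvector equations \eqref{eigeneq} by direct substitution, using only $z^2=\varkappa=\alpha^2+\beta\delta$ from \eqref{defz}, and then to read the equivalences \eqref{orthogonal} off the decomposition \eqref{KV}.

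\textbf{Eigenvector equations for $\beta\delta\neq 0$.} Take $v=(\alpha\pm z,\ \delta)^T$. The first component of $Kv$ is
\[
\alpha(\alpha\pm z)+\beta\delta=\varkappa\pm\alpha z=z^2\pm\alpha z=\pm z(\alpha\pm z).
\]
The second component is $\delta(\alpha\pm z)-\alpha\delta=\pm z\delta$. Hence $Kv=\pm z v$. The vector is nonzero since $\delta\neq0$, so normalising gives the first expression in \eqref{upm}.

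Similarly take $w=(-\beta,\ \alpha\mp z)^T$. The first component of $Kw$ is $-\alpha\beta+\beta(\alpha\mp z)=\mp z\beta$, which equals $\pm z\cdot(-\beta)$. The second component is
\[
-\beta\delta-\alpha(\alpha\mp z)=-z^2\pm\alpha z=\pm z(\alpha\mp z).
\]
So $w$ is also an eigenvector for $\pm z$. Since $z\neq0$ (because $\varkappa\neq0$), the two eigenvalues are distinct and each eigenspace is one-dimensional. Therefore $v$ and $w$ are proportional, and their normalised versions differ by a phase.

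\textbf{The phase $\Phi$.} From $(\alpha\pm z)(\alpha\mp z)=\alpha^2-z^2=-\beta\delta$ we get
\[
(\alpha\mp z)\,v=-\frac{\delta}{\beta}\cdot\beta\cdot(\ldots)\ \text{i.e.}\ (\alpha\mp z)\,v=\delta\, w .
\]
Both entries check: $(\alpha\mp z)(\alpha\pm z)=-\beta\delta$, and $(\alpha\mp z)\delta=\delta(\alpha\mp z)$. This also gives $\alpha\mp z\neq0$ and the norm relation $\abs{\alpha\mp z}\,\norm{v}=\abs\delta\,\norm{w}$. Then
\[
\frac{v}{\norm v}=\frac{\delta}{\alpha\mp z}\,\frac{\abs{\alpha\mp z}}{\abs\delta}\,\frac{w}{\norm w},
\]
and the scalar factor equals $\frac{(\overline\alpha\mp\overline z)\delta}{\abs{\alpha\mp z}\abs\delta}=\Phi$. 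This is the one step where care with conjugates is needed; I expect it to be the main bookkeeping obstacle, though it is not conceptually difficult.

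\textbf{The case $\beta\delta=0$.} Here $\varkappa=\alpha^2\neq0$, so $\alpha\neq0$ and we may choose $z=\alpha$ (Remark \ref{rem:zsign} allows either sign). For $u_+$, the vector $(2\alpha,\ \delta)^T$ is nonzero and the same computation as above applies. For $u_-$, the vector $(-\beta,\ 2\alpha)^T$ is nonzero and again the computation applies. The other expression in each case can degenerate: $(\alpha-z,\delta)=(0,\delta)$ vanishes if $\delta=0$. This is why the statement prescribes which formula to use for each eigenvector, and no phase factor is needed.

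\textbf{The equivalences \eqref{orthogonal}.} By \eqref{KV}, $K/z=V_+-V_-$. Together with $V_++V_-=\openone$ this gives $V_\pm=\frac12(\openone\pm K/z)$. Hence $V_\pm$ is self-adjoint if and only if $K/z$ is self-adjoint.

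If $V_\pm$ are self-adjoint, they are orthogonal projections with $V_-V_+=0$. Their ranges, spanned by $u_+$ and $u_-$, are then orthogonal, so $\langle u_+|u_-\rangle=0$.

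Conversely, suppose $u_+\perp u_-$. Then $\{u_+,u_-\}$ is an orthonormal basis of $\Cbb^2$, and $K/z=|u_+\rangle\langle u_+|-|u_-\rangle\langle u_-|$ is self-adjoint. This closes the cycle of equivalences.
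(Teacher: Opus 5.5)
Your proof is correct, and most of it follows the paper. The direct check of the eigenvector equations, the identification of $\Phi$, the $\beta\delta=0$ case, the equivalence $(K/z)^\dagger=K/z\Leftrightarrow V_\pm=V_\pm^\dagger$ via $V_\pm=\frac12(\openone\pm K/z)$, and the step from self-adjoint $V_\pm$ to $\langle u_+|u_-\rangle=0$ all match the paper's argument. For $\Phi$, your identity $(\alpha\mp z)v=\delta w$ is a tidier packaging of the paper's use of $\alpha^2-z^2=-\beta\delta$ to match normalisations and components.

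You genuinely differ on the converse $\langle u_+|u_-\rangle=0\Rightarrow (K/z)^\dagger=K/z$:
\begin{itemize}
\item You observe that orthonormal eigenvectors give $K/z=|u_+\rangle\langle u_+|-|u_-\rangle\langle u_-|$, with real eigenvalues $\pm1$, so $K/z$ is self-adjoint.
\item The paper instead takes $u_\pm$ from the first expression in \eqref{upm} and turns orthogonality into $(\overline\alpha+\overline z)(\alpha-z)+\abs\delta^2=0$. It splits this into $\overline z\alpha\in\Rbb$ and $\abs\alpha^2+\abs\delta^2=\abs\varkappa$. Then, with $a=\overline z\alpha$, $b=\overline z\beta$, $d=\overline z\delta$, it derives $\overline d=b$, so that $\overline z K=\begin{pmatrix}a&b\\ \overline b&-a\end{pmatrix}$ is Hermitian.
\end{itemize}
Your argument is shorter and coordinate-free, and it covers $\beta\delta=0$ too; the paper's computation is written only for $\beta\delta\neq0$. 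In exchange, the paper's route gives explicit conditions on $\alpha,\beta,\delta$ for the eigenvectors to be orthogonal.

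One small fix: delete the stray fragment ``$-\frac{\delta}{\beta}\cdot\beta\cdot(\ldots)$'' in your phase step. The identity you actually verify on the next line is the correct one.
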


In \eqref{upm} $\Phi$ is only a phase factor needed to have the equality. Equation \eqref{orthogonal} says that $u_+$ and $u_-$ are mutually orthogonal only when $K$ is proportional to a self-adjoint operator.

When $\varkappa\to 0$ also $z\to 0$ and the two right eigenvectors $u_\pm$ collapse in a single one, proportional to $\phi_0$ \eqref{phi01}. It is possible to introduce also the left eigenvectors $ {v_\pm}^{\dagger}\propto \left( \alpha \pm z,\; \beta\right)$; for $\varkappa\to 0$ they collapse to $\phi_1$.

By \eqref{kappa2} and \eqref{defz}, the positivity condition \eqref{R+k+tilde} can be written as
\begin{equation}\label{ctilde-k}
\tilde c\geq k\equiv \sqrt{2\left(\abs\alpha^2+\abs{\beta\delta}-\abs\varkappa\right)+\left(\abs\beta-\abs\delta\right)^2+4(\IM z)^2}\geq 2\abs{\IM z}.
\end{equation}

Having the explicit expression of the propagator, by \eqref{C2survprob} and \eqref{C2rhot}, one can compute the conditional state up to the first jump and the survival probability. Let us consider the survival probability in the case of a pure initial state $\psi$:
\begin{equation}\label{coeffupm}
\psi =x_+u_++x_-u_-, \qquad x_\pm\in \Cbb, \qquad \abs{x_+}^2+\abs{x_-}^2 +2 \RE (\overline x_+ x_- \langle u_+|u_-\rangle)\equiv\norm\psi^2=1.
\end{equation}
Then, we have
\begin{equation}\label{survPupm}\begin{split}
&\Pbb[N(t)=0]= \rme^{-\tilde c t} \norm{\psi(t)}^2, \qquad  \psi(t)=\rme^{-\rmi K t}\psi = \rme^{-\rmi zt}x_+u_+ +\rme^{\rmi zt}x_-u_-, 
\\
&\norm{\psi(t)}^2=\abs{x_+}^2 \rme^{-2t\IM z}+ \abs{x_-}^2 \rme^{2t\IM z}+ 2 \RE (\rme^{2\rmi t \RE z}\;\overline x_+ x_- \langle u_+|u_-\rangle).
\end{split}\end{equation}

By the condition \eqref{ctilde-k} the survival probability cannot increase. However, for certain choices of the parameters, we could have a not vanishing survival probability for long times: $\lim_{t\to +\infty}\Pbb[N(t)=0]>0$. To have this case (but with $K\neq K^\dagger$), we need $\tilde c=k$ and $k= 2\abs{\IM z}>0$, which means $\abs\beta=\abs\delta$, \ \ $\abs\alpha^2 +\abs{\beta\delta}=\abs\varkappa$.
So, we are in the class of models parameterized by \eqref{refex}, and we have to take
\[
\abs\beta=\abs\delta, \qquad \rme^{2\rmi \arg \alpha}=\rme^{\rmi \theta}, \qquad \rme^{\rmi \theta}\neq 1, \qquad \tilde c=2\abs{\IM z} >0.
\]
This gives
\[
\varkappa=\rme^{\rmi \theta} \left(\abs\alpha^2+\abs\beta^2\right), \qquad \IM z \neq 0,
\]
\[
\Pbb[N(t)=0]= \rme^{-2 t\abs{\IM z}} \left[\abs{x_+}^2 \rme^{-2t\IM z}+ \abs{x_-}^2 \rme^{2t\IM z}+ 2 \RE (\rme^{2\rmi t \RE z}\;\overline x_+ x_- \langle u_+|u_-\rangle)\right] ;
\]
then, one of the first two contributions survives for long times.

A special case of interest is when $K\neq K^\dagger$, but its eigenvalues are real, see for instance \cite{SCr15}. This case is equivalent to $k>0$ and $\varkappa> 0$, i.e.
\[
\varkappa\equiv \alpha^2 + \beta \delta >0,\qquad k^2\equiv 2\left(\abs\alpha^2+\abs{\beta\delta}-\varkappa \right)+\left(\abs \beta-\abs \delta\right)^2>0.
\]
By taking $z:=\sqrt\varkappa >0$ and using \eqref{coeffupm}, \eqref{survPupm}, \eqref{Twtj}, we have
\[
\Pbb[N(t)=0]=\rme^{-\tilde c t}\norm{\psi(t)}^2, \qquad \norm{\psi(t)}^2=1+ 2 \RE \left[\left(\rme^{2\rmi z t }-1\right)\overline x_+ x_- \langle u_+|u_-\rangle\right],
\]
\[
p_T(t)=\rme^{-\tilde c t}\left\{\tilde c \norm{\psi(t)}^2+4 z \IM (\rme^{2\rmi z t } \,\overline x_+ x_- \langle u_+|u_-\rangle) \right\}.
\]
By $k>0$ we have that $\tilde c>0$ and, so, $p_T(t)$ is a true probability density. If we exclude the case of $K$ proportional to a self-adjoint operator, this probability density has a very peculiar oscillating behaviour. 

An example with real eigenvalues is given \cite{SCr15}; the proposed model is
\[
K=s\begin{pmatrix}\rmi\sin \phi & 1 \\ 1 &-\rmi \sin \phi\end{pmatrix}, \quad s\in \Rbb \qquad \Rightarrow \qquad \varkappa=s^2(\cos\phi)^2, \quad z=\abs{ s \cos\phi}, \quad k= 2\abs{s\sin\phi};
\]
$K$ is Hermitian only for $s\sin\phi=0$. As example we take 
\[
s>0, \qquad \phi=\pi/4, \qquad \tilde c=k, \qquad x_+=\frac{1+\rmi}2,\qquad x_-=\frac\rmi {\sqrt 2};
\]
then, we have
\[
\varkappa=\frac{s^2}2 \, \qquad z=\frac s {\sqrt 2}, \qquad k=\sqrt 2\, s=2z, \qquad u_\pm=\frac 12 \begin{pmatrix}\rmi \pm 1 \\ \sqrt 2\end{pmatrix}, \qquad \langle u_+|u_-\rangle=\frac{1+\rmi}2 ,
\]
\[
\overline x_+ x_- \langle u_+|u_-\rangle=\frac \rmi{2\sqrt 2}, \qquad \psi=x_+u_++x_-u_-, \qquad \norm{\psi}^2=1,
\]
\[
\norm{\psi(t)}^2=1-\frac{\sin \sqrt 2 st}{\sqrt 2}, \qquad p_T(t)=\rme^{-\sqrt 2 s t} s\left\{\sqrt 2-\sin\sqrt 2 s t + \cos \sqrt 2 st \right\}.
\]
With this choice of the parameters, we have not only a probability density with oscillations, but also with a sequence of zeros: $p_T(t)=0$ for $\sqrt 2 s t= 2n\pi+ 3\pi/4$.

Sometimes, the case of n-H Hamiltonian with real eigenvalues is considered of interest when it is possible to modify the internal product in such a way that the evolution becomes unitary. However, as stated in \cite{SCr15}, this means to go back to an Hermitian Hamiltonian. Here we have studied the n-H case without changing the internal product  and we have shown that we get very unusual probability densities for the survival time.

Another point of interest is to see what happens when the parameters are modified along a path which goes around an EP. This procedure exchanges the two eigenvectors and the two eigenvalues; as discussed in Remark \ref{rem:zsign}, there is no change in the dynamics. On the contrary, if this continuous modification of the parameters is applied to the preparation, there is an exchange $u_+ \leftrightarrow u_-$ and the properties of the preparation change; about this point there are also experimental results \cite{Gao+15}.

\section{A piecewise dynamics interspersed by jump operators at random times}\label{sec:inters}

Now we consider the case of a (unitary or dissipative) dynamics interrupted at random times by some action (typically given by a quantum channel). It can represent a smooth dynamics modified by interactions with external environments or measuring apparatuses, and it has been used to model dissipative evolutions, mainly in solid state physics and in statistical mechanics \cite{GoHa06,ScLu07,DDG22,NaGu23}; it is also a way to construct a non-Markovian mean dynamics and to test the proposed quantifiers of non-Markovianity \cite{Bud04,Vacc+11,Vacc13,Vacc20,Vacc+21}. 

We start from the counting process $N(t)$ introduced in \eqref{def:N}; the measure $\nu(\rmd u)$ satisfies Assumptions \ref{ass:U}, \ref{ass:Poiss}. Under $\Qbb$,  $N(t)$ is a Poisson process of intensity $\nu(\Uscr)$. Under $\Pbb$, we want   $N(t)$ to acquire a prefixed law, not depending on the quantum dynamics. As the $\Pbb$-law of $N(t)$ is determined by the stochastic intensity $\lambda(t)$, given by \eqref{lambda}, to have independence from the quantum state we have to ask
\begin{equation}\label{R=lambda}
R(t)=\lambda(t)\openone,
\end{equation}
with $\lambda(t)$ assigned once for all. The operator $R(t)$ is defined in \eqref{def:R(t)} and depends on the operators $J_j(u,t)$, introduced in Assumption \ref{ass:HJL}. To have \eqref{R=lambda},  we  need to take
\begin{equation}\label{OtoJ}
J_j(u,t)=\sqrt{\lambda(t)}\,O_j(u,t), \qquad 
\int_\Uscr\nu(\rmd u)\sum_{j=1}^{d_1}O_j(u,t)^\dagger O_j(u,t)=\openone, \qquad  O_j(u,t)\in \Bscr(\Hscr).
\end{equation}

Now, we can introduce all the operators on $\Tscr(\Hscr)$ needed in the SMEs \eqref{linearSME++}, \eqref{nonlinearSME}.
Firstly, we define 
\begin{equation}\label{Ocal}
\Ocal(u,t)[\rho]:=\sum_jO_j(u,t)\rho O_j(u,t)^\dagger ,  \qquad  \Ycal_t(A):=\int_A \Ocal(u,t)\nu(\rmd u) , \quad \forall A\in \Bscr(\Uscr).
\end{equation}
It is possible to check that $\Ycal_t$ is an \emph{instrument} with value space $\big(\Uscr,\; \Bscr(\Uscr)\big)$, as it satisfies the three defining properties of an instrument, given after Eq.\ \eqref{Icalinstr}. By  \eqref{Jcalu}, \eqref{OtoJ}, \eqref{Ocal} we get
\begin{equation}\label{Jlambda}
\Jcal(u,t)[\rho]= \lambda(t)\Ocal(u,t)[\rho] ,\qquad \Gcal(t)[\rho]= \lambda(t)\Ycal_t(\Uscr)[\rho];
\end{equation}
$\Lcal_0(t)$ is defined by \eqref{Lcalt}. Due to the properties \eqref{R=lambda}, \eqref{OtoJ}, $\Lcal(t)$, defined in \eqref{Lcalx}, can be written also as
\begin{equation}\label{lambdaLcal}
\Lcal(t)[\rho]=\Lcal_0(t)[\rho] +\lambda(t)\left(\Ycal_t(\Uscr)[ \rho ]-\rho\right).
\end{equation}
Then, the intensity $I(u,t)$ \eqref{def:pt} takes the form
\begin{equation}\label{fFI}
I(u,t)=\Tr\left\{\Jcal(u,t)\big[\rho(t_-)\big]\right\}= \lambda(t)V(u,t),\qquad V(u,t):=\Tr\{\Ocal(u,t)^*[\openone] \rho(t_-)\};
\end{equation}
note that we have
\begin{equation}\label{normposV}
V(u,t)\geq 0, \qquad \int_\Uscr V(u,t)\nu(\rmd u)= \Tr\{\Ycal_t(\Uscr)^*[\openone] \rho(t_-)\}=1.
\end{equation}

Then, the linear SME \eqref{linearSME++} and the nonlinear one \eqref{nonlinearSME} can be written as
\begin{subequations}\label{fFSMEs}
\begin{equation}\label{fFlin}
\rmd  \sigma(t)= \Lcal_0(t)[\sigma(t_-)]  \rmd t- \big(\lambda(t)-\nu(\Uscr)\big)\sigma(t_-)\rmd t
+ \int_{\Uscr} \Bigl( \lambda(t)\Ocal(u,t)[ \sigma(t_-) ]-\sigma(t_-) \Bigr) \Pi(\rmd u,\rmd t),
\end{equation}
\begin{equation}\label{fFnonlin}
\rmd \rho (t)= \Lcal_0(t)\big[\rho(t_-)\big]  \rmd t
+ \int_{\Eo_t} \Bigl( V(u,t)^{-1}\Ocal(u,t)[ \rho(t_-) ]-\rho(t_-) \Bigr) \Pi(\rmd u,\rmd t).
\end{equation}
\end{subequations}
Let us stress that $\lambda(t)$ does not appear explicitly in the nonlinear SME, but it determines the distribution of the times of the jumps and, so, it has effect on the law of $\Pi(\rmd u,\rmd t)$. Moreover, the nonlinearity appears only in the jump term through $ V(u,t)$, not in the evolution in between two jumps, determined by $\Lcal_0(t)$ alone. The operator $\Lcal_0(t)$ generates an usual Markovian dynamics, eventually it can be a pure unitary dynamics. The dynamics determined by $\Lcal_0(t)$ represents a ``free dynamics", while the jump operators give the interactions at random times.  The meaning of $V(u,t)$ is given by equation \eqref{condprob:uinA}, the conditional probability of having a jump of type $u\in A$, 
$A\subset \Uscr$ knowing that there is a jump at time $t$; by \eqref{fFI}, this conditional probability becomes 
\begin{equation} 
\Pbb\big[\Pi(A,[t_-,t_+])=1\big|N(t_+)-N(t_-)=1,\Fscr_{t_-}\big]=  \int_A V(u,t)\nu(\rmd u).
\end{equation}
Through the quantity $V(u,t)$ (defined in \eqref{fFI}), this probability depends on the conditional state $\rho(t_-)$ before the jump and on the structure of the jump operators 
$O_j(u,t) $ introduced in \eqref{OtoJ}, \eqref{Ocal}. 
 
Obviously, if in \eqref{OtoJ} we take $\sum_{j=1}^{d_1}O_j(u,t)^\dagger O_j(u,t)\propto\openone$, we loose the dependence on the quantum state in $V(u,t)$ too.

\subsection{The counting process}\label{sec:cppre}
Here we fix the structure of $\lambda(t)$ by asking $N(t)$ \eqref{def:N} to be a simple counting process with stopping times $T_m$ (the time at which we have the $m$th count) and intervals between two counts $S_m$ \cite{Prott04}:
\begin{equation}\label{N(t)intersp}
N(t)=\sum_{m\geq 1}\ind_{t\geq T_m}, \qquad S_m=T_m-T_{m-1}, \quad m\geq 1, \quad T_0=0, \qquad T_m=\sum_{i=1}^m S_i.
\end{equation}
Under $\Qbb$,  $N(t)$ is a Poisson process of intensity $\nu(\Uscr)$ and, so, the random variables $S_i$ are independent and identically distributed with exponential distribution. By the choice of $\lambda(t)$ we can make the waiting times to acquire any distribution \cite{DaVJ08}; for simplicity, we ask  that, under $\Pbb$, the intertimes $S_m$ acquire predetermined distributions, but that they remain independent, the case considered in \cite{DDG22} for instance.  

We take a sequence of distributions on $\Rbb^+$ having a density:
\begin{equation}\label{fiFi}
f_i(t)\geq 0, \qquad \int_0^{+\infty}f_i(s)\rmd s=1, \qquad F_i(t):=\int_0^tf_i(s)\rmd s, \qquad i\geq 1.
\end{equation}
Then, we assume the intertimes $S_i$ to be independent and distributed according these densities (under the physical probability $\Pbb$); so, we take
\begin{subequations}
\begin{equation}\label{nojumpp1}
\Pbb[N(t)=0]=\Pbb[T_1>t]= \Pbb[S_1>t]=1-F_1(t),
\end{equation}
\begin{multline}
\Pbb[N(t)=m] = \Pbb[T_m\leq t, \, T_{m+1}>t]= \Pbb\big[ S_1+\cdots+S_m\leq t,\; S_1+\cdots+S_{m+1}>t\big]
\\ {}=\int_0^{t}\rmd t_1\int_{t_1}^t\rmd t_{2} \cdots \int_{t_{m-1}}^t\rmd t_m p_t\big(t_m;\ldots;t_1\big), \qquad m\geq 1,
\end{multline}
\begin{equation}\label{Npdens}
p_t\big(t_m;\ldots;t_1\big)= \big(1-F_{m+1}(t-t_m)\big)\Big(\prod_{i=1}^m f_i(t_i-t_{i-1})\Big), \qquad t_0=0.
\end{equation}
\end{subequations}

To have a regular counting process with
\begin{equation}\label{N(t)to1}
\sum_{m=0}^{+\infty}\Pbb[N(t)=m] = 1,
\end{equation}
we assume that the following condition holds:
\begin{equation}\label{rmt}\begin{split}
&\lim_{m \to +\infty}r_m(t)=0,
\\
&r_m(t):=\int_0^{t}\rmd t_1\int_{t_1}^t\rmd t_{2} \cdots \int_{t_{m-1}}^t\rmd t_m \int_{t_m}^t\rmd t_{m+1}F_{m+2}(t-t_{m+1})\Big(\prod_{i=1}^{m +1} f_i(t_i-t_{i-1})\Big).
\end{split}
\end{equation}
\begin{proof}[{Computations}] By separating $1$ and $-F_{m+2}(t-t_{m+1})$ in the integral giving $\Pbb[N(t)=m+1]$, we get
\begin{equation}\label{partialsum}
\Pbb[N(t)=m+1] = r_{m-1}(t)-r_m(t), \qquad \sum_{k=0}^{m+1}\Pbb[N(t)=k]=1- r_m(t).
\end{equation}
This gives  $r_{m-1}(t)-r_m(t)\geq 0$ and, by  \eqref{rmt}, we have $r_m(t)\geq 0$; this gives
\[
0\leq r_m(t)\leq r_{m-1}(t)\leq 1.
\]
So, the sequence $r_m(t)$ is not increasing and bounded from below; then, it has a limit for $m\to +\infty$. In \eqref{rmt} we assume this limit to be zero and, by \eqref{partialsum}, we get \eqref{N(t)to1}.
\end{proof}

Finally, (under $\Pbb$) the stochastic intensity (or compensator) of $N(t)$ takes the form \cite[Sec.\ 14.1]{DaVJ08}
\begin{equation}\label{lambda+f+F}
\lambda(t)= \sum_{i=1}^{+\infty}\ind_{\{T_{i-1}\leq t <T_i\}}\,h_i(t-T_{i-1}), \qquad h_i(t)=\frac{f_i(t)}{1-F_i(t)}\equiv -\frac{\rmd \ }{\rmd t} \ln \bigl( 1-F_i(t)\bigr).
\end{equation}
The expression \eqref{lambda+f+F} is consistent with the heuristic interpretation of $\lambda(t)\rmd t$ as conditional probability of a jump in the infinitesimal time interval, see eq.\ \eqref{jumpprob}; the function $h_i(t)$ is known as \emph{hazard rate}. As $\lambda(t)$ does not depend on the quantum state, also the distribution of the jump times is independent from the state.

\subsection{Typical trajectories and exclusive probability densities} \label{sec:interspex}
Let us recall that the linear SME \eqref{fFlin} is given under the reference probability $\Qbb$, under which $\Pi(\rmd u,\rmd t) $ is a Poisson point process. As discussed in Sec.\ \ref{sec:nprob}, the physical probability $\Pbb$ is determined by the $\Qbb$-density $p(t)$ given in \eqref{def:pt}. Instead, the nonlinear SME \eqref{fFnonlin} gives the evolution of the normalized conditional state of the quantum system. By using the notion of ``typical trajectories" of Sec.\ \ref{sec:ttrajs}, we can get the structure of probabilities and quantum states in a more explicit form.

According to Assumption \ref{ass:HJL} all the operators involved in the SMEs can be random and can have a dependence on the past. Here we simplify the situation by the follwing assumption.
\begin{assumption} We assume that there is only a dependence on the last value of $N(t)$, without other extra-noise. So, in \eqref{Lcalt} and \eqref{Ocal} we take
\begin{equation}\label{LOY+N}
\Lcal_0(t) \to \Lcal_0\big(t;N(t)\big), \qquad \Ocal(u,t) \to \Ocal\big(u;N(t)\big), \qquad \Ycal_t(A) \to \Ycal_{N(t)}(A).
\end{equation}
When $N(t)$ is fixed, $\Lcal_0(t)$, $\Ocal(u,t)$, $\Ycal_t(A)$ are non random and $\Ocal(u,t)$ and $\Ycal_t(A)$ loose also the explicit time dependence.
\end{assumption}

Let us consider the no jump propagator of Definition \ref{def:Stt0}; due to \eqref{R=lambda} and \eqref{LOY+N}, equations
\eqref{def:propag} give, under the condition $N(t_0)=k$,
\begin{equation*}
\Scal(t,t_0)=\rme^{-\int_{t_0}^t \lambda(s)\rmd s}\Scal_0(t,t_0;k), \qquad \frac {\rmd \ }{\rmd t}\,\Scal_0(t,t_0;k)[\rho]=\Lcal_0(t;k) \circ\Scal_0(t,t_0;k)[\rho], \quad \Scal_0(t_0,t_0)=\id .
\end{equation*}
The propagator $\Scal_0(t,t_0;k)$ is CP and trace preserving. From \eqref{lambda+f+F} we have also
\[
\prod_{i=1}^m \exp\left\{-\int_{t_{i-1}}^{t_i}h_i(s)\rmd s\right\}h_i(t_i)=\prod_{i=1}^m f_i(t_i-t_{i-1}).
\]
By taking into account the trajectories $\tau(t;m)$ \eqref{trajec} and the expressions \eqref{decompsigma} for the solution of the linear SME, we get, from \eqref{lambda+f+F}, \eqref{Jlambda},
\begin{subequations}
\begin{equation}
\sigma(t_-)= \exp\left\{ \nu(\Uscr)t\right\}\big(1-F_1(t)\big) \Scal_0(t,0;0)[ \rho_0], \qquad \text{for} \ m=0,
\end{equation}
\begin{multline}
\sigma(t_-)= \big(1-F_{m+1}(t-t_m)\big)\exp\left\{ \nu(\Uscr)t\right\} \Big(\prod_{i=1}^m f_i(t_i-t_{i-1})\Big)\Scal_0(t,t_m;m)\circ \Ocal(u_m;m)
\\ {}\circ \Scal_0(t_m,t_{m-1};m-1)\circ
\cdots \circ \Ocal(u_2;2)\circ \Scal_0(t_2,t_{1};1)\circ \Ocal(u_1;1)\circ \Scal_0(t_1,0;0)[ \rho_0], \qquad \text{for} \ m\geq 1.
\end{multline}
\end{subequations}
Similarly, from \eqref{decomprho} and \eqref{fFI} we get the structure of the conditional states:
\begin{subequations}\label{rhointersp}
\begin{equation}
\rho (t_-)= \Scal_0(t,0)\big[ \rho_0\big], \qquad  \text{for} \ m=0, \qquad \qquad V(u_j,t_j)=\Tr\{\Ocal(u_j;j)^*[\openone] \rho({t_j}_-)\},
\end{equation}
\begin{multline}
\rho (t_-)= \Scal_0(t,t_m;m)\circ \frac{\Ocal(u_m;m)}{V(u_m,t_m)}\circ \Scal_0(t_m,t_{m-1};m-1)\circ
\\{} \circ\cdots \circ \frac{\Ocal(u_2;2)}{V(u_2,t_2)}\circ \Scal_0(t_2,t_{1};1)\circ \frac{\Ocal(u_1;1)}{V(u_1,t_1)}\circ \Scal_0(t_1,0;0)[ \rho_0], \qquad \text{for} \ m\geq 1.
\end{multline}
\end{subequations}
As already underlined after \eqref{fFnonlin},  the structure of $\rho(t)$ is independent from the probabilities \eqref{fiFi}; only the distribution of the jump times depends on these probabilities.

By using the trace preserving property of $\Scal_0(t,t_0;k)$, from \eqref{N0} we get the no jump probability \eqref{nojumpp1};
note that there is no dependence on the quantum system, as wanted. Moreover, by \eqref{genexclpd}, we get the structure of the exclusive probability densities:
\begin{multline}\label{exclpdintersp}
p_t\big(u_m,t_m;\ldots;u_1,t_1\big)= \big(1-F_{m+1}(t-t_m)\big)\Big(\prod_{i=1}^m f_i(t_i-t_{i-1})\Big)\Tr\{\Ocal(u_m;m)\circ
\\ {}\circ \Scal_0(t_m,t_{m-1};m-1)\circ
\cdots \circ \Ocal(u_2;2)\circ \Scal_0(t_2,t_{1};1)\circ \Ocal(u_1;1)\circ \Scal_0(t_1,0;0)[ \rho_0]\}.
\end{multline}
By integrating \eqref{exclpdintersp} on all possible types of jumps, by using the measure $\nu(\rmd u_1)\cdots \nu(\rmd u_m)$, we get the exclusive probability densities for the jumps of $N(t)$, which turn out to be the quantities \eqref{Npdens}. Indeed, only the operators 
$\int_\Uscr\nu(\rmd u_j)\Ocal(u_j;j)=\Ycal_{j}(\Uscr)$ and $\Scal_0(t_j,t_{j-1};j-1)$ appear in the formula, which are all trace preserving; this means that the term with the trace gives exactly a factor 1 and only the terms not involving the quantum system survive, as it must be by construction,

\subsection{The mean state}\label{sec:mstinter}
The results given in Sec.\ \ref{sec:interspex} allow to obtain a recursive structure for the mean state $\eta(t)$ \eqref{meanst}.
By using the exclusive probability densities \eqref{exclpdintersp} to take the mean of the expressions \eqref{rhointersp} and by summing up over the index $m$, we get the explicit expression of the mean state:
\begin{subequations}\label{meandintrsp}
\begin{equation}\label{eta(tilde)}
\eta(t)=[1-F_1(t)]\Scal_0(t,0;0)\big[ \rho_0\big]+\sum_{m=1}^{+\infty} \int_{0}^{t}\rmd t_{m} [1-F_{m+1}(t-t_m)]\Scal_0(t,t_{m};m)[\tilde \eta_m(t_m)],
\end{equation}
\begin{equation}
\tilde \eta_1(t_1):=f_1(t_1)  \Ycal_{1}(\Uscr)\circ \Scal_0(t_1,0;0)[ \rho_0],
\end{equation}
\begin{multline}
\tilde \eta_m(t_m)=\int_{0}^{t_m}\rmd t_{m-1}\int_0^{t_{m-1}}\rmd t_{m-2}\cdots\int_{0}^{t_2}\rmd t_1
 \Big(\prod_{i=1}^m f_i(t_i-t_{i-1})\Big) \Ycal_{m}(\Uscr)\circ \\ {}\circ \Scal_0(t_m,t_{m-1};m-1)
\circ\cdots \circ \Ycal_{2}(\Uscr)\circ \Scal_0(t_2,t_{1};1)\circ \Ycal_{1}(\Uscr)\circ \Scal_0(t_1,0;0)[ \rho_0]
\\ {} =\int_{0}^{t_m}\rmd t_{m-1} f_m(t_m-t_{m-1})\Ycal_{m}(\Uscr)\circ \Scal_0(t_m,t_{m-1};m-1)[\tilde \eta_{m-1}(t_{m-1})], \qquad m\geq 2.
\end{multline}
\end{subequations}
The map $\rho_0 \mapsto \eta(t)$ is CP and trace preserving by construction; however, these properties can be checked directly. Indeed, the maps $\Scal_0(t,s;k)$ and $\Ycal_{k}(\Uscr)$ are CP and trace preserving and the various real functions representing probabilities are non negative; then, the map is CP. By taking the trace all the operators give a factor 1; only the probability densities and the survival probabilities remain. Then, each term in the sum gives $\Pbb[N(t)=m]$ and the sum is 1 by \eqref{N(t)to1}; this gives the trace preserving property.

When $\Scal_0(t,t_0;k)$ is independent from $k$, the propagator $\Scal_0(t,t_0)$ represents the ``free" dynamics, without the random interactions. This dynamics is interrupted at random times by the quantum channels $\Ycal_k(\Uscr)$ (CP and trace preserving linear maps).

Often the interest for applications is on the mean dynamics \cite{Bud04,Vacc+11,GoHa06,ScLu07,DDG22,NaGu23,Vacc13,Vacc+21,Vacc20,Vacc14,Vacc16,MSV20,Bud13a,Bud13b,ClaB71,MSM18}. The mean state \eqref{meanst} satisfies equation \eqref{masteq} with generator \eqref{lambdaLcal}. However,  apart from the case of exponential distributions, $\lambda(t)$ is random  and, as already underlined in Sec.\ \ref{sec:mean+cond}, \eqref{masteq} is not a closed equation. When $\lambda(t)$ is non random, $\Pi(\bullet,\bullet)$ is a Poisson point process; without extra randomness in $\Lcal(t)$, \eqref{lambdaLcal} implies that the  operation of taking the mean disappears from \eqref{masteq}, which reduces to an usual Markovian quantum master equation. This is the case of \cite{ClaB71}, where the interarrival times are indeed independent and identically distributed with exponential distribution. This proposal was before the GKSL generator \cite{GKS76,L76}, and it opened a way to treat dissipative quantum systems in solid state physics and in statistical mechanics.

When the interest is in the mean dynamics, eqs.\ \eqref{meandintrsp} can be taken as starting point \cite{DDG22}. The non exponential case is of interest because it produces a non Markovian dynamics; for instance, in \cite{ScLu07} the effects of densities with heavy tails are studied.  Then, the computation of the mean state is attached by techniques based on Laplace transform \cite{Bud04,GoHa06,ScLu07,ClaB71,DDG22,Vacc16,Bud13a,Vacc14,Vacc13,MSV20,Vacc20}.
In some applications in quantum statistical mechanics, the jump operator is taken to be a ``reset" to an initial state, which means to have $\Ycal(\Uscr)[\rho]=\Tr\{\rho\}\rho_{\rm reset}$ \cite{MSM18,NaGu23,DDG22,DDG22b}; for instance, the interest is in constructing models with a nonthermal steady-state density matrix.

Another point of research is to find a closed evolution equation for the mean state. With some restrictions, also some cases of non exponential distributions can give rise to a closed evolution equation involving a memory kernel or to an integro-differential equation \cite{Bud04,Vacc+11,Vacc20,Vacc14,Vacc13,Vacc16,MSV20,Bud13a,Bud13b}. 
Structures in the class of \eqref{meandintrsp} have been introduced by analogies with classical stochastic processes, under the names of ``quantum renewal processes" and ``quantum semi-Markov dynamics".
Much effort has been devoted also to quantifying the degree of non-Markovianity of this class of quantum evolutions, mainly by using  quantifiers based on the trace distance \cite{Vacc+11,Vacc20,Vacc+21,Vacc14,Vacc13,Vacc16,MSV20,Bud13a,Bud13b,BreuerLP09}.

Note that in the present approach there are closed SDEs giving the evolution of the ``hybrid system" (the SMEs \eqref{fFSMEs} together with the stochastic intensity \eqref{lambda+f+F}, which determines $N(t)$), but not a closed equation for $\eta(t)$. This gives more freedom in the structure of $\eta(t)$; all the elements in \eqref{meandintrsp} can depend on $m$, the number of jumps up to the considered time: the no-jump propagator $\Scal_0(t,t_{m};m)$, the jump operator $\Ycal_{m}(\Uscr)$, the waiting-time density $f_m(t)$.

\subsection{Probabilities and conditional states}

The general model of equations \eqref{R=lambda}--\eqref{fFSMEs} can include the case of different quantum channels applied at random to the system of interest, or even the case of measurements at random times; this generalizes other proposals, such as the ``projective measurements at random times" introduced in \cite{DDG22}, see also \cite{VKB08}. To put in evidence this point, we consider the case of no extra-randomness in the various operators and the same waiting time distribution at every jump; so, in \eqref{fiFi} and \eqref{LOY+N} we take
\begin{gather*}
f_i(t) \ \to \ f(t), \qquad  \Ocal(u,t) \ \to \ \Ocal(u), \qquad \Ycal_t(A) \ \to \ \Ycal(A),  \qquad  \Scal_0(t,t_0;k) \  \to \  \Scal_0(t,t_0).
\end{gather*}

As an example, we consider the event of having a jump of type $u\in A$ in $(t_0,t)$, no other jumps in $(t_0,t)$,  and any number of jumps in $(0,t_0)$: \quad $t> t_0\geq 0$, \quad $A\in \Bscr(\Uscr)$,
\begin{equation}\label{Ettau}
\Eo(A,t_0,t):= \left\{N(t)-N(t_0)=1, \ \Pi\big(A; (t_0, t)\big)=1\right\} .
\end{equation}
By using the ``instrument" defined in Sec.\ \ref{sec:POVM},  we get from \eqref{Icalinstr}
\begin{subequations}\label{IcalAttau0}
\begin{multline}\label{IcalAttau}
\Ical_t\big(\Eo(A,t_0,t)\big)[\rho_0]=
\int_{t_0}^{t} \rmd s [1- F(t-s)]\Scal_0(t,s)\circ\Ycal(A)\circ\Scal_0(s,t_0)\circ
\\ {}\circ\Big\{f(s)\Scal_0(t_0,0)\big[ \rho_0\big] 
+\sum_{m=1}^{+\infty}\int_0^{t_0}\rmd t_m f(s-t_m)\Scal_0(t_0,t_m)\big[ \tilde \eta_m(t_m)\big]\Big\},
\end{multline}
\begin{equation}
\tilde \eta_1(t_1)=f(t_1)  \Ycal(\Uscr)\circ \Scal_0(t_1,0)[ \rho_0],
\end{equation}
\begin{equation}
\tilde \eta_m(t_m)=\int_{0}^{t_m}\rmd t_{m-1} f(t_m-t_{m-1})\Ycal(\Uscr)\circ \Scal_0(t_m,t_{m-1})[\tilde \eta_{m-1}(t_{m-1})]
, \qquad m\geq 2.
\end{equation}
\end{subequations}
Then, the probability of the event \eqref{Ettau} is
\begin{equation}\label{probEtA}
\Pbb\big[\Eo(A,t_0,t)\big|\rho_0\big]= \Tr\big\{\Ical_t\big(\Eo(A,t_0,t)\big)[\rho_0]\big\},
\end{equation}
and, by \eqref{Econd}, the related conditional state at time $t$ is
\begin{equation*}
\rho\big(t\big|\Eo(A,t_0,t),\,\rho_0\big)=\frac{\Ical_t\big(\Eo(A,t_0,t)\big)[\rho_0]}{\Tr\big\{\Ical_t\big(\Eo(A,t_0,t)\big)[\rho_0]\big\}}.
\end{equation*} 

In taking the trace of the quantity \eqref{IcalAttau}, the propagator $\Scal_0(t,s)$ disappears because it is trace preserving. Instead, $\Ycal(A)$ is not trace preserving and the quantum system does not disappear from the probabilities \eqref{probEtA}. When $A=\Uscr$ all the operators inside the trace are trace preserving and the contribution of the quantum system disappears from the probability \eqref{probEtA} for $A=\Uscr$: in the construction given in this section, the probabilities related to the counting process $N(t)$ alone are predetermined and depend only on the distributions \eqref{fiFi}. Indeed, we have 
$\Eo(\Uscr,t_0,t)=\{N(t)-N(t_0)=1\}$,
\begin{multline}\label{N=1}
\Pbb[N(t)-N(t_0)=1]=\Tr\{\Ical_t\big(\Eo(\Uscr,t_0,t)\big)[\rho_0]\}=\int_{t_0}^t \rmd s \left[1-F(t-s)\right]\biggl\{f(s)
\\ {}+ \sum_{m=1}^{+\infty} \int_0^{t_0}\rmd t_m f(s-t_m)\int_0^{t_m}\rmd t_{m-1} \cdots \int_0^{t_2}\rmd t_1 f(t_m-t_{m-1}) \cdots f(t_2-t_1)f(t_1)\biggr\}.
\end{multline}

The event given in \eqref{Ettau} concerns only the times $(t_0,t)$; the jumps in the time interval $(0,t_0)$ are not registered. However, the instrument \eqref{IcalAttau0} can not be written as some operator applied to the mean state $\eta(t_0)$ \eqref{meandintrsp}. The only exception is the exponential distribution: for 
$f(t)=\lambda \rme^{-\lambda t}$, eqs.\ \eqref{meandintrsp} and \eqref{IcalAttau0} give
\begin{equation}\label{It0texp}
\Ical_t\big(\Eo(A,t_0,t)\big)[\rho_0]=\int_{t_0}^t\rmd s \left[1-F(t-s)\right]f(s-t_0)\Scal_0(t,s)\circ\Ycal(A)\circ \Scal_0(s,t_0)[\eta(t_0)].
\end{equation}
The right hand side can be read as an instrument applied to the mean state at time $t_0$, but this rewriting holds only in the exponential case. This is a trace of the effects of memory in the non-exponential cases.

The choice \eqref{Ettau} is just an example of a possible event; in general, more interactions and different results $A_j$ can be involved. The set $A_j$ can be interpreted as the result of a measurement at random time or only as a label of a type of interaction with a reservoir. 

\subsubsection{Loss of memory and revivals}\label{sec:revivals}
The models discussed in this section have been used to study the loss of memory of the initial state. When the waiting time is exponentially distributed, there  is a monotonic loss of memory in the mean state, while in the other cases this monotonic behaviour can be violated and revivals can happen with information flow from the environment to the system \cite{Vacc+11,Vacc+21}. By defining a suitable quantifier based on the trace distance, a flexible witness of the revivals has been introduced, giving a measure of non-Markovianity of the mean dynamics \cite{Vacc+11,BreuerLP09,Vacc+21}. 
However, in our approach we have not only the mean state $\eta(t)$ \eqref{meandintrsp}, but also probabilities and instruments, which can be affected by the presence of revivals.

In our formulation we have probabilities for classical events, which depend also on the quantum dynamics, and it is of interest to see how these probabilities are influenced by possible back-flows of information from the environment to the quantum system. To compare the probabilities  obtained by starting with two different quantum states at time zero we shall use the Kolmogorov distance between classical probabilities. Indeed, the trace distance was seen as a quantum analogue of the Kolmogorov distance in quantifying the dependence on the initial state \cite{Vacc+11}. 

\begin{subequations}\label{Ecalpmi}
To have an analytically computable model, we take a dissipative dynamics suggested in \cite{Vacc20,Vacc+11,MSV20,Vacc+21}:
\begin{equation}\label{Epm}
\Scal(t,s)=\openone, \qquad \Hscr=\Cbb^2, \qquad \Ycal(\Uscr)=\Ecal_\pm= \Ecal_0+\Ecal_1, \quad \Ecal_0[\rho]:= \sigma_+\rho\sigma_-, \quad \Ecal_1[\rho]:=\sigma_-\rho\sigma_+,
\end{equation}
where $\sigma_-$ and $\sigma_+$ are the usual lowering and rising operators in $\Cbb^2$. 
Then, we decompose $\Ycal(\Uscr)$ in two possible results:
\begin{equation}
\Uscr =\{0,1\},  \qquad A_i=\{ i\} , \qquad  \Ycal(A_i):=\Ecal_i.
\end{equation}
\end{subequations}

Moreover, to have higher probabilities for the results $A_i$ we modify the event \eqref{Ettau} and we consider instead the two events
\begin{equation}\label{modifevent}
\Eo(A_i,t_0,t;\geq 1):= \left\{N(t)-N(t_0)\geq1, \ \text{and the first jump is of type} \ A_i\right\} .
\end{equation}
Similarly to the case of equations \eqref{IcalAttau0}, \eqref{probEtA}, the probabilities of the events \eqref{modifevent} for the model of equations \eqref{Ecalpmi} are given by
\begin{multline}\label{P_iN(t)}
\Pbb_i(t_0,t|\rho_0):=\Pbb[\Eo(A_i,t_0,t;\geq 1)|\rho_0]= 
\int_{t_0}^{t} \rmd s \Big\{f(s)\Tr\{\Ecal_i[ \rho_0] \}
\\ {}+\sum_{m=1}^{+\infty}\int_0^{t_0}\rmd t_m f(s-t_m)\Tr\left\{\Ecal_i\big[ \tilde \eta_m(t_m)\big]\right\}\Big\},
\end{multline}
\begin{equation}\label{eta,tilde}\begin{split}
\tilde \eta_1(t_1)&=f(t_1) \Ecal_\pm  [ \rho_0],
\\
\tilde \eta_{2m+1}(t_{2m+1})&=\int_{0}^{t_{2m+1}}\rmd t_{2m} \cdots \int_0^{t_2}\rmd t_1f(t_{2m+1}-t_{2m})\cdots f(t_1)\Ecal_\pm[\rho_0],\qquad m\geq 1,
\\
\tilde \eta_{2m}(t_{2m})&=\int_{0}^{t_{2m}}\rmd t_{2m-1} \cdots \int_0^{t_2}\rmd t_1f(t_{2m}-t_{2m-1})\cdots f(t_1)\Ecal_\pm^2[\rho_0], \qquad m\geq 1.
\end{split}
\end{equation}
By summing up the two probabilities \eqref{P_iN(t)}, we get
\begin{equation}\label{Ngeq1}
\Pbb_0(t_0,t|\rho_0)+\Pbb_1(t_0,t|\rho_0)=\Pbb[N(t)-N(t_0)\geq 1].
\end{equation}
This probability  does not depend on the quantum state and has the  expression \eqref{N=1} without the factor $\left[1-F(t-s)\right]$ in the integrand. Finally, also the mean state \eqref{eta(tilde)} can be expressed in terms of the quantities \eqref{eta,tilde}:
\begin{equation}\label{eta-t0}
\eta(t_0)=[1-F(t_0)]\rho_0+\sum_{m=1}^{+\infty} \int_{0}^{t_0}\rmd t_{m} [1-F(t_0-t_m)]\tilde \eta_m(t_m).
\end{equation}

Due the dependence of the quantities  \eqref{eta,tilde} on $\Ecal_\pm$, which  diagonalizes $\rho_0$  in the eigen-basis of $\sigma_z$, a choice of initial states which tends to maximize the differences among the probabilities \eqref{P_iN(t)} is to take 
\begin{equation}\label{rho12}
\rho^0=\begin{pmatrix} 0 & 0 \\ 0 & 1\end{pmatrix}, \qquad \rho^1=\begin{pmatrix} 1 & 0 \\ 0 & 0\end{pmatrix}. 
\end{equation}
Then, the Kolmogorov distance for the  probabilities \eqref{P_iN(t)} with the two initial states above is
\begin{equation}\label{Kolmdist}
D_K(t_0,t):=\frac 12 \abs{\Pbb_0(t_0,t|\rho^0)- \Pbb_0(t_0,t|\rho^1)}+\frac 12\abs{\Pbb_1(t_0,t|\rho^0)- \Pbb_1(t_0,t|\rho^1)}.
\end{equation}
Let us note that
for every initial state we have the three probabilities $\Pbb_i(t_0,t|\rho_0)$, $i=0,1$, and $\Pbb[N(t)-N(t_0)=0]$, but the last one does not depend on the initial state by construction and, so, it does not contribute to the Kolmogorov distance. To have a term of comparison with other approaches \cite{BreuerLP09,Vacc+11}, we compute also the trace distance between the mean states:
\begin{equation}\label{Trdist}
D\big(\eta_0(t_0),\eta_1(t_0)\big)=\frac 12\Tr\left\{\abs{\eta_0(t_0)- \eta_1(t_0)}\right\}, \qquad \eta_i(t_0):= \eta(t_0)\big|_{\rho_0=\rho^i}, \quad i=0,1.
\end{equation}

An analytically computable example showing many revivals is the case of an Erlang distribution with shape parameter $k=2$, as in \cite{Vacc+11}. Moreover, we compare it with the exponential case; the exponential and the Erlang distributions are in the class of the Gamma distributions. The results are collected in the two Remarks here below and the computations are given in the Appendix \ref{App:comput}.

In the following  the initial state $\rho_0$ is parameterised in the form: 
\begin{equation}\label{matrixrho0}
\rho_0=\begin{pmatrix} \rho_{11} &\rho_{10} \\ \rho_{01} &\rho_{00}\end{pmatrix}, \qquad \rho_{ii}\geq 0, \qquad \rho_{00} + \rho_{11}=1, \qquad \rho_{10}=\overline{\rho_{01}}, \qquad \rho_{00}\rho_{11}\geq \abs{\rho_{01}}^2.
\end{equation}

\begin{remark}[Exponential]\label{rem:exp} For the exponential distribution (or $\Gamma(1,\lambda)$) we have 
\[
f(t)=\lambda \rme^{-\lambda t}, \qquad 1-F(t)=\rme^{-\lambda t}, \qquad \lambda>0, \qquad t\geq 0. 
\]
Then, for the model defined by equations \eqref{Ecalpmi}, the mean state \eqref{eta-t0} and the probabilities \eqref{P_iN(t)}, \eqref{Ngeq1} take the expressions
\begin{equation}\label{etat0exp}
\eta(t_0)=\rme^{-\lambda t_0} \begin{pmatrix} 0 &\rho_{10} \\ \rho_{01} & 0\end{pmatrix} +\frac 12 \left(1+\rme^{-2\lambda t_0}\right)\begin{pmatrix} \rho_{11} & 0 \\ 0 &\rho_{00}\end{pmatrix} +\frac 12 \left(1-\rme^{-2\lambda t_0}\right)\begin{pmatrix} \rho_{00} & 0 \\ 0 &\rho_{11}\end{pmatrix},
\end{equation}
\begin{equation}\label{Piexp}
\Pbb_i(t_0,t|\rho_0)=\left(1- \rme^{-\lambda \left(t-t_0\right)}\right)\frac 12 
\left\{\left( 1+\rme^{-2\lambda t_0}\right) \rho_{ii}
+\left(1-\rme^{-2\lambda t_0}\right)  \left(1-\rho_{ii}\right)\right\},
\end{equation}
\begin{equation}\label{N(t)exp}
\Pbb[N(t)-N(t_0)\geq 1]=1-\rme^{-\lambda (t-t_0)}=F(t-t_0).
\end{equation}

By taking the initial states \eqref{rho12},  the  probabilities \eqref{Piexp} can be written as
\begin{equation}
\Pbb_i(t_0,t|\rho_j)=\left(1- \rme^{-\lambda \left(t-t_0\right)}\right)\frac{1+\left(2\delta_{ij}-1\right)\rme^{-2\lambda t_0}}2,\end{equation}
and the Kolmogorov distance \eqref{Kolmdist} between them becomes
\begin{equation}\label{Koldisexp}
D_K(t_0,t)=\left(1- \rme^{-\lambda \left(t-t_0\right)}\right)\rme^{-2\lambda t_0}.
\end{equation}
By computing the trace distance \eqref{Trdist}, it is easy to check that we have
\begin{subequations}
\begin{equation}\label{Kolexp+trdis}
D_K(t_0,t)=\Pbb[N(t)-N(t_0)\geq 1] \times D\big(\eta_0(t_0),\eta_1(t_0)\big), \end{equation} 
\begin{equation}\label{DKlarget} \lim_{t\to +\infty}D_K(t_0,t)= D\big(\eta_0(t_0),\eta_1(t_0)\big).
\end{equation}
\end{subequations}
\end{remark}

\begin{remark}[Erlang]\label{rem:Erl} For the Erlang distribution with shape parameter $k=2$ (or $\Gamma(2,\lambda)$) we have 
\[
f(t)=\lambda^2 t \rme^{-\lambda t}, \qquad 1-F(t)=(1+\lambda t)\rme^{-\lambda t},  \qquad \lambda>0, \qquad t\geq 0. 
\]
Then, for the model defined by equations \eqref{Ecalpmi}, the mean state \eqref{eta-t0} and the probabilities \eqref{P_iN(t)}, \eqref{Ngeq1} take the expressions
\begin{multline}\label{etat0Erl}
\eta(t_0)= (1+\lambda t_0)\rme^{-\lambda t_0}\begin{pmatrix} 0 &\rho_{10} \\ \rho_{01} & 0\end{pmatrix} 
+\frac 12 \left(1 +\rme^{-\lambda t_0}\bigl(\cos(\lambda t_0) + \sin(\lambda t_0)\bigr)\right) \begin{pmatrix} \rho_{11} & 0 \\ 0 &\rho_{00}\end{pmatrix}
\\ {} +\frac 12 \left(1 -\rme^{-\lambda t_0}\bigl(\cos(\lambda t_0) + \sin(\lambda t_0)\bigr)\right) \begin{pmatrix} \rho_{00} & 0 \\ 0 &\rho_{11}\end{pmatrix},
\end{multline}
\begin{multline}\label{PiErl}
\Pbb_i(t_0,t|\rho_0)=  \frac 12\left(1-\rme^{-\lambda(t-t_0)}\right) \left( 1+{\rme^{-\lambda t_0}}\left(\cos \lambda t_0+\sin \lambda t_0 \right) \left(2\rho_{ii}-1\right)\right)
\\ {}  -\frac{\lambda(t-t_0)} 2\rme^{-\lambda t} \left(\cosh \lambda t_0 +\left(2\rho_{ii}-1\right)\cos \lambda t_0 \right)  ,
\end{multline}
\begin{multline}\label{N(t)Erl}
\Pbb[N(t)-N(t_0)\geq 1]= 1-\rme^{-\lambda (t-t_0)}- \frac \lambda 2 \left(t-t_0\right)\rme^{-\lambda (t-t_0)}\left(1+ \rme^{-2\lambda t_0}\right) 
\\ {}=F(t-t_0) + \lambda \left(t-t_0\right) \rme^{-\lambda t} \sinh\lambda t_0.
\end{multline}
By taking the initial states \eqref{rho12},  the probabilities \eqref{PiErl} can be written as
\begin{multline}\label{2PiErl}
\Pbb_i(t_0,t|\rho_j)=\frac 12 \,\Pbb[N(t)-N(t_0)\geq 1] \\ {}+\frac {2\delta_{ij} -1}2\Bigl\{\left(1-\rme^{-\lambda(t-t_0)}\right) \rme^{-\lambda t_0}\left(\cos \lambda t_0+\sin \lambda t_0 \right)- \lambda (t-t_0)\rme^{-\lambda t} \cos \lambda t_0\Bigr\} ,
\end{multline}
and the Kolmogorov distance \eqref{Kolmdist} between them becomes
\begin{equation}\label{DKErl}
D_K(t_0,t)=\abs{\left(1-\rme^{-\lambda(t-t_0)}\right) \rme^{-\lambda t_0}\left(\cos \lambda t_0+\sin \lambda t_0 \right)- \lambda (t-t_0)\rme^{-\lambda t} \cos \lambda t_0} .
\end{equation}
Finally, the trace distance \eqref{Trdist} is given by
\begin{equation}\label{DtraceErl}
D\big(\eta_0(t_0),\eta_1(t_0)\big)=\rme^{-\lambda t_0}\abs{\cos \lambda t_0+\sin \lambda t_0 }.
\end{equation}
Note that we have
\begin{equation}\label{limDtraceErl}
\lim_{t\to +\infty}D_K(t_0,t)=D\big(\eta_0(t_0),\eta_1(t_0)\big).
\end{equation}
\end{remark}

The model based on the Erlang distribution has been introduced as a prototype of a system with non-Markovian dynamics showing infinitely many revivals in the trace distance \cite{Vacc+11,Vacc+21}.  By integrating the trace distance over the times of increasing and by optimizing over the initial states an effective index of non-Markovianity has been constructed \cite{BreuerLP09,Vacc+11}. We have reported the trace distance for the mean states in eq.\ \eqref{DtraceErl}. However, in our case, we have not only the mean state, but also the fact that the type of jumps are observed. 

From the results of Remark \ref{rem:Erl} we see that the revivals affect also the classical probabilities of the jumps \eqref{PiErl} and can be highlighted by using the Kolmogorov distance \eqref{DKErl}. As a test we have chosen a very simple event \eqref{modifevent}; however, it is possible to consider more involved situations, as the event \eqref{Ettau}, or even the probabilities for the typical trajectories. The simple choice \eqref{modifevent} is enough to see that the memory manifests itself not only in the revivals, but also in the fact that the Kolmogorov distance \eqref{DKErl} does not factorize in a term involving only the mean state at time $t_0$ and a term involving the observation which is registered in the time interval $(t_0,t)$. On the contrary, the exponential example of Remark \ref{rem:exp} does not present revivals, as expected, and shows lack of memory also in the factorization of the Kolmogorov distance \eqref{Koldisexp}, \eqref{Kolexp+trdis}. In both cases, for large $t$, the Kolmogorov distance reproduces the trace distance for the mean states at time $t_0$ \eqref{DKlarget}, \eqref{limDtraceErl}, but this is due to the simplicity of the model and to a suitable choice of the considered event.

\section{Quantum/classical random walks}\label{sec:walks} 
Jump type quantum trajectories can be related to random walks, in particular to ``open quantum walks" \cite{APS12,CarP15,CGMh22,Pell14,CLRB17}. Here we give a class of random walks which are also an example of a Markovian hybrid system: only the full quantum/classical system has a Markovian dynamics, not the quantum component taken alone, nor the classical component; for general Markovian hybrid systems see \cite{Bar24}. This class of models includes the so called ``non-Markovian generalized Lindblad-type master equation" or ``Lindblad rate equation"
\cite{Bud06,Breuer07,Dio14,Dio23,BarP10,SSP17,Pell14,Bud26,BrGM06}, ``continuous time open quantum walks" (CTOQW) \cite{Pell14,CLRB17,Bri18,BBPP19,Loeb23,Loe24a,Loe24b,Kang19} and the related ``continuous time quantum Markov chains" \cite{IglL24}.

\subsection{The classical component}\label{sec:grcl}
We take a discrete measure for $\nu(\rmd u)$, which means a discrete set of independent Poisson processes (under the reference probability $\Qbb$): $N_1(t)$, \ldots, $N_r(t)$ of intensities $\nu_1>0$, \ldots, $\nu_r>0$; eventually, we can have $r=+\infty$.  According to Assumption \ref{ass:Poiss}, the sum of the intensities is finite, even if $r=+\infty$. So, we have
\[
\Uscr =\{1,\ldots,r\}, \quad \nu(\rmd u)\to \nu_1,\ldots,\nu_r, \quad \nu:=\sum_{u=1}^r \nu_u<+\infty, \quad \Pi(\rmd u,\rmd t)\to N_1(t),\ldots, N_r(t).
\]
The counting process \eqref{def:N} now turns out to be 
\begin{equation}\label{Nsum}
N(t)=\sum_{u=1}^r N_u(t);
\end{equation}
under the reference probability $\Qbb$ it is a Poisson process of intensity $\nu$ .
Then, we take a graph $G$ in $\Rbb^s$ with $n$ vertices $x(j)\in \Rbb^s$ ($n$ can be $+\infty$); the set of the vertices is denoted by $V$. We take also  $r$  subsets of $V$ (not necessarily disjoint):
\begin{equation}\label{defF0}
V=\{x(1),\ldots,x(n)\}, \quad x(j)\in \Rbb^s, \qquad F_u\subset V, \qquad F_0:=V\setminus \bigcup_{u=1}^{r} F_u. 
\end{equation}
In \eqref{Xprocess} we take
\begin{equation*}
c(x)=0, \qquad g(x,u)=\begin{cases} y^u(x)-x & \text{if } \ x\in F_u, \\ 0 & \text{if } \ x\notin F_u ,\end{cases}
\qquad y^u(x)\in V.
\end{equation*}
Then, \eqref{Xprocess} becomes
\begin{equation}\label{Diodiscr}
\rmd X(t)= \sum_{u=1}^{r} \big(y^u\big(X(t_-)\big)-X(t_-)\big)\ind_{F_u}\big(X(t_-)\big)\rmd N_u(t),  \qquad X(0)\in V;
\end{equation}
the classical evolution \eqref{Diodiscr} gives rise to a random walk on $V$. 
The choice of the sets $F_u$ and of $y^u(x)$ determines the features of the walk. When $F_0\neq \emptyset$, the random walk of $X(t)$ stops if it enters in $F_0$. When $X(t_-)$ is contained in $F_u$ and there is a count in $N_u(t)$, the walker goes from $X(t_-)$ to $y^u\big(X(t_-)\big)$. When a certain vertex is contained in more sets $F_u$ the walker has more choices, whose probabilities depend on the intensities of the related processes $N_u(t)$.

\subsubsection{The Pauli rate equation}\label{sec:Pauli}

Under the reference law $\Qbb$, the probabilities at time $t$ of the classical process $X(t)$ follow a \emph{Pauli rate equation} \cite{Dio23}, also known as continuous time Markov chain on a graph \cite{BBPP19}.
Let us introduce the probability of being in the position $x(k)$,  under the reference probability $\Qbb$:
\begin{equation}\label{probsqj}
q_k(t)= \Qbb[X(t)=x(k)]=\Ebb_\Qbb\left[\delta_{X(t),x(k)}\right].
\end{equation}
Then,  the probabilities \eqref{probsqj} satisfy a \emph{Pauli rate equation} \cite[(14)]{Dio23}:
\begin{equation}\label{dqj}
\frac{\rmd q_k(t)}{\rmd t}= \sum_{l\in  V }\Bigl(T(k,l)q_l(t)-T(l,k)q_k(t)\Bigr), \qquad T(k,l)= \sum_{u= 1}^r \nu_u\ind_{F_u}\big(x(l)\big) \delta_{y^u(x(l)),x(k)}.
\end{equation}
Moreover, the following properties holds:
\[
T(k,l)=0 \ \text{ if} \ x(l)\in F_0, \qquad \qquad \sum_{l\in  V }T(l,k)=\sum_{u= 1}^r \nu_u\ind_{F_u}\big(x(k)\big) .
\]

\begin{proof}[{Computations.}] By differentiating the Kronecker delta in \eqref{probsqj} we get
\begin{equation}\label{ddelta}
\rmd\delta_{X(t),x(k)}= \sum_{u= 1}^r \ind_{F_u}\big(X(t_-)\big)\rmd N_u(t)\left(\delta_{y^u(X(t_-)),x(k)}-\delta_{X(t_-),x(k)}\right).
\end{equation}
Then, by taking the $\Qbb$-mean we have
\begin{multline*}
\frac{\rmd q_k(t)}{\rmd t}= \sum_{u= 1}^r\nu_u\Ebb_\Qbb\left[\ind_{F_u}\big(X(t_-)\big) \delta_{y^u(X(t_-)),x(k)}-\ind_{F_u}\big(x(k)\big)\delta_{X(t_-),x(k)}\right]
\\ {}= \sum_{u= 1}^r\nu_u\Ebb_\Qbb\left[\sum_i\ind_{F_u}\big(x(i)\big) \delta_{y^u(x(i)),x(k)}\delta_{x(i),X(t_-)}-\ind_{F_u}\big(x(k)\big)\delta_{X(t_-),x(k)}\right]
\\ {}= \sum_{u= 1}^r\nu_u\left[\sum_i\ind_{F_u}\big(x(i)\big) \delta_{y^u(x(i)),x(k)}q_i(t)-\ind_{F_u}\big(x(k)\big)q_k(t)\right].
\end{multline*}
This gives \eqref{dqj}; the properties of the quantities $T(k,l)$ follow by direct check.
\end{proof}

We can say that we have a directed edge from $l$ to $k$ of the graph $G $ when $T(l,k)\neq 0$. Note that the correspondence from \eqref{dqj} to \eqref{Diodiscr} is not unique.

\subsection{The quantum component}
Consistently with the assumption of discrete counting processes, the linear SME \eqref{linearSME++} becomes
\begin{equation}\label{kllinSME}
\rmd  \sigma(t)= \Lcal_0(t)[\sigma(t_-)]  \rmd t-\frac 12 \left\{R(t)-\nu\openone,\sigma(t_-)\right\}\rmd t
+ \sum_{u=1}^r\Bigl( \Jcal(u,t)[ \sigma(t_-) ]-\sigma(t_-) \Bigr) \rmd N_u(t).
\end{equation}
We take all the operators involved in \eqref{kllinSME} dependent on time only through the last value of the classical component \eqref{Diodiscr}. Moreover, to exclude the case of a jump in the quantum state without a jump in the classical component, we introduce the factor $\ind_{F_u}\big(X(t_-)\big)$ also in the SMEs. So, equations \eqref{def:R(t)} and \eqref{linearSME+} are taken with the following structure:
\begin{subequations}\label{Xdependence}
\begin{equation}\label{JcalX}
\Jcal(u,t)= \Jcal\big(X(t_-),u\big), \qquad R(t)= R\big(X(t_-)\big) , \qquad \Lcal_0(t)=\Lcal_0\big(X(t_-)\big), \qquad
\end{equation}
\begin{equation}\label{feedb}
\Jcal(x,u)[ \rho ]= \ind_{F_u}(x)\sum_{j=1}^{d_1}J_j(x,u) \rho J_j(x,u)^\dagger,
\end{equation}
\begin{equation}\label{klR}
R(x)=\sum_{u=1}^r  \nu_u\, \ind_{F_u}(x)\sum_{j=1}^{d_1}J_j(x,u)^\dagger J_j(x,u) , \qquad 
R_0(x)= \sum_{k=1}^{d_2}L_{k}(x)^\dagger L_{k}(x), 
\end{equation}
\begin{equation}\label{Lcal0discr}
\Lcal_0(x)[\rho]=-\rmi[H(x),\rho]-\frac 12 \left\{R_0(x),\rho\right\}+\sum_{k=1}^{d_2} L_{k}(x) \rho L_{k}(x)^\dagger, 
\end{equation}
\begin{equation}\label{Lcaldiscr}
\Lcal(t)= \Lcal\big(X(t_-)\big),  \qquad \Lcal(x)[\rho] =
\Lcal_0(x) [\rho]-\frac 12 \left\{R(x),a\right\}
+\sum_{u=1}^r \nu_u\Jcal(x,u)[ \rho],
\end{equation}
\end{subequations}
where $H(x), \; L_k(x),\; J_j(x,u), \;R(x), \; R_0(x) \in \Bscr(\Hscr)$.

\subsubsection{The physical probability and the non linear SME}
By introducing the probability density $p(t)$ of $\Pbb$ with respect to $\Qbb$ as in Proposition \ref{theor:sigmaprop}, we get the analogous of \eqref{def:pt}, \eqref{eq:pt}
\[
p(t)=\Tr\{ \sigma(t)\} , \qquad
\rmd p(t)=p(t_-)\sum_{u=1}^r \left(I(u,t) -1\right) \left(\rmd N_u(t)-\nu_u \rmd t\right),  \]
\begin{equation}\label{klIut}
I(u,t)=\Tr\left\{\Jcal(u,t)\big[\rho(t_-)\big]\right\}= \Tr\biggl\{ \ind_{F_u}\big(X(t_-)\big)\sum_{j=1}^{d_1}J_j\big(X(t_-),u\big)^\dagger J_j\big(X(t_-),u\big)\rho(t_-)\biggr\},
\end{equation}
$\nu_uI(u,t)$ is the stochastic intensity of $N_u(t)$ under the physical probability $\Pbb$; in general it depends on the  quantum state. 
In particular, the factor $\ind_{F_u}(x)$ introduced in \eqref{feedb} implies $I(u,t)=0$ if $X(t_-)\notin F_u$: 
\[
x\notin F_u \quad \Rightarrow \quad \Jcal(x,u)=0 \quad \Rightarrow \quad I_t(u)=0 \ \ \text{if} \ \ X(t_-)\notin F_u;
\]
in other terms, when $X(t_-)\notin F_u$, under the physical probability $\Pbb$, there is no jump of ``type $u$" nor in the quantum state, neither in $X(t)$.

Now, the non-linear SME \eqref{nonlinearSME} becomes
\begin{multline}\label{nlSMEwalk}
\rmd \rho (t)= \Lcal_0(t)\big[\rho(t_-)\big]  \rmd t-\frac 12 \left\{R(t)-\lambda(t)\openone,\,\rho (t_-)\right\} \rmd t
\\ {}+ \sum_{u: I(u,t)\neq 0 }\Bigl( I(u,t)^{-1}\Jcal(u,t)[ \rho(t_-) ]-\rho(t_-) \Bigr) \rmd N_u(t).
\end{multline}
Finally, under the law $\Pbb$ the stochastic intensity \eqref{lambda} of $N(t)$ \eqref{Nsum} is
\begin{equation}\label{lambdakl}
\lambda(t)=\sum_{u=1}^r I(u,t)\nu_u \equiv \Tr\left\{R(t) \rho(t_-)\right\}<+\infty.
\end{equation}

\subsection{The Lindblad rate equation}\label{sec:Linrate}

It is possible to construct an evolution equation for ``hybrid states", analogous to the classical evolution \eqref{dqj}.
We take as initial condition $X_0$ and $\rho_0=\rho_0(X_0)$, where
$X_0$ is an $\Fscr_0$-measurable, $V$-valued random variable and $\rho_0$ is a random statistical operator. By setting
\[
p_0^k:=\Qbb[X_0=x(k)]=\Ebb_\Qbb[\delta_{X_0,x(k)}], \qquad \rho_0^k:=\rho_0\big(x(k)\big),
\]
we can say that the initial condition is $\big(x(k),\; \rho_0^k\big)$ with probability $p_0^k$.

Similarly to \eqref{probsqj}, we define
\begin{equation}\label{def:reddes}
\eta_k(t):=\Ebb_\Qbb\big[\sigma(t)\delta_{X(t),x(k)}\big];
\end{equation}
in particular we have
\[
\eta_k(0)=\Ebb_\Qbb\big[\rho_0\delta_{X_0,x(k)}\big]=\Ebb_\Qbb\big[\rho_0^k\delta_{X_0,x(k)}\big]=p_0^k\rho_0^k.
\]
By the definition of the mean state $\eta(t)$ \eqref{meanst} we have immediately from \eqref{def:reddes}
\begin{equation}
\sum_{k=1}^n \eta_k(t)=\Ebb_\Qbb[\sigma(t)]=\eta(t).
\end{equation}

The vector of statistical operators $\eta_k(t)$ \eqref{def:reddes} satisfies a quantum analog of the Pauli master equation \eqref{dqj}. To simplify the notation we set
\begin{equation}\label{x(k)tok}
\Lcal_0^k=\Lcal_0\big(x(k)\big), \qquad R_k=R\big(x(k)\big), \qquad H_k=H\big(x(k)\big).
\end{equation}

\begin{proposition}
The vector of statistical operators $\eta_k(t)$ satisfies the Lindblad rate equation
\begin{equation}\label{Linre}
\frac {\rmd\ }{\rmd t}\,\eta_k(t)= \Lcal_0^k[\eta_k(t)]
+\sum_{l\in V}\left(\Tcal(k,l)[ \eta_l(t) ]- \frac 12 \bigl\{\Tcal(l,k)^*[\openone],\eta_k(t)\bigr\}\right), \qquad k\in V,
\end{equation}
\begin{equation}\label{tildeL0}
\Tcal(k,l)[\rho]=\sum_{u= 1}^r\nu_u  \delta_{y^u(x(l)),x(k)} \Jcal\big(x(l),u\big)[  \rho];
\end{equation}
$\Lcal_0(x)$ is given by \eqref{Lcal0discr}.
\end{proposition}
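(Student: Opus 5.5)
The plan mirrors the computation for the Pauli rate equation \eqref{dqj}, now applied to the product $\sigma(t)\delta_{X(t),x(k)}$. I would apply the Itô product rule for jump processes to this product and then take the $\Qbb$-mean.

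\textbf{Step 1: the two differentials.} The Kronecker delta satisfies \eqref{ddelta}. For $\sigma(t)$ I use the linear SME \eqref{kllinSME}, with the operators depending on $X(t_-)$ as in \eqref{Xdependence}. Both processes are pure-jump plus drift, and the $N_u$ have no common jumps under $\Qbb$. The product rule therefore reads
\[
\rmd\big(\sigma\delta\big)=\rmd\sigma(t)\,\delta_{X(t_-),x(k)}+\sigma(t_-)\,\rmd\delta_{X(t),x(k)}+\rmd\sigma(t)\,\rmd\delta_{X(t),x(k)},
\]
with $\rmd N_u\,\rmd N_{u'}=\delta_{uu'}\rmd N_u$.

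\textbf{Step 2: collecting the jump terms.} At a jump of $N_u$ the jump contributions combine into
\[
\Big(\Jcal(X(t_-),u)[\sigma(t_-)]\,\delta_{y^u(X(t_-)),x(k)}\ind_{F_u}(X(t_-))-\sigma(t_-)\delta_{X(t_-),x(k)}\Big)\rmd N_u(t).
\]
If $X(t_-)\notin F_u$, the position does not move and $\Jcal=0$ by \eqref{feedb}, so the bracket is $-\sigma(t_-)\delta_{X(t_-),x(k)}$. This is consistent, since $\sigma$ jumps to $0\cdot\sigma$.

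The factor $\ind_{F_u}$ in \eqref{feedb} makes this bookkeeping exact. It is the one point I expect to need care: one must check that the $u$-jump sends $\sigma$ to $\Jcal[\sigma]$ and the position to $y^u$ \emph{simultaneously}, and treat the case $X(t_-)\notin F_u$ correctly.

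\textbf{Step 3: taking the $\Qbb$-mean.} I replace $\rmd N_u$ by $\nu_u\rmd t$, since the compensated parts are $\Qbb$-martingales by Proposition \ref{theor:sigmaprop}. I also insert $\sum_i\delta_{X(t_-),x(i)}$, as in the proof of \eqref{dqj}, so that all operators become the deterministic ones of \eqref{x(k)tok}.

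The drift $\Lcal_0^k[\eta_k]-\frac12\{R_k-\nu\openone,\eta_k\}$ comes from $\rmd\sigma\,\delta$. The jump means give
\[
\sum_l\sum_u\nu_u\delta_{y^u(x(l)),x(k)}\Jcal(x(l),u)[\eta_l]-\nu\,\eta_k=\sum_l\Tcal(k,l)[\eta_l]-\nu\eta_k .
\]
The terms $\pm\nu\eta_k$ cancel.

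\textbf{Step 4: identifying $R_k$.} Finally I show that $R_k=\sum_l\Tcal(l,k)^*[\openone]$. Indeed,
\[
\sum_l\Tcal(l,k)^*[\openone]=\sum_u\nu_u\Big(\sum_l\delta_{y^u(x(k)),x(l)}\Big)\Jcal(x(k),u)^*[\openone]=\sum_u\nu_u\Jcal(x(k),u)^*[\openone],
\]
because $y^u(x(k))\in V$ occurs exactly once among the $x(l)$. By \eqref{klR} this sum equals $R_k$. This yields \eqref{Linre}.

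Interchanging the mean with the possibly infinite sums over $u$ and $l$ is justified by $\nu<+\infty$ and the uniform bounds of Assumption \ref{ass:HJL}.
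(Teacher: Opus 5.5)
Your proof is correct and follows the paper's route: the It\^o product rule for $\sigma(t)\delta_{X(t),x(k)}$, then the $\Qbb$-mean with $\rmd N_u\to\nu_u\rmd t$, then the identification $R_k=\sum_{l}\Tcal(l,k)^*[\openone]$ via \eqref{klR}. The differences are only bookkeeping. You merge all jump contributions into the total jump of the product and cancel $\pm\nu\eta_k$, whereas the paper writes the mean of $\delta\,\rmd\sigma$ as $\Lcal(x(k))[\eta_k]$ and cancels its jump part $\sum_u\nu_u\Jcal(x(k),u)[\eta_k]$; you also verify \eqref{TcaltoR} explicitly, which the paper only states.
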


Note that we have
\begin{equation}\label{TcaltoR}
\sum_{l\in V}\Tcal(l,k)^*[\openone]=R\big(x(k)\big)\equiv R_k,\qquad \qquad \Tcal(k,l)=0 \ \ \text{if} \ \ x(l)\in F_0.
\end{equation}
As for the Pauli rate equation, there are more choices for the SDEs \eqref{Diodiscr} and \eqref{kllinSME} which give the same Lindblad rate equation \eqref{Linre}.

\begin{proof}
We use equations \eqref{ddelta} and \eqref{kllinSME} to compute the differential of the product $\delta_{X(t),x(k)}\sigma(t) $; by the rules of stochastic calculus, we get
\begin{multline*}
\rmd \left(\sigma(t) \delta_{X(t),x(k)}\right)= \delta_{X(t_-),x(k)}\rmd \sigma(t)+\sigma(t_-)\rmd \delta_{X(t),x(k)}+ \left( \rmd \delta_{X(t),x(k)}\right)\rmd \sigma(t)
\\ {}=\delta_{X(t_-),x(k)}\rmd \sigma(t)+\sum_{u= 1}^r \ind_{F_u}\big(X(t_-)\big)\left(\delta_{y^u(X(t_-)),x(k)}-\delta_{X(t_-),x(k)}\right)  \sum_{u=1}^r \Jcal(u,t)[ \sigma(t_-) ] \rmd N_u(t).
\end{multline*}
By taking the $\Qbb$-mean we have $\rmd N_u(t) \to \nu_u\rmd t$ and we get
\begin{multline*}
\Ebb_\Qbb\big[\rmd\left(\delta_{X(t),x(k)} \sigma(t)\right)\big]=\Lcal\big(x(k)\big)[ \eta_k(t_-)]\rmd  t
\\ {}+\sum_{u= 1}^r \nu_u\sum_{l\in V} \ind_{F_u}\big(x(l)\big)\delta_{y^u(x(l)),x(k)} \Jcal\big(x(l),u\big)[\eta_l(t_-) ] \rmd t
-\sum_{u= 1}^r\nu_u \ind_{F_u}\big(x(k)\big)  \Jcal\big(x(k),u)[\eta_k(t_-) ] \rmd t
\\ {}=\Lcal\big(x(k)\big)[ \eta_k(t_-)]\rmd  t
+\sum_{u= 1}^r \nu_u\sum_{l\in V} \delta_{y^u(x(l)),x(k)} \Jcal\big(x(l),u\big)[\eta_l(t_-) ] \rmd t
-\sum_{u= 1}^r \nu_u \Jcal\big(x(k),u)[\eta_k(t_-) ] \rmd t;
\end{multline*}
the indicator functions disappear because they are already contained in $\Jcal\big(x,u\big)$ \eqref{feedb}. The last term in the expression above cancels the last term in $\Lcal(x)$ \eqref{Lcaldiscr} and we get
\[
\frac {\rmd\ }{\rmd t}\,\eta_k(t)=\Lcal_0\big(x(k)\big)[ \eta_k(t_-)]
+\sum_{u= 1}^r\nu_u \sum_{l\in V} \delta_{y^u(x(l)),x(k)} \Jcal\big(x(l),u\big)[ \eta_l(t_-) ]
-\frac 12 \left\{R\big(x(k)\big),\eta_k(t)\right\}.
\]
By using \eqref{klR} we get the final result.
\end{proof}

The solution of \eqref{Linre} can be written in a way similar to the Dyson expansion \eqref{vareta}:
\begin{subequations}\label{etakD}
\begin{equation}
\eta_k(t)= \sum_{m=0}^{+\infty}\Dcal_m^k(t)[\vec \eta(0)], \qquad \vec \eta(0)= \{\rho^l_0 p^l_0,\; l=1,\ldots,n\}, 
\end{equation}
\begin{equation*}
\begin{split}
&\Dcal_0^k(t)[\vec \eta(0)]=\Scal_k(t)[\rho_0^k]p_0^k,
\\ 
&\Dcal_1^k(t)[\vec \eta(0)]=\sum_{l_1=1}^n\int_0^t \rmd t_1\Scal_k(t-t_1)\circ \Tcal(k,l_1)\circ \Scal_{l_1}(t_1)[\rho_0^{l_1}]p_0^{l_1},
\\ 
&\Dcal_2^k(t)[\vec \eta(0)]=\sum_{l_1,l_2=1}^n\int_0^t \rmd t_2\int_0^{t_2} \rmd t_1\Scal_k(t-t_2)\circ \Tcal(k,l_2)\circ \Scal_{l_2}(t_2-t_1)\circ \Tcal(l_2,l_1) \circ \Scal_{l_1}(t_1)[\rho_0^{l_1}]p_0^{l_1},
\\
& \Dcal_m^k(t)[\vec \eta(0)]=\cdots, \end{split}
\end{equation*}
\begin{equation}\label{Scalk}
\Scal_k(t)=\exp\left\{\left(\Lcal_0^k -\frac 12 \left\{R_k, \bullet\right\}\right)t\right\}.
\end{equation}
\end{subequations}

Master equations with the structure \eqref{Linre} have been introduced and developed in
\cite{Bud06,Breuer07,BrGM06} under the names of 
``non-Markovian generalized Lindblad-type master equations" or ``Lindblad rate equations". The original aim was to give non-Markovian evolutions for quantum systems, mainly tailored for the case of structured baths. The first unravellings of these equations were developed in \cite{BarP10,Pell14,SSP17,Bri18}. While the dynamics of the quantum system alone is non-Markovian, the construction we followed in this section shows that the composed classical/quantum system follows a dynamics without memory. The dynamics of the type of \eqref{Linre} was explicitly recognized as an hybrid dynamics \cite{Dio14,Dio23,Bud26,BarP10}: the discrete index in \eqref{Linre} behaves essentially in a  classical way.

Also the CTOQWs are included in the models of this section. The notion of ``open quantum walks" is used to mean a walk on a graph, but with a  vector in a Hilbert space (the ``position space")  associated to every vertex \cite{APS12}: $x(i) \leftrightarrow |i\rangle\langle i|$. However, the walk is essentially classical and in the continuous time version it has been connected to the Lindblad rate equation \cite{Pell14} and to a walk on graph as in  \eqref{Diodiscr} \cite{BBPP19}. 
Many properties of CTOQWs have been studied, such as steady state \cite{CLRB17}, recurrence and transience \cite{BBPP19,Loeb23}, large deviations \cite{CGMh22,Bri18}.

\subsection{Typical trajectories}\label{sec:jumps}
Once again, we can give the recursive structure of the solutions of the evolution equations along the typical trajectories \eqref{trajec}. Let us start by the classical random walk \eqref{Diodiscr}, which is constant in time up to the first jump and between two jumps. At the jump $(u_j,t_j)$ we have 
\begin{equation}\label{Xjump}
X(t_{j-1})\quad \to \quad X(t_j)=\begin{cases} y^{u_j}\big( X(t_{j-1})\big) \quad &\text{if} \ X(t_{j-1})\in F_{u_j}
\\ X(t_{j-1}) & \text{if} \ X(t_{j-1})\notin F_{u_j},\end{cases} \qquad t_0=0,
\end{equation}
where we have taken into account that $X({t_j}_-)= X(t_{j-1})$ by the absence of evolution between two jumps.

Regarding the SMEs, recall that all the operators involved are independent of time in between two jumps, while at a jump all the operators can change together with $X(t_j)$ according to the dependence \eqref{Xdependence}. 

Firstly, we introduce the no jump propagator as in Definition \ref{def:Stt0}: in between two jumps
\begin{equation} \label{nojpropakl}
\Scal(t,t_{0})=\exp\left\{ (t-t_0)\Acal(t_{0})\right\}, \qquad \Acal(t_{0})[\rho]=\Lcal_0\big(X(t_0)\big)[\rho]-\frac 12 \left\{ R\big(X(t_0)\big),\,\rho\right\}.
\end{equation}
Then, the solution of the linear SME \eqref{kllinSME} has again the structure \eqref{decompsigma} with $\nu(\Uscr)\equiv \nu$; if the jump $(u_j,t_j)$ is such that $X(t_{j-1})\notin F_{u_j}$ the solution goes to zero because of the indicator function inside the jump operator \eqref{feedb}.
Analogously, the solution of the non-linear SME \eqref{nlSMEwalk} maintains the expression \eqref{decomprho}; the trajectories for which a term $I(u_j,t_j)$ vanishes have to be eliminated, because they have probability zero.  Then, we have
\begin{equation}\label{rhoklNOJ}
\rho(t) =\frac{\Scal(t,t_j)[\rho(t_j)]}{\Tr\{\Scal(t,t_j)[\rho(t_j)]\}}
\end{equation}
in between two jumps, while  at a jump (with $X(t_{j-1})\in F_{u_j}$) we have the transfomation
\begin{equation}\label{rhojump}
\rho({t_j}_-) \quad \to \quad \rho(t_j)= I(u_j,t_j)^{-1}\sum_{i=1}^{d_1}J_i\big(X(t_{j-1}),u_j\big)\rho({t_j}_-) J_i\big(X(t_{j-1}),u_j\big)^\dagger;
\end{equation}
$I(u_j,t_j)$ is the normalization factor and it coincides with the quantity \eqref{klIut}.
Finally, the exclusive probability densities have the structure \eqref{genexclpd}.

Summarizing, we can say that the jumps happen at random times determined by the stochastic intensities $\nu_uI(u,t)$ \eqref{klIut}. The process $N(t)$ \eqref{Nsum}, of intensity $\lambda(t)$ \eqref{lambdakl}, determines the times when a random jump happens; then, the jump is of type $u$ with probability $I(u,t)\nu_u/\lambda(t)$. In between two jumps $X(t)$ stays constant, while $\rho (t)$ evolves according to equation \eqref{rhoklNOJ}. At a jump, $X(t)$ changes according to equation \eqref{Xjump} and $\rho(t)$ according to equation \eqref{rhojump}.

The whole construction gives explicitly the hybrid dynamics as a classical random walk on a graph \eqref{Diodiscr} associated with a stochastic evolution of the quantum state \eqref{nlSMEwalk}. The dynamics of the classical component influences the dynamical behaviour of the quantum component by feedback \eqref{feedb} and the quantum dynamics determines the intensities \eqref{klIut} of the counting processes involved. The dynamics is without memory for the composed system: if $\big(X(t_0), \rho(t_0)\big)$ is known, then the process $\big(X(t), \rho(t)\big)$, $t\geq t_0$, is detemined by the SDEs \eqref{Diodiscr}, \eqref{nlSMEwalk}.

\subsubsection{Waiting times}\label{sec:WTwalks}
Due to the Markov property of the hybrid prcess, to have the no jump probability \eqref{N0|} we do not need the whole past, but only $\big(X(t_0), \rho(t_0)\big)$:
\begin{equation}\label{nojprobkl}
\Pbb[N(t)-N(t_0)=0|\Fscr_{t_0}]=  \Tr\left\{\Scal(t,t_0)[ \rho(t_0)]\right\} =\Tr\left\{\exp\left\{ (t-t_0)\Acal(t_{0})\right\}[ \rho(t_0)]\right\} ;
\end{equation}
$\Acal(t_{0})$ is defined in \eqref{nojpropakl}. The distribution of the waiting time of the next jump can be introduced as in \eqref{T=N(t)}.

By the introduction of the set $F_0$ in \eqref{defF0}, we have opened to the possibility of stopping the jumps: if $X(t_0)\in F_0$, then, by \eqref{Diodiscr}, $X(t)=X(t_0)$, $\forall t\geq t_0$. From \eqref{Xdependence} we get $R\big(X(t)\big)=0$, $t\geq t_0$,  and, so, also the intensity \eqref{lambdakl} of  $N(t)$ vanishes. Moreover, 
the no jump propagator becomes trace preserving  by \eqref{nojpropakl} and, from \eqref{nojprobkl}, we have $\Pbb[N(t)-N(t_0)=0|\Fscr_{t_0}]=1$, $\forall t\geq t_0$.

In the class of processes of Sec.\ \ref{sec:inters} the distribution of the waiting times was fixed by the classical component through eqs.\ \eqref{N(t)intersp}--\eqref{lambda+f+F}. On the contrary, in the case of the processes of this section, the permanence probabilities \eqref{nojprobkl} of the walker in a given vertex are determined only by the evolution \eqref{nojpropakl} of the quantum component. As the generator $\Acal(t_0)$ is constant in between two jumps, the survival probability \eqref{nojprobkl} is given by a trace decreasing evolution;  when $\Lcal_0$ has no dissipative contribution, the waiting time distribution is determined by a non-Hermitian effective Hamiltonian as in the case of Sec.\ \ref{sec:nonH}.

\subsection{Examples for a two-dimensional quantum component}\label{sec:2dex}
In the case of a two-dimensional quantum system, different examples of the evolution \eqref{Linre} have been developed and possible derivations from system/environment interactions have been discussed \cite{Breuer07,BrGM06,Bud06}. Here we give a very simple walk model for one of that examples.

We consider a single counting process $N(t)$; under the reference probability $\Qbb$ it is a Poisson process of intensity $\nu>0$. According to the notation of Sec.\ \ref{sec:grcl}, this means to take $r=1$, \ $\Uscr=\{1\}$.

For the classical component $X(t)$ we assume to have a graph with only two vertices:  \ $V=\{x(1),x(0)\}$. \ Moreover, we assume that the process $X(t)$ can go only from $x(1)$ to $x(0)$ and viceversa; this means to take $y(x)=\delta_{x,x(1)}x(0) + \delta_{x,x(0)}x(1)$ in \eqref{Diodiscr}.
It is easy to check that in the associated Pauli rate equation \eqref{dqj} we have $T(1,0)=T(0,1)=\nu$, \quad $T(1,1)=T(0,0)=0$.

For what concerns the quantum component, in \eqref{Xdependence} we take
\begin{equation}\label{Jcalexample}
\Jcal(x(0))[\rho]=g_0\Ecal_0[\rho]= g_0\sigma_+ \rho\sigma_-, \qquad \Jcal\big(x(1)\big)[\rho]= g_1\Ecal_1[\rho] = g_1\sigma_- \rho\sigma_+ , \qquad g_k>0,
\end{equation}
\begin{equation}\label{HkLk}
H_k\equiv H\big(x(k)\big)=H\big(x(k)\big)^\dagger, \qquad \Lcal_0^k\equiv \Lcal_0\big(x(k)\big)=-\rmi [H_k,\bullet].
\end{equation}
The operators $\Ecal_k$ have been introduced also in Sec.\ \ref{sec:revivals}, Eq.\ \eqref{Epm}; we need also the  related projections:
\begin{equation}\label{projections}
P_1= \sigma_+\sigma_-=\Ecal_1^*[\openone], \qquad P_0=\sigma_-\sigma_+=\Ecal_0^*[\openone].
\end{equation}
Then, from \eqref{tildeL0} and \eqref{TcaltoR} we get
\[
\Tcal(k,k)=0, \qquad \Tcal(1,0)=\nu \Jcal(x(0))=\nu_0\Ecal_0, \qquad \Tcal(0,1)=\nu \Jcal\big(x(1)\big)=\nu_1 \Ecal_1, \qquad  \nu_k:=\nu g_k ,
\]
\[
R_1\equiv R\big(x(1)\big)=\nu_1P_1, \qquad R_0\equiv R\big(x(0)\big)=\nu_0P_0.
\]

By setting 
\begin{equation}\label{Acalk}
\Acal_k[\rho]=-\rmi[H_k,\rho] -\frac 12 \{R_k,\rho\}=-\rmi H_{\rm eff}^{(k)}\rho +\rmi \rho H_{\rm eff}^{(k)\dagger}, \qquad H_{\rm eff}^{(k)}:=H_k -\frac \rmi 2 \,R_k,
\end{equation}
we get the Lindblad rate equation \eqref{Linre} associated to the present model:
\begin{equation}\label{Lre2}
\frac {\rmd\ }{\rmd t}\,\eta_1(t)= \Acal_1[\eta_1(t)]
+\nu_0\Ecal_0[ \eta_0(t) ], \qquad \frac {\rmd\ }{\rmd t}\,\eta_0(t)= \Acal_0[\eta_0(t)]
+\nu_1\Ecal_1[ \eta_1(t) ].
\end{equation}
In the case $H_k\propto \sigma_z$, this couple of equations has been proposed and studied in \cite{BrGM06,Breuer07} as a prototype of non-Markovian evolution.

\subsubsection{Typical trajectories and waiting times}\label{sec:ttr+wt}
Let us consider the typical trajectories of Secs.\ \ref{sec:ttrajs} and \ref{sec:jumps}, here restrictyed to the particular case of a single counting process $N(t)$. At a count $\rmd N(t)=1$, according to the choice made at the beginning of Sec.\ \ref{sec:2dex}, the classical walk jumps in the following way:
\begin{subequations}\label{atajump2}
\begin{equation}
X(t_-) \mapsto X(t)=\begin{cases} x(1) \ & \text{if } X(t_-)= x(0),
\\ x(0) \ & \text{if } X(t_-)= x(1).\end{cases}
\end{equation}
Moreover, by \eqref{rhojump}, \eqref{Jcalexample}, \eqref{projections}, the behaviour of the conditional state at a jump is given by
\begin{equation}
\rho({t}_-) \quad \mapsto \quad \rho(t)= \begin{cases} \frac{\Ecal_0[\rho(t_-)]}{\Tr\{P_0\rho(t_-)\}} =P_1 \ & \text{if } X(t_-)= x(0),
\\  \frac{\Ecal_1[\rho(t_-)]}{\Tr\{P_1\rho(t_-)\}}=P_0 \ & \text{if } X(t_-)= x(1).\end{cases}
\end{equation}
\end{subequations}

Instead, if at time $t_0$ the state of the hybrid system is $\big( X(t_0),\,\rho(t_0)\big)$, with $X(t_0)=x(k)$, and there is no jump up to time $t$, \eqref{nojpropakl}, \eqref{rhoklNOJ}, \eqref{Acalk}, we have
\begin{equation}\label{njevol2}
\rho(t)=\frac{\Scal_k(t-t_0)[\rho(t_0)]}{\Tr\{\Scal_k(t-t_0)[\rho(t_0)]\}},
\qquad X(t)=x(k),
\end{equation}
\begin{equation}\label{nojpropag2}
\Scal_k(t-t_0):=\exp\left\{\Acal_k(t-t_0)\right\}.
\end{equation}

Now, the stochastic intensity \eqref{lambdakl} of $N(t)$ becomes
\begin{equation}\label{lambdak2}
\lambda(t)\equiv \lambda_k(t)=\nu\Tr\left\{\Jcal(x(k))\big[\rho(t_-)\big]\right\}=\nu_k\Tr\left\{P_k\rho(t_-)\right\}, \qquad \text{if} \ X(t_-)=x(k).
\end{equation}
Finally, the no jump probability \eqref{nojprobkl} is given by
\begin{equation}\label{wtime2}
\Pbb[N(t)-N(t_0)=0|\rho(t_0), x(k)]=  \Tr\left\{\Scal_k(t-t_0)[ \rho(t_0)]\right\} .
\end{equation}

\subsubsection{Dependence on the Hamiltonian contribution}\label{sec:Hdepend}
For the model introduced in this section, the behaviour at a jump is completely fixed \eqref{atajump2}, but the waiting time distribution \eqref{wtime2} and the quantum evolution \eqref{njevol2} depend on the no-jump propagator \eqref{nojpropag2} and, so, they depend on the choice of the Hamiltonian \eqref{HkLk}. As in Sec.\ \ref{sec:C2ex}, we have again a non-Hermitian evolution in $\Cbb^2$ and the choice of the Hamiltonian \eqref{HkLk} has a strong influence on the behaviour of the system.

\paragraph{(a) $H_k\propto \sigma_z$.} The first choice is to take the Hamiltonian proportional to $\sigma_z$:
\begin{equation}
H_k=\frac{\omega_k} 2 \, \sigma_z, \qquad \omega_k\in \Rbb, \quad k=0,\,1.
\end{equation}
Then, the selfadjoint part of the effective Hamiltonian \eqref{Acalk} commutes with the anti-selfadjoint contribution, and the no-jump propagator \eqref{nojpropag2} has the expression
\[
\Scal_0(t)[\rho]=\left(\rme^{-\rmi \omega_0 t/2}P_1+\rme^{(\rmi \omega_0-\nu_0) t/2}P_0\right)\rho\left(\rme^{\rmi \omega_0 t/2}P_1+\rme^{(-\rmi \omega_0-\nu_0) t/2}P_0\right),
\]
\[
\Scal_1(t)[\rho]=\left(\rme^{(-\rmi \omega_1-\nu_1) t/2}P_1+\rme^{\rmi \omega_1t/2}P_0\right)\rho\left(\rme^{(\rmi \omega_1-\nu_1) t/2}P_1+\rme^{-\rmi \omega_1 t/2}P_0\right).
\]
So, the no-jump probability \eqref{wtime2} turns out to be
\[
\Pbb[N(t)-N(t_0)=0|\rho(t_0), x(k)]=\Tr\{P_k\rho(t_0)\}\rme^{-\nu_kt} +\Tr\{(\openone -P_k)\rho(t_0)\}.
\]
This is a second example in which the waiting time of the next jump can be infinity with a non vanishing probability, as discussed after eqs.\ \eqref{Twtj}. Note that the initial state $X(0)=x(0)$ and $\rho(0)=P_1$ turns out to be a stationary state, because it is not possible to have jumps; similarly, also $X(0)=x(1)$ and $\rho(0)=P_0$ is a stationary state. 
Analogously, the Lindblad rate equations \eqref{Lre2} have not a unique stationary state, as was found in \cite{BrGM06,Breuer07}.

\paragraph{(b) $H_k\propto \sigma_x$.}  The second choice for the Hamiltonian is to take it proportional to a spin component orthogonal to $\sigma_z$, say
\begin{equation}
H_k=\frac{\omega_k} 2 \, \sigma_x \qquad \omega_k\in \Rbb, \quad \omega_k\neq 0, \quad k=0,\,1.
\end{equation}
Then, the effective n-H Hamiltonian defined in \eqref{Acalk} becomes
\begin{equation}\label{HeffK2}
H_{\rm eff}^{(k)}=-\frac{\rmi \nu_k}4\,\openone +K_k, \qquad K_k=\frac{\omega_k}2\,\sigma_x-\frac{\rmi \tilde \nu_k}4\,\sigma_z, \qquad \tilde \nu_1=\nu_1, \quad \tilde \nu_0=-\nu_0.
\end{equation}
The evolution generated by this effective Hamiltonian  can be written as
\[
\rme^{-\rmi H_{\rm eff}^{(k)}t}=\rme^{-\nu_k t/4} \rme^{-\rmi K_kt}
\]
and we can get an explicit expression for it by the techniques developed in Sec.\ \ref{sec:C2ex}. Firstly, we have
\begin{equation}
K_k^2=\varkappa_k\openone, \qquad \varkappa_k=\frac 1 4 \left(\omega_k^2-\frac{\nu_k^2}4\right).
\end{equation}

\begin{remark}\label{remS6}
According to the value of $\varkappa_k$ we have three cases:
\begin{enumerate}
\item $\abs{\omega_k}> \nu_k/2$: $K_k$ has two real eigenvalues \ $\pm z_k$, \ $z_k=\sqrt{\varkappa_k}>0$, and from \eqref{evolgpm} we have
\begin{equation}\label{K1)}
\rme^{-\rmi K_kt}= \openone \cos(z_k t)-\frac{\rmi K_k}{z_k}\,\sin(z_k t);
\end{equation}
\item $\abs{\omega_k}= \nu_k/2$: there is an EP in $K_k$ as in Sec.\ \ref{sec:EP} and from \eqref{EPevolution} we have
\[
\rme^{-\rmi K_kt}= \openone -\rmi K_k t;
\]
\item $0< \abs{\omega_k}< \nu_k/2$: $K_k$ has two imaginary eigenvalues \ $\pm \rmi a_k$, \ $0<a_k=\sqrt{-\varkappa_k}< \nu_k/4$, and from \eqref{evolgpm} we have
\[
\rme^{-\rmi K_kt}= \openone \cosh(a_k t)-\frac{\rmi K_k}{a_k}\,\sinh(a_k t).
\]
\end{enumerate}
\end{remark}

In all these three cases we have exponential decay and, so, 
\[
\lim_{t\to +\infty}\Pbb[N(t)-N(t_0)=0|\rho(t_0), x(k)]=0,
\]
which means that for every choice of the initial state $(X(0),\,\rho_0)$ we have infinitely many jumps with probability 1.
Having the explicit expression of the evolution generated by $H_{\rm eff}^{(k)}$ it is immediate to compute the propagator $\Scal_k(t)$ \eqref{nojpropakl}, the waiting time distribution \eqref{Twtj}, \eqref{nojprobkl}, the intensity $\lambda(t)$ \eqref{lambdakl}, \eqref{lambdak2} of the counting process, the evolution \eqref{njevol2} of the conditional state in between two jumps. A large variety of behaviours can be obtained, depending on the choice of $H_k$. 

Let assume that at time  $t_0$ there is a jump of the type $x(0)\mapsto x(1)$;  by \eqref{atajump2}, the state at time $t_0$, after the jump is $X(t_0)=x(1)$ and $ \rho(t_0)=P_1$. Before the next jump, the evolution is determined by the no-jump propagator $\Scal_1(t)$ \eqref{nojpropakl}: 
\begin{equation}\label{S1psi}
\Scal_1(s)[P_1]=\rme^{-\nu_1s/2}|\psi(s)\rangle \langle \psi(s)|,\qquad \psi(s)=\rme^{-\rmi K_1s}\begin{pmatrix} 1 \\ 0\end{pmatrix}.
\end{equation}
Then, the conditional state $\rho(t)$ \eqref{njevol2} up to the next jump, the intensity of counts $\lambda_1(t)$ \eqref{lambdak2} and the no-jump probability \eqref{wtime2} are  given by
\begin{subequations}\label{proptonorm}
\begin{equation}
\rho(t)=\frac{|\psi(t-t_0)\rangle \langle \psi(t-t_0)|}{\norm{\psi(t-t_0)}^2} , 
\qquad \lambda_1(t)=   \nu_1\Tr\{P_1\rho(t)\}=\nu_1\, \frac{\langle \psi(t-t_0)|P_1\psi(t-t_0)\rangle}{\norm{\psi(t-t_0)}^2},
\end{equation}
\begin{equation}
\Pbb[N(t)-N(t_0)=0|P_1,\, x(1)]=\rme^{-\nu_1(t-t_0)/2}\norm{\psi(t-t_0)}^2.
\end{equation}
We can give also the probability distribution of the waiting time $T_1$ of the next jump; by taking eqs.\ \eqref{Twtj} with initial state $P_1$ we get, for $s\geq 0$,
\begin{equation}\label{probdens2}
\Pbb[T_1\leq s]=1-\rme^{-\nu_1s/2}\norm{\psi(s)}^2, \qquad p_{T_1}(s)=\frac{\rmd \ }{\rmd s}\,\Pbb[T_1\leq s]=\nu_1 \rme^{-\nu_1s/2}\langle \psi(s)|P_1\psi(s)\rangle.
\end{equation}
\end{subequations}

The expressions above can be immediately computed in all the three cases of Remark \ref{remS6}. To have a significant example, showing revivals similar to the case of Remark \ref{rem:Erl}, 
we assume the parameters $\omega_1$ and $\nu_1$ to satisfy the inequality of point 1 above: $\abs{\omega_1}> \nu_1/2$. Then, from \eqref{HeffK2}, \eqref{K1)}, \eqref{S1psi} we get
\[
\psi(s)=\begin{pmatrix} \cos z_1s -\frac {\nu_1}{4z_1}\,\sin z_1 s \\ -\frac{\rmi \omega_1 }{2z_1}\, \sin z_1s\end{pmatrix}, \qquad \langle \psi(t-t_0)|P_1\psi(t-t_0)\rangle=\left(\cos z_1(t-t_0) -\frac {\nu_1}{4z_1}\,\sin z_1 (t-t_0) \right)^2,
\]
\[
\norm{\psi(t-t_0)}^2=\left(\cos z_1(t-t_0) -\frac {\nu_1}{4z_1}\,\sin z_1 (t-t_0) \right)^2+ \left(\frac{\omega_1}{2z_1}\,\sin z_1(t-t_0)\right)^2,
\]
and the explicit expressions for all the quantities in \eqref{proptonorm} follow. All these quantities have a strongly oscillating behaviour and, in particular, the intensity $\lambda_1(t)$ has infinitely many zeros at the times $t^*$ satisfying
\[
\tan \big(z_1(t^*-t_0)\big)=\frac{4z_1}{\nu_1}\equiv \sqrt{\frac{4{\omega_1}^2}{\nu_1}-1};
\]
correspondingly, the probability density \eqref {probdens2} vanishes in the points $s^*=t^*-t_0$. Note that at the same times we have $\rho(t^*)=P_0$. 

In this simple model, after a of type $x(0)\mapsto x(1)$, a jump of type $x(1)\mapsto x(0)$ follows; now, by \eqref{atajump2}, the quantum state just after the jump is $P_0$. The behaviour of the system up to the first jump can be obtained as in the previous case. So, we have an alternate sequence of two types of jumps with two waiting times distributions with densities $p_{T_k}(s)$, $k=0,1$; the situation is very reminiscent of the models of Sec.\ \ref{sec:inters}, but now the waiting time distributions are not predetermined, but depend on the dynamics of the quantum component.

\section{Conclusions}\label{sec:end} 
By using the idea of quantum/classical hybrid systems, we have given a general formulation of quantum stochastic master equations of jump type (Sec.\ \ref{sec:sigmat}). This formulation allows to include not only detection in continuous time, feedback, some forms of memory, but also other fields of research: non-Hermitian evolutions (Sec.\ \ref{sec:nonH}), piecewise evolutions interspersed by jump operators (Sec.\ \ref{sec:inters}), Lindblad rate equation and continuous time open quantum walks (Sec.\ \ref{sec:walks}). The key ingredients are the notions of typical trajectories and of waiting time of the next jump (Sec.\ \ref{sec:ttrajs}). 

The notion of typical trajectory allows to introduce the ``exclusive probability densities" (Sec.\ \ref{sec:exclpd}), from which all the probability distributions involved in the theory can be obtained. Moreover, typical trajectories are involved also in the recursive construction of the solution of the quantum dynamics (Secs.\ \ref{sec:solvSME}, \ref{sec:interspex}, \ref{sec:mstinter}).

The notion of waiting time of the next jump is typical of a formulation based on counts (Sec.\ \ref{sec:nojprob}) and we have used it as a unifying feature in all the applications. As for all probabilities involved in the theory, the waiting time distribution can be obtained by using positive operator-valued measures, as required in the general formulation of a quantum theory (Sec.\ \ref{sec:POVM}). These distributions appear as survival probabilities in the theory of non-Hermitian evolutions (Sec.\ \ref{sec:survpr}) and as starting point in the theory of a piecewise dynamics (Sec.\ \ref{sec:cppre}). In the final application to quantum/classical walks, we show how very general waiting time distributions can be obtained (Secs.\ \ref{sec:WTwalks}, \ref{sec:2dex}), despite the Markovian character of the composed hybrid system.

\appendix  

\section{Computations of the probabilities of Remarks \ref{rem:exp} and \ref{rem:Erl}}\label{App:comput}

For the model of Section \ref{sec:revivals}, the following relations hold:
\begin{equation*}
\Ecal_\pm^{2m}=\Ecal_\pm^2, \quad \Ecal_\pm^{2m+1}=\Ecal_\pm^2, \quad m\geq 1, \qquad \Ecal_0\circ \Ecal_\pm= \Ecal_0\circ \Ecal_1, \quad \Ecal_1\circ \Ecal_\pm= \Ecal_1\circ \Ecal_0, \quad \Ecal_i\circ \Ecal_\pm^2= \Ecal_i.
\end{equation*}
By using \eqref{matrixrho0}, we have
\[
\Ecal_0[\rho_0]=\Ecal_0\circ \Ecal_\pm^2[\rho_0]=\begin{pmatrix} \rho_{00} &0 \\ 0 &0\end{pmatrix}, \qquad \Ecal_1[\rho_0]=\Ecal_1\circ \Ecal_\pm^2[\rho_0]=\begin{pmatrix}0&0 \\ 0 &  \rho_{11} \end{pmatrix},
\]
\[
\Ecal_\pm[\rho_0]=\begin{pmatrix} \rho_{00}&0 \\ 0 &  \rho_{11} \end{pmatrix}, \qquad \Ecal_0\circ \Ecal_\pm[\rho_0]=\begin{pmatrix}\rho_{11}&0 \\ 0 & 0 \end{pmatrix},  \qquad \Ecal_1\circ \Ecal_\pm[\rho_0]=\begin{pmatrix}0&0 \\ 0 & \rho_{00} \end{pmatrix},
\]
\[
\Tr\{\Ecal_i[\rho_0]\}=\Tr\{\Ecal_i\circ \Ecal_\pm^2[\rho_0]\}= \rho_{ii} , \qquad \Tr\{\Ecal_i\circ \Ecal_\pm[\rho_0]\}= 1-\rho_{ii}.
\]

\subsection{The exponential case}

From \eqref{eta,tilde} we have
\[
\tilde \eta_{2m+1}(t_{2m+1})=\lambda\rme^{-\lambda t_{2m+1}}\, \frac{(\lambda t_{2m+1})^{2m}}{(2m)!}\,\Ecal_\pm[\rho_0],\qquad m\geq 0, \]
\[
\tilde \eta_{2m}(t_{2m})=\lambda\rme^{-\lambda t_{2m}}\, \frac{(\lambda t_{2m})^{2m-1}}{(2m-1)!}\,\Ecal_\pm^2[\rho_0],\qquad m\geq 1;
\]
then, from \eqref{eta-t0} we get \eqref{etat0exp}. 
By this result and equations \eqref{P_iN(t)}, we get the probabilities \eqref{Piexp} and \eqref {N(t)exp}. Equations \eqref{Koldisexp} follow by direct computations from \eqref{Piexp} and \eqref{Kolmdist}.

\subsection{The Erlang case}
From \eqref{eta,tilde} and \eqref{eta-t0} we have
\[
\tilde \eta_{2m+1}(t_{2m+1})=\lambda^{4m+2}\rme^{-\lambda t_{2m+1}}\, \frac{t_{2m+1}^{4m+1}}{(4m+1)!}\,\Ecal_\pm[\rho_0],\qquad m\geq 0,
\]
\[
\tilde \eta_{2m}(t_{2m})=\lambda^{4m}\rme^{-\lambda t_{2m}}\, \frac{t_{2m}^{4m-1}}{(4m-1)!}\,\Ecal_\pm^2[\rho_0],\qquad m\geq 1,
\]
\begin{multline*}
\eta(t_0)=(1+\lambda t_0)\rme^{-\lambda t_0}(\rho_0-\Ecal_\pm^2[\rho_0])+\sum_{m=0}^{+\infty} \rme^{-\lambda t_0}\left(\frac{(\lambda t_0)^{4m}}{(4m)!}+\frac{(\lambda t_0)^{4m+1}} {(4m+1)!}\right)\Ecal_\pm^2[\rho_0]
\\ {}+ \sum_{m=0}^{+\infty} \rme^{-\lambda t_0}\left(\frac{(\lambda t_0)^{4m+2}}{(4m+2)!}+\frac{(\lambda t_0)^{4m+3}} {(4m+3)!}\right)\Ecal_\pm[\rho_0].
\end{multline*}
By using 
\begin{equation}\label{series}\begin{split}
\frac 12 \left(\cosh x +\cos x \right)= \sum_{m=0}^{+\infty} \frac {x^{4m}}{(4m)!}, \qquad &\frac 12 \left(\cosh x -\cos x \right)= \sum_{m=0}^{+\infty} \frac {x^{4m+2}}{(4m+2)!},
\\
\frac 12 \left(\sinh x +\sin x \right)= \sum_{m=0}^{+\infty} \frac {x^{4m+1}}{(4m+1)!}, \qquad &\frac 12 \left(\sinh x -\sin x \right)= \sum_{m=0}^{+\infty} \frac {x^{4m+3}}{(4m+3)!},
\end{split}\end{equation}
we get \eqref{etat0Erl}.

From \eqref{P_iN(t)} we obtain
\begin{multline*}
\Pbb_i(t_0,t|\rho_0)= 
\int_{t_0}^{t} \rmd s \lambda^2 \rme^{-\lambda s}\Big\{ s \rho_{ii}+\sum_{m=0}^{+\infty}\int_0^{t_0}\rmd t_{2m+1}(s-t_{2m+1}) \lambda^{4m+2}\, \frac{t_{2m+1}^{4m+1}}{(4m+1)!}\left(1-\rho_{ii}\right)
\\ {}+\sum_{m=1}^{+\infty}\int_0^{t_0}\rmd t_{2m} \left(s-t_{2m}\right)\lambda^{4m}\, \frac{t_{2m}^{4m-1}}{(4m-1)!}\rho_{ii}\Big\} 
\\ {}= 
\int_{t_0}^{t} \rmd s \lambda \rme^{-\lambda s}\Big\{ \sum_{m=0}^{+\infty}\left(\lambda(s-t_0)\frac{(\lambda t_{0})^{4m+2}}{(4m+2)!} + \frac{(\lambda t_{0})^{4m+3}}{(4m+3)!}\right)  \left(1-\rho_{ii}\right)
\\ {}+\sum_{m=0}^{+\infty} \left(\lambda(s-t_0)\frac{(\lambda t_{0})^{4m}}{(4m)!} + \frac{(\lambda t_{0})^{4m+1}}{(4m+1)!}\right) \rho_{ii}\Big\} 
\\ {}=\rme^{-\lambda t_0} \sum_{m=0}^{+\infty}\biggl\{\biggl(\left(1-\rme^{-\lambda(t-t_0)}-\lambda(t-t_0) \rme^{-\lambda(t-t_0)}\right)\frac{(\lambda t_{0})^{4m+2}}{(4m+2)!} + \left(1-\rme^{-\lambda(t-t_0)}\right)\frac{(\lambda t_{0})^{4m+3}}{(4m+3)!}\biggr)  
\\ {}\times \left(1-\rho_{ii}\right) +\biggl(\left(1-\rme^{-\lambda(t-t_0)}-\lambda(t-t_0) \rme^{-\lambda(t-t_0)}\right)\frac{(\lambda t_{0})^{4m}}{(4m)!} + \left(1-\rme^{-\lambda(t-t_0)}\right)\frac{(\lambda t_{0})^{4m+1}}{(4m+1)!}\biggr)\rho_{ii}\biggr\},
\end{multline*}
\begin{multline*}
\Pbb_i(t_0,t|\rho_0)=\frac {\rme^{-\lambda t_0}} 2\biggl\{\biggl(\left(1-\rme^{-\lambda(t-t_0)}-\lambda(t-t_0) \rme^{-\lambda(t-t_0)}\right)\left(\cosh \lambda t_0 -\cos \lambda t_0 \right) 
\\ {}+ \left(1-\rme^{-\lambda(t-t_0)}\right) \left(\sinh \lambda t_0 -\sin \lambda t_0 \right)\biggr)  \left(1-\rho_{ii}\right)
  +\biggl(\left(1-\rme^{-\lambda(t-t_0)}-\lambda(t-t_0) \rme^{-\lambda(t-t_0)}\right)\\ {}\times \left(\cosh \lambda t_0 +\cos \lambda t_0 \right) + \left(1-\rme^{-\lambda(t-t_0)}\right) \left(\sinh \lambda t_0 +\sin \lambda t_0 \right) \biggr)\rho_{ii}\biggr\} 
\\ {}=\frac {\rme^{-\lambda t_0}} 2\biggl\{\left(1-\rme^{-\lambda(t-t_0)}\right) \left(\rme^{ \lambda t_0}+\left(\cos \lambda t_0+\sin \lambda t_0 \right) \left(2\rho_{ii}-1\right)\right)
\\ {}  -\lambda(t-t_0) \rme^{-\lambda(t-t_0)} \left(\cosh \lambda t_0 +\cos \lambda t_0 \left(2\rho_{ii}-1\right)\right) \biggr\} ;
\end{multline*}
this gives \eqref{PiErl}. By applying \eqref{Ngeq1}, we get \eqref{N(t)Erl}. Finally, eqs.\ \eqref{2PiErl}--\eqref{limDtraceErl} follow by direct computations.

\section{Computations for Section \ref{sec:nonEP}}\label{sec:computC4}
\subsection{Proof of Remark \ref{rem:4:1}}\label{proof:4:1}
By expanding the exponential we have immediately the result \eqref{evolgpm}:
\[
\rme^{-\rmi Kt}=\sum_{n=0}^{+\infty}\frac{(-\rmi Kt)^{2n}}{(2n)!}+ \sum_{n=0}^{+\infty}\frac{(-\rmi Kt)^{2n+1}}{(2n+1)!} =\sum_{n=0}^{+\infty}\frac{(-\varkappa t^2)^n}{(2n)!}\,\openone+ \sum_{n=0}^{+\infty}\frac{(-\varkappa t^2)^{n}}{(2n+1)!}\left(-\rmi K t\right).
\]
By using the fact that the modulus of a sum is less than the sum of the moduli, we get easily the convergence result:
\[
\abs{\sum_{n=0}^{+\infty}\frac{(-\xi)^n}{(2n+1)!} }\leq \sum_{n=0}^{+\infty}\abs{\frac{(-\xi)^n}{(2n+1)!}} \leq \sum_{n=0}^{+\infty}\frac{\abs\xi^n}{(2n)!} = \cosh\big(\sqrt{\abs \xi}\big) .
\]

By using $-\varkappa t^2= (-\rmi zt)^2$, we get 
\[
g_+(\varkappa t^2)=\sum_{n=0}^{+\infty}\frac{(-\varkappa t^2)^n}{(2n)!} =\sum_{n=0}^{+\infty}\frac{(-\rmi zt)^{2n}}{(2n)!} =\frac 12 \left(\rme^{\rmi zt}+\rme^{-\rmi zt}\right),
\]
\[
g_-(\varkappa t^2)=  \sum_{n=0}^{+\infty}\frac{(-\varkappa t^2)^{n}}{(2n+1)!} =\frac 1 {\rmi zt} \sum_{n=0}^{+\infty}\frac{(\rmi zt)^{2n+1}}{(2n+1)!} =\frac 1{2\rmi zt} \left(\rme^{\rmi zt}-\rme^{-\rmi zt}\right).
\]
This proves \eqref{gpmz}.

\subsection{Proof of Proposition \ref{prop:eigenvec}}\label{sec:proof2}
It is easy to check that both the expressions in \eqref{upm} satisfy \eqref{eigeneq}; we have to show that the two expressions are equal. By using $\alpha^2-z^2=-\beta\delta$, we get
\[
\frac\Phi{\sqrt{\abs{\alpha\mp z}^2+\abs \beta^2}}=\frac {\left(\overline \alpha \mp\overline z\right)\delta} {\abs{\alpha\mp z}\sqrt{\abs\delta^2\abs{\alpha\mp z}^2+\abs{\alpha\pm z}^2\abs{\alpha\mp z}^2}}=\frac {\left(\overline \alpha \mp\overline z\right)\delta} {\abs{\alpha\mp z}^2\sqrt{\abs\delta^2+\abs{\alpha\pm z}^2}},
\]
\[
\frac{ \delta\left(\overline \alpha \mp \overline z \right)}{\abs{\alpha\mp z}^2}(-\beta)= \frac{\left(\alpha\pm z\right)\left(\alpha\mp z\right)\left(\overline \alpha \mp \overline z \right)}{\abs{\alpha\mp z}^2}=\alpha\mp z,
\qquad
\frac{ \delta\left(\overline \alpha \mp \overline z \right)}{\abs{\alpha\mp z}^2}\left(\alpha\mp z\right)=\delta;
\]
this ends the proof of the equality of the two expressions. In the case $\beta\delta=0$ we can check directly that the two proposed expression satisfy \eqref{eigeneq}.

By using the definition given in \eqref{Vpmprop}, we get immediately the second equivalence in \eqref{orthogonal}. The last condition, together with the other properties in \eqref{Vpmprop}, says that the operators $V_\pm$ are orthogonal projections, which project on orthogonal spaces; this implies $\langle u_+|u_-\rangle =0 $. So, we have to prove that the first condition is sufficient to get the other ones. In the case $\beta\delta\neq 0$, we take $u_\pm$ from the first expression in \eqref{upm} and $\langle u_+|u_-\rangle =0$ gives
\[
\left(\overline \alpha +\overline z \right)\left( \alpha - z\right)+\abs \delta^2=0 \qquad \Leftrightarrow \qquad \begin{cases} \overline z\, \alpha= z\overline \alpha,\\ \abs\alpha^2 +\delta^2=\abs\varkappa;\end{cases}
\]
we have used $z^2=\varkappa$ from the definition \eqref{defz}. Now we set $a:=\overline z\, \alpha$, \ $b:=\overline z\, \beta$, \ $d:=\overline z\, \delta$; we have $a\in\Rbb$ and
\[
\varkappa=\frac{a^2 + bd}{{\overline z }^2} =\frac{a^2 + bd}{\overline \varkappa }, \qquad \abs{\varkappa}^2= a^2+bd, \qquad \abs\varkappa= \abs\alpha^2 +\delta^2= \frac{a^2 +\abs d^2}{\abs{\overline z}^2}=\frac{a^2 +\abs d^2}{\abs \varkappa}.
\]
Finally we have $a^2+bd=a^2 +\abs d^2$, which gives $\overline d =b$. This gives 
\[
K=\frac 1 {\overline z}\begin{pmatrix}a & b \\ \overline b & -a\end{pmatrix};
\]
being $a\in\Rbb$ we get that $K/z$ is Hermitian and this ends the proof.


\begin{thebibliography}{999}
\bibitem{Bar86} A.~Barchielli, \textsl{Measurement
    theory and stochastic differential equations in quantum mechanics}, \href{http://pra.aps.org/abstract/PRA/v34/i3/p1642_1}{Phys.\ Rev.\ A {\bf 34} (1986) 1642--1649}.
\bibitem{Bel88} V.P. Belavkin, \textsl{Nondemolition measurements, nonlinear filtering and dynamic programming of quantum stochastic processes}. In A. Blaqui\`ere (ed.), \textit{Modelling and Control of Systems}, \href{https://link.springer.com/chapter/10.1007/BFb0041197#citeas}{Lecture Notes in Control and Information Sciences, vol.\ 121 (Springer, Berlin, 1988) pp.\ 245--265}.

\bibitem{Bel89} V.P. Belavkin, \textsl{A new wave equation for a continuous nondemolition measurement}, \href{https://doi.org/10.1016/0375-9601(89)90066-2}{Phys. Lett. A \textbf{140} (1989) 355-358}.

\bibitem{BarB91} A. Barchielli, V.P. Belavkin,    \textsl{Measurements    continuous in time and   a posteriori states in quantum mechanics},    \href{http://iopscience.iop.org/0305-4470/24/7/022}{J.\ Phys.\ A: Math.\ Gen. {\bf24} (1991) 1495--1514.}

\bibitem{Bar93} A. Barchielli,    \textsl{On the    quantum theory of measurements continuous in time}, \href{http://dx.doi.org/10.1016/0034-4877(93)90037-F}{Rep.\ Math.\    Phys.\ {\bf 33} (1993) 21--34.}
\bibitem{Car93} H. Carmichael, \textit{An Open System Approach to Quantum Optics}, Lect. Notes in Physics \textbf{m 81} (Springer, Berlin, 1993).
\bibitem{BarP96} A.~Barchielli, A.~M.~Paganoni, \textsl{Detection    theory in quantum optics: Stochastic representation}, \href{http://dx.doi.org/10.1088/1355-5111/8/1/011}{Quantum  Semiclass.\ Opt.\ {\bf8} (1996) 133--156}. 
\bibitem{ZolG97}  P. Zoller and C.W. Gardiner, \textsl{Quantum noise in quantum optics: the stochastic Schr\"odinger equation}. In S. Reynaud, E. Giacobino \& J. Zinn-Justin eds., \textit{Fluctuations quantiques, (Les Houches 1995)} (North-Holland, Amsterdam, 1997) pp. 79--136.
\bibitem{Hol01} A.S. Holevo, \textit{Statistical Structure of Quantum Theory}, Lecture Notes in Physics m 67 (Springer, Berlin, 2001).

\bibitem{KumM03}  B. K\"ummerer and H. Maassen, \textsl{An ergodic theorem for quantum counting processes}, \href{https://doi.org/10.1088/0305-4470/36/8/312}{J. Phys. A 36 (2003) 2155.}

\bibitem{Bar06} A. Barchielli, \textsl{Continual Measurements in Quantum Mechanics and  Quantum Stochastic Calculus}.    In S. Attal, A. Joye, C.-A. Pillet (eds.),   \textit{Open Quantum Systems III}, \href{http://www.springerlink.com/content/2623j57n1t7w0131/}{Lect.\ Notes Math.\ \textbf{1882}    (Springer, Berlin, 2006), pp. 207--291}.
\bibitem{Car08} H. Carmichael, \textit{Statistical Methods in Quantum Optics 2 --- Non-Classical Fields}, (Springer, Berlin 2008).
\bibitem{BarG09} A. Barchielli, M. Gregoratti,     \textit{Quantum Trajectories and Measurements in Continuous Time: The Diffusive Case}, \href{http://www.springer.com/physics/quantum+physics/book/978-3-642-01297-6}{Lect.\ Notes Phys.\ \textbf{782} (Springer, Berlin \& Heidelberg,  2009)}.

\bibitem{WisM10} H.M. Wiseman and G.J. Milburn, \textit{Quantum Measurement and  Control} (Cambridge University Press, Cambridge, 2010).

\bibitem{Jac14} K. Jacobs, \textit{Quantum Measurement Theory and its Applications} \href{https://doi.org/10.1017/CBO9781139179027}{(Cambridge University Press, 2014)}.
\bibitem{Caves+18} J.A. Gross, C.M. Caves, G.J. Milburn, J. Combes, \textsl{Qubit models of weak continuous measurements: Markovian
conditional and open-system dynamics}, \href{https://doi.org/10.1088/2058-9565/aaa39f}{Quantum Sci. Technol. \textbf{3} (2018) 024005}.
\bibitem{CavJ23} C.S. Jackson, C.M. Caves, \textsl{Simultaneous measurements of noncommuting observables:
Positive transformations and instrumental Lie groups}, \href{https://doi.org/10.3390/e25091254}{Entropy \textbf{25} (2023) 1254}. 

\bibitem{Maa24} H. Maassen,  \textsl{Continuous observation of quantum systems}, \href{https://doi.org/10.1142/S0219749924400112}{Int. J. Quantum Inf. 22 (2024) 2440011.} Also in \href{https://doi.org/10.1142/14142}{B.-G. Englert and R. Werner (eds.), \textsl{Communicating the Quantum Way --- Contributions in Honor of Alexander S Holevo's 80th Birthday} (World Scientific, 2025)}, \href{https://doi.org/10.1142/9789819806607_0012}{pp. 261--299.}
  
\bibitem{Potts+24}  G.T. Landi, M.J. Kewming, M.T. Mitchison, P.P. Potts, \textsl{Current fluctuations in open quantum systems: Bridging the gap between quantum continuous measurements and full counting statistics}, \href{https://doi.org/10.1103/PRXQuantum.5.020201}{PRX QUANTUM \textbf{5} (2024) 020201}.
\bibitem{Jak25}  C.S. Jackson, \textsl{Sequential quantum measurements and the instrumental group algebra}, \href{https://doi.org/10.3390/quantum7040057}{Quantum Rep. \textbf{7} (2025) 57}.

\bibitem{DGS00} L. Di\'osi, N. Gisin,  W.T. Strunz, \textsl{Quantum approach to coupling classical and quantum dynamics}, \href{https://journals.aps.org/pra/abstract/10.1103/PhysRevA.61.022108}{Phys. Rev. A \textbf{61} (2000) 022108.}

\bibitem{Dio14} L. Di\'osi, \textsl{Hybrid quantum-classical master equations}, \href{10.1088/0031-8949/2014/T163/014004}{Phys. Scr. \textbf{T163} (2014) 014004}.

\bibitem{Dio23} L. Di\'osi, \textsl{Hybrid completely positive Markovian quantum-classical dynamics},  \href{https://journals.aps.org/pra/abstract/10.1103/PhysRevA.107.062206}{Phys. Rev. A \textbf{107} (2023)  062206}; \textsl{Erratum}, \href{https://doi.org/10.1103/PhysRevA.108.059902}{Phys. Rev. A \textbf{108} (2023) 059902(E)}.

\bibitem{ManRT23} G. Manfredi, A. Rittaud, C. Tronci, \textsl{Hybrid quantum-classical dynamics of pure-dephasing systems}, \href{https://doi.org/10.1088/1751-8121/acc21e}{J. Phys. A: Math. Theor. \textbf{56} (2023) 154002}. 

\bibitem{Pomar+23} J.L. Alonso, C. Bouthelier-Madre, J. Clemente-Gallardo, D. Mart\'inez-Crespo, J. Pomar, \textsl{Effective nonlinear Ehrenfest hybrid quantum-classical dynamics}, \href{https://doi.org/10.1140/epjp/s13360-023-04266-w}{Eur. Phys. J. Plus \textbf{138} (2023) 649}.

\bibitem{Sergi+23} A. Sergi, D. Lamberto, A. Migliore, A. Messina, \textsl{Quantum–classical hybrid systems and Ehrenfest's Theorem}, \href{https://doi.org/10.3390/e25040602}{Entropy \textbf{25} (2023) 602}. 
\bibitem{Opp+23} J. Oppenheim, C. Sparaciari, B. Šoda, Z. Weller-Davies, \textsl{Objective trajectories in hybrid classical-quantum dynamics},
\href{https://doi.org/10.22331/q-2023-01-03-891}{Quantum \textbf{7} (2023)  891}. 

\bibitem{DamWer23} L. Dammeier, R.F. Werner, \textsl{Quantum-classical hybrid systems and their quasifree transformations}, \href{https://doi.org/10.22331/q-2023-07-26-1068}{Quantum \textbf{7} (2023) 1068}.

\bibitem{Bar23} A. Barchielli, \textsl{Markovian master equations for quantum-classical hybrid systems}, \href{https://doi.org/10.1016/j.physleta.2023.129230}{Phys. Lett. A \textbf{492} (2023) 129230}.

\bibitem{LOW24} I. Layton, J. Oppenheim, Z. Weller-Davies, \textsl{A healthier semi-classical dynamics}, 
\href{https://doi.org/10.22331/q-2024-12-16-1565}{Quantum 8 (2024) 1565.}

\bibitem{BarW24}  A. Barchielli, R. Werner, \textsl{Hybrid quantum-classical systems: Quasi-free Markovian dynamics}, \href{https://doi.org/10.1142/S0219749924400021}{Int. J. Quantum Inf. \textbf{22} (2024) 2440002}. Also in \href{https://doi.org/10.1142/14142}{B.-G. Englert and R. Werner (eds.), \textsl{Communicating the Quantum Way --- Contributions in Honor of Alexander S Holevo's 80th Birthday} (World Scientific, 2025)}, \href{https://doi.org/10.1142/9789819806607_0011}{pp. 211-259.}
\bibitem{Bar24} A. Barchielli, \textsl{Markovian dynamics for a quantum/classical system and quantum trajectories}, \href{https://doi.org/10.1088/1751-8121/ad5fd2}{J. Phys. A: Math. Theor. \textbf{57} (2024) 315301.}

\bibitem{TicNA13} F. Ticozzi, K. Nishio, C. Altafini, \textsl{Stabilization of stochastic quantum dynamics via
 open- and closed-loop control},  \href{https://doi.org/10.1109/TAC.2012.2206713}{IEEE Trans. on Automatic Control \textbf{58} (2013) 74--85}.
\bibitem{BarG12}  A. Barchielli, M. Gregoratti, \textsl{Quantum measurements in continuous time, non-Markovian evolutions and feedback}, \href{http://rsta.royalsocietypublishing.org/content/370/1979/5364.abstract}{Phil. Trans.   R. Soc. A \textbf{370} no. 1979 (2012) 5364--5385}.
\bibitem{Gough20} J.E. Gough, \textsl{A quantum Kalman filter-based PID controller}, \href{https://doi.org/10.1142/S1230161220500146}{Open Systems and Information Dynamics \textbf{27} (2020) 2050014.}
\bibitem{Till24} A. Tilloy, \textsl{General quantum-classical dynamics as measurement based feedback}, \href{https://scipost.org/SciPostPhys.17.3.083}{SciPost Phys. \textbf{17} (2024) 083}.
\bibitem{TicAlt12} C. Altafini, F. Ticozzi, \textsl{Modeling and control of quantum systems: An introduction},  \href{https://doi.org/10.1109/TAC.2012.2195830}{IEEE Trans. on Automatic Control \textbf{57} (2012) 1898--1917}.
\bibitem{Potts+22}
B. Annby-Andersson, F. Bakhshinezhad, D. Bhattacharyya, G. De Sousa, C. Jarzynski, P. Samuelsson, P.P. Potts, \textsl{Quantum Fokker-Planck Master Equation for Continuous Feedback Control}, \href{https://doi.org/10.1103/PhysRevLett.129.050401}{Phys. Rev. Lett. 129 (2022) 050401}. 

\bibitem{AlbGen23}  F. Albarelli, M.G. Genoni, \textsl{A pedagogical introduction to continuously monitored quantum systems
and measurement-based feedback}, \href{https://doi.org/10.1016/j.physleta.2023.129260}{Phys. Lett. A \textbf{494} (2024) 129260}.

\bibitem{Potts+25} A.J.B. Rosal, G. Fiusa, P.P. Potts, G.T. Landi, \textsl{Deterministic equations for feedback control of open quantum systems III:
Full counting statistics for jump-based feedback}, \href{https://doi.org/10.48550/arXiv.2512.11078}{arXiv:2512.11078 (2025)}.


\bibitem{Dav69} E.B. Davies, \textit{Quantum stochastic processes}, \href{https://link.springer.com/article/10.1007/BF01645529}{Commun. Math. Phys. \textbf{15} (1969) 277--304}.
\bibitem{Dav76} E.B. Davies, \textit{Quantum Theory of Open Systems} (Academic, London, 1976).
\bibitem{SriD81} M.D. Srinivas, E.B. Davies, \textsl{Photon counting probabilities in quantum optics}  \href{https://doi.org/10.1080/713820643}{Opt. Acta \textbf{28} (1981) 981-996}.
\bibitem{SriD82} M.D. Srinivas, E.B. Davies, \textsl{What are the photon counting probabilities for open systems --- A reply to Mandel's comments}, \href{https://doi.org/10.1080/713820838}{Opt. Acta \textbf{29} (1982)  235-238}.
\bibitem{Bar90} A.~Barchielli,    \textsl{Direct and
    heterodyne detection and other applications of quantum stochastic    calculus to quantum optics}, \href{http://iopscience.iop.org/0954-8998/2/6/002}{Quantum Opt.\ {\bf2} (1990) 423--441}.
\bibitem{Bar97} A. Barchielli,  \textsl{On
    the quantum theory of direct detection}. In O. Hirota, A.S.
    Holevo, C.M. Caves (eds.),    \href{http://link.springer.com/book/10.1007/978-1-4615-5923-8}
    {\textit{Quantum Communication, Computing, and Measurement}} (Plenum    Press, New York, 1997) \href{http://link.springer.com/chapter/10.1007/978-1-4615-5923-8_26}{pp.\ 243--252}.

\bibitem{KW20} D. Keys, J. Wehr, \textsl{Poisson stochastic master equation unravelings and the measurement problem: A quantum stochastic calculus perspective}, \href{https://doi.org/10.1063/1.5133974}{J. Math. Phys. \textbf{61} (2020) 032101}; \textsl{Erratum}, \href{https://doi.org/10.1063/5.0124270}{J. Math. Phys. \textbf{63} (2022) 109901.}
\bibitem{DM-G22} B. Donvil, P. Muratore-Ginanneschi, \textsl{Quantum trajectory framework for general time-local master equations}, \href{https://doi.org/10.1038/s41467-022-31533-8}{Nature Commun. \textbf{13} (2022) 4140}.
\bibitem{SLCSPVac24} F. Settimo, K. Luoma, D. Chruściński, B. Vacchini, A. Smirne, J. Piilo, \textsl{Generalized-rate-operator quantum jumps via realization-dependent transformations}, \href{10.1103/PhysRevA.109.062201}{Phys. Rev. A \textbf{109} (2024) 062201}.
\bibitem{CLPSm25} D. Chruściński, K. Luoma, J. Piilo, A. Smirne, \textsl{How to design quantum-jump trajectories via distinct master equation representations}, \href{https://doi.org/10.22331/q-2022-10-13-835}{Quantum \textbf{6} (2022) 835}.
\bibitem{MCD93} K. Mølmer, Y. Castin, J. Dalibard,  \textsl{A Monte Carlo wave function method in quantum optics}, \href{https://doi.org/10.1364/JOSAB.10.000524}{J. Opt. Soc. Am. B \textbf{10} (1993) 524–538}.

\bibitem{RadLB24}  M. Radaelli, G.T. Landi, F.C. Binder, \textsl{Gillespie algorithm for quantum jump trajectories}, \href{https://doi.org/10.1103/PhysRevA.110.062212}{Phys. Rev. A \textbf{110} (2024) 062212}.
\bibitem{DaVJ08} D.J. Daley, D. Vere-Jones, \textit{An Introduction to the Theory of Point Processes --- Volume II: General Theory and Structure}, (Springer, 2008)
\bibitem{LMMR25}  S. Lloyd, L. Maccone, L. Martellini, S. Roncallo, \textsl{Quantum stroboscopy for time measurements}, \href{https://doi.org/10.1103/58vf-m1yx}{Phys. Rev. Lett. (2026)}; \href{https://doi.org/10.48550/arXiv.2507.17740}{arXiv:2507.17740 (2025)}.

\bibitem{RGMor25}  S. Roy, S. Gupta,  G. Morigi, \textsl{Causality, localisation, and universality
 of monitored quantum walks with long-range hopping},  \href{https://doi.org/10.1103/rbtb-8d27}{Phys. Rev. E \textbf{112} (2025) 044146}.

\bibitem{HBZT25} T. Heine,  E. Barkai, K. Ziegler, S. Tornow, \textsl{Quantum walks: First hitting times with weak measurements}, \href{https://doi.org/10.48550/arXiv.2506.21168}{arXiv:2506.21168 [quant-ph]}.
\bibitem{SerZ13} A. Sergi, K.G. Zloshchastiev, \textsl{Non-Hermitian quantum dynamics of a two-level system and models of dissipative environments}, \href{https://doi.org/10.1142/S0217979213501634}{Int. J.  Mod. Phys. B \textbf{27} (2013) 1350163}.
\bibitem{Gao+15} T. Gao, E. Estrecho, K.Y. Bliokh, T.C.H. Liew, M.D. Fraser, S. Brodbeck, M. Kamp, C. Schneider, S. H\"ofling,
Y. Yamamoto, F. Nori, Y.S. Kivshar, A.G. Truscott, R.G. Dall,  E.A. Ostrovskaya, \textsl{Observation of non-Hermitian degeneracies
in a chaotic exciton-polariton billiard}, \href{https://doi.org/10.1038/nature15522}{Nature \textbf{526} (2015) 554–558.}

\bibitem{SCr15} S. Croke, \textsl{$\Pcal\Tcal$-symmetric Hamiltonians and their application in quantum information}, \href{https://doi.org/10.1103/PhysRevA.91.052113}{Phys. Rev. A \textbf{91} (2015) 052113.}

\bibitem{MMCN19} F. Minganti, A. Miranowicz, R.W. Chhajlany, F. Nori, \textsl{Quantum exceptional points of non-Hermitian Hamiltonians and Liouvillians: The effects of quantum jumps}, \href{https://doi.org/10.1103/PhysRevA.100.062131}{Phys. Rev. A \textbf{100} (2019) 062131.}
    
\bibitem{NAJM19} M. Naghiloo, M. Abbasi1, Y.N. Joglekar, K.W. Murch, \textsl{Quantum state tomography across the 
exceptional point in a single dissipative qubit}, \href{https://doi.org/10.1038/s41567-019-0652-z}{Nature Phys. \textbf{15} (2019) 1232–1236.}

\bibitem{AGU20} Y. Ashida, Z. Gong, M. Ueda, \textsl{Non-Hermitian physics}, \href{https://doi.org/10.1080/00018732.2021.1876991}{Advances in Physics \textbf{69} (2020) 249–435.}
\bibitem{DBernardinD21} V. Dubey, C. Bernardin, A. Dhar, \textsl{Quantum dynamics under continuous projective measurements:
Non-Hermitian description and the continuum-space limit}, \href{https://doi.org/10.1103/PhysRevA.103.032221}{Phys. Rev. A \textbf{103} (2021) 032221}.

\bibitem{NonS25} R. Nongthombam, A.K. Sarma, \textsl{Homodyne measurement of a non-Hermitian qubit undergoing fluorescence}, \href{https://doi.org/10.48550/arXiv.2510.13345}{arXiv:2510.13345 [quant-ph]  (2025)}

\bibitem{NonVs25} R. Nongthombam, A. Verma, A.K. Sarma, \textsl{Role of inefficient measurement in realizing post-selection-based non-Hermitian qubits}, \href{https://doi.org/10.48550/arXiv.2510.21159}{arXiv:2510.21159 (2025).}
\bibitem{Longhi25} S. Longhi, \textsl{Phase transitions and virtual exceptional points in quantum emitters coupled to dissipative baths}, \href{https://doi.org/10.1063/5.0299681}{J. Appl. Phys. \textbf{138} (2025) 184401}.
\bibitem{Ho+26} K. Ho, S. Perna, S. Wittrock, Nhat-Tan Phan, S. Tsunegi, H. Kubota, S. Yuasa, P. Bortolotti, M. d’Aquino,  C. Serpico, V. Cros, R. Lebrun, \textsl{Encirclement of an exceptional point and eigenvalue switch in non-Hermitian coupled spintronic nano-oscillators}, \href{https://doi.org/10.1016/j.newton.2025.100333}{Newton \textbf{2} (2026) 100333.}
    
\bibitem{DMD26} Duttatreya, Ipsika Mohanty, Sanjib Dey, \textsl{Improved coherence time of a non-Hermitian qubit in a PT-symmetric environment}, \href{https://doi.org/10.1016/j.aop.2025.170298}{Ann. Phys. \textbf{484} (2026) 170298}.

\bibitem{GoHa06}  I. Goychuk, P. H\"anggi, \textsl{Quantum two-state dynamics driven by stationary non-Markovian discrete noise:
 Exact results}, \href{https://doi.org/10.1016/j.chemphys.2005.11.026}{Chemical Physics \textbf{324} (2006)  160-171.}
\bibitem{ScLu07} H. Schomerus, E. Lutz, \textsl{ Nonexponential decoherence and momentum subdiffusion in a quantum Lévy kicked rotator},  \href{https://doi.org/10.1103/PhysRevLett.98.260401}{PRL \textbf{98} (2007) 260401}.

\bibitem{DDG22} D. Das, S. Dattagupta,  S. Gupta, \textsl{Quantum unitary evolution interspersed with repeated non-unitary interactions at random
 times: the method of stochastic Liouville equation, and two examples of interactions in the context of a tight-binding chain}, \href{https://doi.org/10.1088/1742-5468/ac6256}{J. Stat. Mech. (2022) 053101}.

\bibitem{NaGu23} A. Nagar, S. Gupta, \textsl{Stochastic resetting in interacting particle systems: a review}, \href{https://doi.org/10.1088/1751-8121/acda6c}{J. Phys. A: Math. Theor. 56 (2023) 283001}.
\bibitem{Bud04} A.A. Budini, \textsl{Stochastic representation of a class of non-Markovian completely positive evolutions}, \href{https://doi.org/10.1103/PhysRevA.69.042107}{Phys. Rev. A \textbf{69} (2004) 042107}.
\bibitem{Vacc+11} B. Vacchini, A. Smirne, E.-M. Laine, J. Piilo, H.-P. Breuer, \textsl{Markovianity and non-Markovianity in quantum and
classical systems}, \href{https://doi.org/10.1088/1367-2630/13/9/093004}{New J. Phys. \textbf{13} (2011) 093004}.
\bibitem{Vacc13} B. Vacchini, \textsl{Non-Markovian master equations from piecewise dynamics}, \href{https://doi.org/10.1103/PhysRevA.87.030101}{Phys. Rev. A \textbf{87} (2013) 030101(R)}.

\bibitem{Vacc20} B. Vacchini, \textsl{Quantum renewal processes}, \href{https://doi.org/10.1038/s41598-020-62260-z}{Scientific Reports \textbf{10} (2020) 5592}.
\bibitem{Vacc+21} N. Megier, M. Ponzi, A. Smirne, B. Vacchini, \textsl{Memory effects in quantum dynamics modelled by quantum
renewal processes}, \href{https://doi.org/10.3390/e23070905}{Entropy \textbf{23} (2021) 905}. 

\bibitem{Bud06} A.A. Budini, \textsl{Lindblad rate equations}, \href{ https://doi.org/10.1103/PhysRevA.74.053815}{Phys. Rev. A \textbf{74} (2006)  053815}.
\bibitem{BrGM06} H.-P. Breuer, J. Gemmer, M. Michel, \textsl{Non-Markovian quantum dynamics: Correlated projection superoperators and Hilbert space averaging}, \href{https://doi.org/10.1103/PhysRevE.73.016139}{Phys. Rev. E \textbf{73} (2006) 016139}.
\bibitem{Breuer07} H.-P. Breuer, \textsl{Non-Markovian generalization of the Lindblad theory of open quantum system}, \href{https://doi.org/10.1103/PhysRevA.75.022103}{Phys. Rev. A \textbf{75} (2007) 022103}.

\bibitem{Pell14}  C. Pellegrini,  \textsl{Continuous time open quantum random walks and non-Markovian Lindblad
 master equations}, \href{https://doi.org/10.1007/s10955-013-0910-x}{Journal of Statistical Physics \textbf{154} (2014) 838-865}.
\bibitem{CLRB17} Chaobin Liu, Radhakrishnan Balu, \textsl{Steady states of continuous-time open quantum walks}, \href{https://doi.org/10.1007/s11128-017-1625-8}{Quantum Inf. Process 16 (2017) 173}.
\bibitem{Bri18}  H. Bringuier, \textsl{Central limit theorem and large deviation principle for continuous time open quantum walks}, \href{https://doi.org/10.1007/s00023-017-0597-7}{Annales Henri Poincaré 18(10) (2018) 3167-3192}.
\bibitem{BBPP19} I. Bardet, H. Bringuier, Y. Pautrat, C. Pellegrini, \textsl{Recurrence and transience of continuous-time open quantum walks}. In: Donati-Martin, C., Lejay, A., Rouault, A. (eds) Séminaire de Probabilités L. Lecture Notes in Mathematics, vol 2252 (2019). Springer, Cham. 
\bibitem{Kang19} Y.B. Kang, \textsl{Quantum Markov semigroups for continuous-time open quantum random walk}, \href{https://doi.org/10.1007/s11128-019-2294-6}{Quantum Information Processing (2019) 18:196}.

\bibitem{Loeb23} N.  Loebens, \textsl{Site recurrence for continuous-time open quantum walks on the line}, \href{https://doi.org/10.26421/QIC23.7-8-3}
{Quantum Information and Computation \textbf{23} (2023) 0577-0602}.
\bibitem{Loe24a} N. Loebens, \textsl{Continuous-time open quantum walks in one dimension: matrix-valued orthogonal polynomials and Lindblad
 generators}, \href{https://doi.org/10.1007/s11128-024-04303-2}{Quantum Information Processing (2024) 23:96}.

\bibitem{Loe24b} N. Loebens, \textsl{Open quantum jump chain for a class of continuous-time open quantum walks}, \href{https://doi.org/10.1007/s40509-024-00331-w}{Quantum Stud.: Math. Found. (2024) 11:459–476}. 
   
\bibitem{BarH95} A. Barchielli, A.S. Holevo, \textsl{Constructing
    quantum measurement processes via classical stochastic calculus}, \href{http://dx.doi.org/10.1016/0304-4149(95)00011-U}{Stoch. Proc. Appl. {\bf 58} (1995) 293--317}.
\bibitem{BarPZ98} A. Barchielli, A.M. Paganoni, F. Zucca,   \textsl{On stochastic differential equations and semigroups of probability
    operators in quantum probability},  \href{http://dx.doi.org/10.1016/S0304-4149(97)00093-8}{Stoch. Proc.\ Appl.\ {\bf73} (1998)    69--86.}
\bibitem{Met82} M. M\'etivier, \textit{Semimartingales, a Course on Stochastic Processes} (W. de Gruyter, Berlin, 1982).
\bibitem{Prott04} P.E. Protter, \textit{Stochastic Integration and Differential Equations}, II Edition (Springer, Berlin, 2004)

\bibitem{IWat89} N. Ikeda, S. Watanabe, \textit{Stochastic Differential Equations and Diffusion Processes}, Second Edition (North-Holland Publishing Company, Amsterdam, 1989).

\bibitem{LipS86}
R.Sh. Liptser, A.N. Shiryayev, \textit{Theory of Martingales}, (Kluwer Academic Publishers, Dordrecht, 1986).
\bibitem{GKS76} V. Gorini, A. Kossakowski, E.C.G. Sudarshan, \textsl{Completely positive dynamical semigroups of N-level systems}, \href{https://doi.org/10.1063/1.522979}{J. Math. Phys. \textbf{17} (1976) 821--825}. 
\bibitem{L76} G. Lindblad, \textsl{On the Generators of Quantum Dynamical Semigroups}, \href{https://link.springer.com/article/10.1007/BF01608499}{Commun. Math. Phys. \textbf{48} (1976) 119--130}. 
\bibitem{BarPP12} A.\ Barchielli, C.\ Pellegrini, F.\ Petruccione, \textsl{Quantum trajectories: Memory and continuous observation},
    \href{http://link.aps.org/doi/10.1103/PhysRevA.86.063814}{Phys. Rev. A \textbf{86}, 063814 (2012)}.
\bibitem{AttP06} S. Attal, Y. Pautrat, \textsl{From repeated to continuous quantum interactions}, \href{https://doi.org/10.1007/s00023-005-0242-8}{Ann.  Henri Poincar\'e \textbf{7} (2006) 59--104}.
\bibitem{CLGP22} F. Ciccarello, S. Lorenzo, V. Giovannetti, G.M. Palma, \textsl{Quantum collision models: open system dynamics from repeated interactions}, \href{https://doi.org/10.1016/j.physrep.2022.01.001}{Phys. Rep. \textbf{954} (2022)}.

\bibitem{Hol86} A.S. Holevo,  \textsl{Conditionally positive definite functions in quantum probability}, Proc. Int. Cong. Math. 1011-1020 (1986) Berkeley.
\bibitem{BarHL93} A. Barchielli, A.S. Holevo, G. Lupieri,  \textsl{An  analogue of Hunt's representation theorem in quantum probability}, \href{https://link.springer.com/article/10.1007/BF01047573} {J.\  Theor.\ Probab.\ {\bf 6} (1993) 231--265}.     

\bibitem{P+23} C. Bernardin, R. Chetrite, R. Chhaibi,  J. Najnude, C. Pellegrini, \textsl{Spiking and collapsing in large noise limits of SDEs}, \href{https://doi.org/10.1214/22-AAP1819}{Ann. Appl. Probab. \textbf{33} (2023) 417-446}.

\bibitem{BP+21} T. Benoist, C. Bernardin, R. Chetrite, R. Chhaibi,  J. Najnude, C. Pellegrini, \textsl{Emergence of jumps in quantum trajectories via homogenization}, \href{https://doi.org/10.1007/s00220-021-04179-8}{Commun. Math. Phys. \textbf{387} (2021) 1821–1867}.

\bibitem{SBDKC25}  A. Sherry, C. Bernardin, A. Dhar, A. Kundu, R. Chetrite, \textsl{Spikes in Poissonian quantum trajectories}, \href{https://doi.org/10.1103/PhysRevA.111.042215}{Phys. Rev. A 111 (2025) 042215}.

\bibitem{Fazio26} L. Lumia, E. Tirrito, M. Collura, F.H.L. Essler, R. Fazio, \textsl{Complexity of quantum trajectories}, \href{https://doi.org/10.48550/arXiv.2602.00232}{arXiv:2602.00232} (2026).

\bibitem{Nori26} P. Menczel, C. Flindt, F. Brange, F. Nori, C. Gneiting, \textsl{Full counting statistics and first-passage times in quantum markovian
processes: Ensemble relations, metastability, and fluctuation theorems}, \href{http://dx.doi.org/10.1103/4hc1-l8w4}{PRX Quantum \textbf{7} (2026) 010304.}
\bibitem{Vacc14} B. Vacchini, \textsl{General structure of quantum collisional models}, \href{https://doi.org/10.1142/S0219749914610115}{Int. J. Quantum Inf.
\textbf{12} (2014) 1461011}.

\bibitem{Vacc16} B. Vacchini, \textsl{Generalized master equations leading to completely positive dynamics}, \href{https://doi.org/10.1103/PhysRevLett.117.230401}{PRL \textbf{117} (2016) 230401}.

\bibitem{MSV20} N. Megier, A. Smirne, B. Vacchini, \textsl{Evolution equations for quantum semi-Markov dynamics}, \href{https://doi.org/10.3390/e22070796}{Entropy \textbf{22} (2020) 796}.
\bibitem{Bud13a} A.A. Budini, \textsl{Embedding non-Markovian quantum collisional models into bipartite Markovian dynamics}, \href{https://doi.org/10.1103/PhysRevA.88.032115}{Phys. Rev A \textbf{88} (2013) 032115}.
\bibitem{Bud13b} A.A. Budini, \textsl{Non-Markovian quantum jumps from measurements in bipartite Markovian dynamics}, \href{https://doi.org/10.1103/PhysRevA.88.012124}{Phys. Rev A \textbf{88} (2013) 012124}.
    
\bibitem{ClaB71} M.J. Clauser, M. Blume, \textsl{Stochastic theory of line shape: Off-diagonal effects in fine and hyperfine structures}, \href{https://doi.org/10.1103/PhysRevB.3.583}{Phys. Rev. B \textbf{3} (1971) 583--591}.
 
\bibitem{MSM18} B. Mukherjee, K. Sengupta, S.N. Majumdar, \textsl{Quantum dynamics with stochastic reset}, \href{https://doi.org/10.1103/PhysRevB.98.104309}{Phys. Rev. B \textbf{98} (2018) 104309}.
\bibitem{DDG22b} S. Dattagupta,  D. Das, S. Gupta, \textsl{Stochastic resets in the context of a tight-binding chain driven by an oscillating field}, \href{https://doi.org/10.1088/1742-5468/ac98c0}{J. Stat. Mech. (2022) 103210}.

\bibitem{VKB08}  M. Varbanov, H. Krovi, T.A. Brun, \textsl{Hitting time for the continuous quantum walk}, \href{https://doi.org/10.1103/PhysRevA.78.022324}{Phys. Rev. A \textbf{78} (2008) 022324}.

\bibitem{BreuerLP09} H.-P. Breuer, E.-M. Laine, J. Piilo, \textsl{Measure for the degree of non-Markovian behavior of quantum processes in open systems}, \href{https://doi.org/10.1103/PhysRevLett.103.210401}{PRL \textbf{103} (2009) 210401}.

\bibitem{APS12}  S. Attal, F. Petruccione, I. Sinayskiy, \textsl{Open quantum walks on graphs}, \href{https://doi.org/10.1016/j.physleta.2012.03.040}{Phys. Lett. A \textbf{376} (2012) 1545-1548}.

\bibitem{CarP15} R. Carbone, Y. Pautrat, \textsl{Homogeneous open quantum random walks on a lattice}, \href{https://doi.org/10.1007/s10955-015-1261-6}{J. Stat. Phys. \textbf{160} (2015) 1125-1153}.

\bibitem{CGMh22} R. Carbone, F. Girotti, A. Melchor Hernandez, \textsl{On a Generalized Central Limit theorem and large deviations for homogeneous open quantum walks}, \href{https://doi.org/10.1007/s10955-022-02938-y}{Journal of Statistical Physics (2022) 188:8}.

\bibitem{BarP10} A.~Barchielli, C.~Pellegrini,    \textsl{Jump-diffusion    unravelling of a non Markovian generalized Lindblad master equation},
    \href{http://link.aip.org/link/JMAPAQ/v51/i11/p112104/s1}{J.\ Math.\ Phys.\ \textbf{51} (2010) 112104}.

\bibitem{SSP17} V. Semin, I. Semina, F. Petruccione, \textsl{Stochastic wave-function unravelling of the generalized Lindblad equation}, \href{https://doi.org/10.1103/PhysRevE.96.063313}{Phys. Rev. E \textbf{96} (2017) 063313}.
\bibitem{Bud26} A.A. Budini, \textsl{Hybrid quantum-classical dynamics with stationary thermal states}, \href{https://doi.org/10.48550/arXiv.2604.02484}{arXiv:2604.02484 [quant-ph]}.

\bibitem{IglL24}  M.D. de la Iglesia, and C.F. Lardizabal, \textsl{One-dimensional continuous-time quantum Markov chains: qubit probabilities and measures}, \href{https://doi.org/10.1088/1751-8121/ad5bcb}{J. Phys. A: Math. Theor. \textbf{57} (2024)  295301}.

\end{thebibliography}
\end{document}